\documentclass[12pt]{article}

\usepackage{palatino}
\usepackage{graphicx}
\usepackage{amsmath}
\usepackage{amsfonts}
\usepackage{color}
\usepackage{mathrsfs}
\usepackage{amssymb}
\usepackage{amsthm,xspace}
\usepackage{physics}
\usepackage[linesnumbered]{algorithm2e}
\usepackage[dvipsnames]{xcolor}
\usepackage{hyperref}
\hypersetup {bookmarks=true,
  bookmarksnumbered,
  bookmarksdepth=2,
  pdfstartview=FitH,
  colorlinks,
  linkcolor=blue,
  filecolor=blue,
  citecolor=blue,
  urlcolor=blue
}

\usepackage{pgfplots}
\usepgfplotslibrary{fillbetween}
\usepackage[margin=1in]{geometry}

\renewcommand{\P}{\Pr}

\DeclareMathOperator{\negl}{negl}

\DeclareMathOperator{\poly}{poly}
\DeclareMathOperator{\Supp}{Supp}
\newcommand{\fail}{\perp}
\newcommand{\zo}{\{ 0, 1 \}}
\newcommand{\ec}{\varepsilon_\mathsf{c}}
\newcommand{\es}{\varepsilon_\mathsf{s}}
\newcommand{\ezk}{\varepsilon_{\mathsf{zk}}}

\newcommand{\auxAlph}{\Sigma}
\newcommand{\paramSet}{\Gamma} % probably should choose a better symbol for this?  idk which tho

\newcommand{\Samp}{\mathsf{Samp}\xspace}
\newcommand{\NIZK}{\textsf{NIZK}\xspace}
\newcommand{\NIZKs}{\textsf{NIZKs}\xspace}
\newcommand{\CRS}{\textsf{CRS}\xspace}

\newcommand{\Lang}{\mathcal{L}}
\newcommand{\Rel}{\mathcal{R}}
\newcommand{\Gen}{\mathsf{Gen}}
\newcommand{\Pro}{\mathsf{P}}
\newcommand{\Ver}{\mathsf{V}}
\newcommand{\prf}{\pi}
\newcommand{\transcript}{\tau}
\newcommand{\crs}{\mathsf{crs}}
\newcommand{\Sim}{\mathsf{Sim}}
\newcommand{\Dist}{\mathcal{D}}
\newcommand{\Alg}{\mathcal{A}}
\newcommand{\Est}{\mathsf{Est}}
\newcommand{\cA}{\mathcal{A}}
\newcommand{\cD}{\mathcal{D}}
\newcommand{\Distinguisher}{\mathcal{D}} % should probably change this to not be confused with distributions
\newcommand{\cB}{\mathcal{B}}
\newcommand{\Hyb}{\mathrm{Hyb}}

\newcommand{\good}{\textsf{Good}\xspace}
\newcommand{\bad}{\textsf{Bad}\xspace}
\newcommand{\ZK}{\textsf{ZK}\xspace}
\newcommand{\OWFs}{\textsf{OWFs}\xspace}

\newcommand{\BPP}{\mathsf{BPP}}
\newcommand{\ioBPP}{\mathsf{ioBPP}}
\newcommand{\ioBPPpoly}{\mathsf{ioBPP/poly}}
\newcommand{\ioPpoly}{\mathsf{ioP/poly}}

\newcommand{\aBPPpoly}{\mathsf{ioAvgBPP/poly}}

\newcommand{\NP}{\mathsf{NP}}
\newcommand{\classP}{\mathsf{P}} % \P is already in use ¯\_(ツ)_/¯
\newcommand{\Ppoly}{\mathsf{P/poly}}

\newcommand{\samp}{\overset{\$}{\gets}}

\newtheorem{theorem}{Theorem}
\newtheorem{claim}{Claim}
\newtheorem{corollary}{Corollary}
\newtheorem{lemma}{Lemma}
\newtheorem{definition}{Definition}

\newtheorem{remark}{Remark}

\newcommand{\authnote}[3]{}
\newcommand{\james}[1]{\authnote{J}{#1}{magenta} }

\def\anonymous{0}
\begin{document}

\RestyleAlgo{boxruled}

\pagenumbering{gobble}
\title{\Large{
%The Worst-Case Complexity of Erroneous Zero-Knowledge}}
Non-Trivial 
Zero-Knowledge Implies One-Way Functions}}

\ifnum\anonymous=1
\author{}
\else
{
\author{Suvradip Chakraborty\thanks{Visa Research. \texttt{suvchakr@visa.com}}
\and
James Hulett\thanks{UIUC. \texttt{jhulett2@illinois.edu}}
\and
Dakshita Khurana\thanks{UIUC and NTT Research. \texttt{dakshita@illinois.edu}}
\and
Kabir Tomer\thanks{UIUC.  \texttt{ktomer2@illinois.edu}}}
}\fi

\date{}

\maketitle

\begin{abstract}
%We study the cryptographic power of ``non-trivial'' zero-knowledge proofs with non-negligible error parameters. 

A recent breakthrough [Hirahara and Nanashima,  STOC'2024] established that if $\NP \not \subseteq \ioPpoly$, the existence of zero-knowledge (ZK) with negligible errors for $\NP$ implies the existence of one-way functions (OWFs). 
This work obtains a characterization of one-way functions from the worst-case complexity of zero-knowledge in the high-error regime.

%A recent result of Hirahara and Nanashima (STOC '24) showed that, assuming $P \neq NP$, the existence of zero-knowledge proofs for all $\NP$ languages implies the existence of one-way functions (OWFs). Building on this line of work, Chakraborty, Hulett, and Khurana (CRYPTO '25) established a similar implication in the non-interactive setting, showing that NIZKs with zero-knowledge error $\ezk$ and soundness error $\es$ satisfying $\ezk + \sqrt{\es} < 1$ imply OWFs. However, the regime where $\ezk + \es < 1$ but $\ezk + \sqrt{\es} \geq 1$ was left open. 

%$\NP \not\subseteq \ioPpoly$, 

Assuming $\NP \not \subseteq \ioPpoly$, we show that any {\em non-trivial, constant-round} public-coin ZK argument for $\NP$ implies the existence of OWFs, and therefore also (standard) four-message zero-knowledge arguments for $\NP$.
Here, we call a ZK argument {\em non-trivial} if the sum of its completeness, soundness and zero-knowledge errors is bounded away from $1$. 

As a special case, we also prove that non-trivial non-interactive ZK (NIZK) arguments for 
%a language $\Lang \not\in \mathsf{ioP}/\mathsf{poly}$ implies 
$\NP$ imply the existence of OWFs. Using known amplification techniques, this also provides an unconditional transformation from weak to standard NIZK proofs for all meaningful error parameters.
Prior work [Chakraborty, Hulett and Khurana, CRYPTO'2025] was limited to NIZKs with constant zero-knowledge error $\ezk$ and soundness error $\es$ 
satisfying $\ezk + \sqrt{\es} < 1$ .

\end{abstract}

\newpage

\thispagestyle{empty}
\tableofcontents
\newpage

\pagenumbering{arabic}

\section{Introduction}
Zero-knowledge proofs are a cornerstone of modern cryptography, allowing a prover to convince a verifier about the validity of an NP statement without leaking any information about the witness. 
%This is formalized via an efficient simulator that, given only the statement, can produce transcripts indistinguishable from those in genuine protocol runs.
A major research agenda, initiated by Ostrovsky and Wigderson~\cite{Ost,OstWig}, asks: what are the minimal cryptographic assumptions needed for zero-knowledge (\ZK) proof systems? Their celebrated results established that, assuming NP is hard on average, the existence of \ZK proofs for NP implies the existence of one-way functions (\OWFs), a fundamental cryptographic primitive. 

More recently, Hirahara and Nanashima~\cite{HirNan} made a notable advance by proving that even the {\em worst-case} hardness of $\mathsf{NP}$ (specifically, $\NP \not\subseteq \ioPpoly$) suffices to construct one-way functions from zero-knowledge proofs for $\mathsf{NP}$. In fact, they only require the existence of zero-knowledge proofs for a {\em specific} type of language outside $\ioPpoly$, thereby opening up a new route for constructing one-way functions based solely on worst-case hardness.

\paragraph {\bf Non-trivial \ZK and One-way Functions.}
What happens if we allow \ZK proof systems with non-negligible and possibly large security errors?
Unfortunately, techniques of~\cite{Ost,OstWig,HirNan} that link zero-knowledge to one-way functions are critically dependent on the assumption of small errors. 
As we will show, this is not merely an artifact of their analysis, but a fundamental limitation of their reductions. This state of affairs leaves open a basic gap in our understanding of the complexity of \ZK.

Note that \ZK proof systems with $\ec + \es + \ezk \geq 1$ can be realized trivially\footnote{Observe that when $\ec + \es + \ezk =1$, one can have the CRS sample one of three protocols: have the verifier always reject, have the verifier always accept, or have the prover send the full witness in the clear and the verifier check it.  If the CRS chooses the first with probability $\ec$, the second with probability $\es$, and the third with probability $\ezk$, this trivial protocol will satisfy our requirements.}%
%\footnote{Observe that when $\ec + \es + \ezk =1$, one can embed in the CRS an $\es$-biased bit: if the bit indicates "send", the prover transmits the witness in the clear, resulting in a trivially sound protocol; if the bit indicates ``do not send", the prover sends nothing and the verifier accepts.}
, without any assumptions. Here $\ec, \es$ and $\ezk$ denote the correctness, soundness and zero-knowledge errors of the proof system respectively. In this work, we will focus on {\em non-trivial} \ZK proof systems with $\ec + \es + \ezk < 1 - 1/p(|x|)$ for some polynomial $p(\cdot)$ in the input length $|x|$.

We ask:
\begin{center}
{\em Do non-trivial \ZK proof systems imply the existence of \OWFs?}
\end{center}

Non-trivial \ZK proof systems may be a lot easier to construct than standard \ZK: relaxing the negligible soundness or zero-knowledge error requirement can significantly lower the barriers to designing these protocols, potentially making them feasible even in settings where negligible-error notions are hard to build. 
But if we did manage to construct them, we would not know how to leverage them to build \OWFs---leaving unresolved questions about whether such protocols could function as a stepping stone towards one-way functions, or if they are fundamentally weaker and cannot be used to construct basic cryptographic primitives.

\paragraph{This Work.}
Our main result is a complete characterization of the worst-case complexity of non-trivial constant round ZK.

\begin{theorem}[Informal]
    If $\NP \not\subseteq \ioPpoly$ and non-trivial public-coin, constant round (computational) $\ZK$ arguments for $\NP$ exist, then one-way functions exist.
\end{theorem}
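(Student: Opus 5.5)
We argue by contrapositive: assuming one-way functions do not exist, we will use a non-trivial constant-round public-coin ZK argument $(\Pro,\Ver)$ for $\NP$, with $\ec+\es+\ezk<1-1/p(|x|)$, to decide an $\NP$-complete language (or the Hirahara--Nanashima meta-language) in $\ioPpoly$, contradicting $\NP\not\subseteq\ioPpoly$. The central tool is universal extrapolation / distributional inversion in the absence of OWFs (Impagliazzo--Levin, Impagliazzo--Luby). For every efficiently samplable distribution, in particular the simulator's output on $x$ and its prefixes, there is an efficient algorithm that, for infinitely many input lengths, samples a statistically close random continuation of a given prefix. We will use this to build a cheating prover. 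Since the protocol is public-coin with a constant number $r$ of rounds, the prover only needs to answer uniformly random verifier messages. So define $\Pro^*(x)$ as follows. Given the partial transcript $\tau_{<i}$ and the fresh verifier coins $c_i$, it runs the inverter on the simulator $\Sim(x)$ to sample a simulator transcript conditioned on the prefix $(\tau_{<i},c_i)$, and outputs the next prover message from that sample.

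\textbf{Key steps.} First, show that on $x\in\Lang$ the transcript of $\Pro^*$ with the honest verifier is statistically close to a distribution $D_x$. Here $D_x$ is the simulator's transcript with each verifier message ``resampled uniformly''. The gap between $D_x$ and $\Sim(x)$ is governed by how far the verifier messages inside $\Sim(x)$ are from uniform. Second, relate this to acceptance. Zero knowledge gives $\Pr[\Ver \text{ accepts } \Sim(x)]\ge 1-\ec-\ezk$, and that accepts-predicate is efficiently checkable. Third, use a hybrid over the $r$ rounds, exploiting that $r$ is constant. The aim is an \emph{additive} loss
\[\Pr[\Ver \text{ accepts } \langle\Pro^*,\Ver\rangle(x)] \ge 1-\ec-\ezk-\negl,\]
not the multiplicative or square-root losses of prior work. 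The decision procedure then runs $\Pro^*$ against $\Ver$ and accepts if the empirical acceptance rate exceeds $\es+\tfrac{1}{2p}$. On $x\notin\Lang$, soundness caps every efficient prover, including $\Pro^*$ (which is efficient given the inverter and non-uniform advice), at $\es$. On $x\in\Lang$ the rate is at least $1-\ec-\ezk-\negl>\es+1/p$. Hence $\Lang\in\ioPpoly$ after derandomizing with advice.

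\textbf{Main obstacle.} The hard step is the third: proving the acceptance bound without losing the $\sqrt{\cdot}$ or similar factor that limited \cite{HirNan} and the CRYPTO'25 work. In a simulated transcript the verifier's coins need not be uniform. So a real verifier feeding uniform coins may steer the interaction into prefixes where the simulator rarely accepts. A distinguisher that tests uniformity of the coins is efficient, but it only gives an $\ezk$ bound on the total deviation, and that is consumed once by the accept test. My first attempt is to define a single distinguisher that combines acceptance with coin-uniformity. This distinguisher is given the real transcript versus the simulated one. It outputs $1$ iff the verifier accepts, reweighted by the likelihood ratio of the coins, which is estimated through the inverter. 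The point is to charge both effects to the same $\ezk$ budget, so the losses add rather than compound. If that fails, the fallback is to work round by round. In that version we would use the inverter to estimate conditional acceptance probabilities and have $\Pro^*$ pick the continuation greedily, i.e.\ choose the message maximizing estimated acceptance, rather than sample from the simulator. The accounting must then show that each round loses only a $1/\poly$ amount, with constantly many rounds ensuring the total loss stays below the $1/p$ slack.
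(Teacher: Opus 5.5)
There is a genuine gap, and it sits in what you call the main obstacle.

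\textbf{The primary prover cannot work.} Your prover $\Pro^*$ answers with a sample from the simulator conditioned on the real prefix. It cannot reach $1-\ec-\ezk-\negl$ under any analysis. So your first attempt cannot succeed either; a likelihood-ratio reweighting is in any case not a bounded test that the $\ezk$ guarantee controls.

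Here is a counterexample. Take a two-message public-coin protocol whose verifier coins determine a bit $b$ with $\Pr[b=1]=\ezk$. The honest prover sends $\pi=b$. On $x\in\Lang$ the verifier accepts iff $\pi=b$; on $x\notin\Lang$ it accepts with probability $\es$ regardless. Let the simulator output coins with $b=0$, then $\pi=0$ with probability $1-\ezk$ and $\pi=1$ otherwise (plus a negligible-mass branch with $b=1,\pi=0$ so that conditionals are defined). The ZK error is $\ezk$, yet $\Pro^*$ is accepted with probability about $(1-\ezk)^2$. At $\ezk=1/2$, $\es=1/4$, $\ec=0$ this equals $\es$; this is the example of Section~\ref{sec:barrier}.

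The second $\ezk$ loss is not a coin-uniformity effect. Given $b=0$, the simulator \emph{chooses} a rejecting reply even though an accepting one is in its support, and any prover that merely samples from it inherits this. Your Key step 1 also fails as stated: io-UE is accurate only on average over simulator prefixes, not on prefixes carrying uniform verifier coins.

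\textbf{What the fallback is missing.} Your fallback names the right prover, but the accounting you sketch (``each round loses only $1/\poly$'') is false. In some round the honest prover may send a message far better than anything the inverted simulator offers, and you are missing the idea that repairs this.

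In the paper, the greedy prover samples $\poly(n)$ candidates from the inverter on the current prefix. It keeps the one maximizing the estimated success of \emph{itself} continuing against the real public-coin verifier, estimated by recursive self-simulation. That recursion costs $\poly(n)^k$, and this running time, not the loss, is why $k$ must be constant.

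The greedy prover is compared with the \emph{honest} prover, anchored at completeness $1-\ec$, not with the simulator's acceptance. Both therefore face uniform verifier messages, and UE accuracy on real prefixes is never needed.

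The $\ezk$ enters through a single efficient distinguisher $\Distinguisher$. At each prover round it flags the transcript's message if fewer than $n$ of $\poly(n)$ fresh inverter candidates on the same prefix have estimated success at least as high. On simulated transcripts it fires with probability $O(1/p)$ by exchangeability; this is the only use of UE accuracy. Consider the hybrid that runs the honest prover for $i$ rounds and the greedy one afterwards. Each step costs $O(1/(kp))$ plus the probability that $\Distinguisher$ first fires at round $i$ on the real transcript. These terms telescope to $\Pr_{\mathrm{real}}[\Distinguisher=1]\le\ezk+O(1/p)$, so $\ezk$ is paid exactly once.

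\textbf{From ai-OWFs to OWFs.} ``No OWFs'' does not let you invert $\Sim(x)$ for worst-case $x$; that requires the absence of \emph{auxiliary-input} OWFs. So your argument yields only ai-OWFs. Reaching OWFs requires the Hirahara--Nanashima/LMP step. Ai-OWFs give an $\NP$ language hard on average over uniform instances. The decider is a decision-to-inversion reduction, so applying it to that language yields the weak OWF $g(r_1\|r_2)=\Dist(r_1)\|f_{\Dist(r_1)}(r_2)$.
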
 

The above result requires the assumption that the underlying ZK argument is public-coin. We also obtain a result for private-coin protocols by showing that unless auxiliary-input one-way functions exist, we can convert any constant-round private-coin protocol into a public-coin one. Thus, non-trivial constant round \ZK protocols yield auxiliary-input \OWFs.

\begin{theorem}[Informal]
    If $\NP \not\subseteq \ioPpoly$ and non-trivial constant round (computational) $\ZK$ arguments for $\NP$ exist, then auxiliary-input one-way functions exist.
\end{theorem}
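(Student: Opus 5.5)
The plan is to reduce the private-coin case to the public-coin theorem above. Assume toward contradiction that auxiliary-input one-way functions do not exist. We then show that any non-trivial constant-round private-coin ZK argument for $\NP$ yields a non-trivial constant-round public-coin ZK argument for $\NP$. Applying the first informal theorem, together with $\NP \not\subseteq \ioPpoly$, gives (standard) one-way functions. These are in particular auxiliary-input one-way functions, which is a contradiction.

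\textbf{Compiling private coins into public coins.} The compiler follows the ``inverting the verifier'' idea, in the style of the Goldwasser--Sipser and Vadhan/Ong--Vadhan line of work. Consider the original protocol with verifier $\Ver$ on input $x$. For each round $i$, define the efficiently computable function $f_{x,i}$. It maps the verifier's random tape $r$ and the prover messages to the partial transcript of $\Ver$'s first $i$ messages. This is a polynomial-time function family indexed by the auxiliary input $x$ together with the prover messages. Since auxiliary-input OWFs (and hence, by the standard equivalence, auxiliary-input distributional OWFs) do not exist, there is an efficient inverter. On every auxiliary input $x$, this inverter samples a nearly uniform preimage of $f_{x,i}$ on a random image.

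The new public-coin verifier works round by round. In round $i$, it sends fresh public random coins. The prover then extends the transcript by running the inverter, i.e.\ by resampling $\Ver$'s consistent state. Equivalently, the verifier's next message is obtained by sampling $r$ consistent with the transcript so far and computing $\Ver$'s next message. Alternatively, the verifier itself runs the inverter, which is fine because the inverter is efficient. Since the number of rounds is constant, the total statistical deviation is at most a constant times the inverter's error, i.e.\ $O(1)\cdot 1/q(|x|)$ for a polynomial $q$ of our choice.

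\textbf{Error analysis.}
\begin{itemize}
\item \emph{Completeness.} Completeness is preserved up to this additive loss, since the honest prover's view is statistically close to the original.
\item \emph{Zero-knowledge.} The simulator for the new protocol runs the original simulator. It then re-randomizes the verifier coins using the inverter. The distinguishing advantage therefore grows only by the inverter error. Computational ZK survives because the map is efficient.
\item \emph{Soundness.} Soundness is the delicate part. A cheating prover $P^*$ against the public-coin verifier should be turned into a cheating prover against the private-coin $\Ver$. That prover answers each real $\Ver$ message as $P^*$ would on a transcript sampled through the inverter.
\end{itemize}
We choose $q$ large enough that $\ec + \es + \ezk$ stays below $1 - 1/(2p(|x|))$, so the result remains non-trivial.

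\textbf{Main obstacle.} I expect the soundness step to be the hardest. The inverter is only guaranteed to work on average over images drawn from the honest distribution, whereas a cheating prover may drive the transcript toward atypical prefixes. To handle this, I would first try to make the compiled verifier itself perform the inversions. Each verifier message is then a deterministic efficient function of public coins and the transcript, and it is distributed close to $\Ver$'s true conditional distribution under \emph{every} prover strategy. I would bound the deviation by a hybrid argument over the constant number of rounds, using an inverter that succeeds with high probability on the image distribution induced by the actual interaction. The constant-round restriction is essential here, because the per-round losses must sum to less than the non-triviality gap $1/p(|x|)$.
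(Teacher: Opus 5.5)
Your outer structure is the paper's: assume no auxiliary-input OWFs, compile the private-coin argument into a public-coin one by inverting the verifier, then invoke the public-coin theorem. However, the compiler as you describe it has a genuine gap, and it is exactly the one you flag as the main obstacle; your suggested fix does not close it.

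At round $i$ you invert the \emph{original} verifier's prefix map $r \mapsto (v_1,\dots,v_{i-1})$, with the prover messages $p_{<i}$ as auxiliary input. Non-existence of ai-OWFs gives only this: for each \emph{fixed} auxiliary input $a$, there is an inverter that is good on average over images $f_a(r)$ with $r$ uniform. In an actual interaction, though, $p_j$ is chosen as a function of $v_{<j}$. So conditioned on $a=p_{<i}$, the image $v_{<i}$ is not distributed as $f_a(U)$.

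For example, a prover that sets $p_2=v_1$ forces every round-3 query onto images with $v_1=p_2$. That set can have negligible mass under $f_a(U)$ for every $a$, and an inverter meeting its guarantee may fail completely there. The honest prover is adaptive too, so this already undermines your completeness and ZK bounds, not just soundness. In particular, the claim that the total deviation is $O(1)$ times the inverter error is unjustified. Nor is an inverter that is good on ``the image distribution induced by the actual interaction'' available: that distribution depends on the cheating prover, whose code the single compiled verifier does not have.

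The missing idea is the paper's one-message-at-a-time induction. Suppose the first $t$ verifier messages are already public coins. Then w.l.o.g.\ the verifier samples its private randomness $r$ only just before message $t+1$, so for \emph{every} prover, $r$ is uniform and independent of the transcript $\tau$ so far. One then inverts only the next-message map $r \mapsto V_{t+1}(\tau,r)$, with all of $\tau$ as auxiliary input. For each fixed $\tau$, the image arising in the interaction is distributed exactly as $V_{t+1}(\tau,U)$, so the for-every-auxiliary-input guarantee covers arbitrary, even unbounded, provers.

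The new verifier sends $r$ itself as its $(t+1)$st message and continues with the resample $r'=A(\tau,V_{t+1}(\tau,r))$. This swap is justified by the exchangeability $(V_{t+1}(a,r), r, A(a,V_{t+1}(a,r))) \approx (V_{t+1}(a,r), A(a,V_{t+1}(a,r)), r)$. The new simulator obtains the public coin by running $A(\tau,v_{t+1})$ on a simulated transcript, which is the same efficient post-processing one can apply to real transcripts. The coins of $A$ become the next fresh private randomness, so the step iterates, and the constant round count bounds the nested polynomial blowup.

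A smaller point: negating standard ai-OWFs yields only an infinitely-often, non-uniform inverter. Your compiled verifier would then be non-uniform, with guarantees only on infinitely many lengths that need not align with those used by the public-coin theorem. This is why the paper states its dichotomy in terms of io-uniform ai-OWFs.
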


We also consider the special case of non-interactive zero-knowledge (\NIZK) arguments for $\NP$, where the prover sends a single proof string and both parties share a common reference string (\CRS). We show that nontrivial $\NIZKs$ for $\NP$, together with the assumption that $\NP \not\subseteq \ioPpoly$, imply the existence of one-way functions. 

\begin{theorem}[Informal]
\label{thm:inf1}
    If $\NP \not\subseteq \ioPpoly$ and non-trivial (computational) $\NIZK$ arguments for $\NP$ exist, then one-way functions exist.
\end{theorem}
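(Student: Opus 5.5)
We argue by contradiction: assume one-way functions do not exist, and use this to obtain a non-uniform polynomial-time algorithm that decides an $\NP$-complete language infinitely often. This contradicts $\NP \not\subseteq \ioPpoly$. The only tool we need from the absence of OWFs is the Impagliazzo--Luby universal inverter, together with its standard consequence: every efficiently samplable distribution can be inverted with respect to its sampler with $1/\mathrm{poly}$ statistical accuracy, infinitely often. Fix a non-trivial \NIZK for an $\NP$-complete relation $\Rel$ with errors satisfying $\ec+\es+\ezk < 1-1/p(|x|)$, and let $\Sim$ be its simulator.

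\textbf{Step 1 (candidate decider).} On input $x$, consider the efficiently samplable distribution $S_x$ given by $(\crs,\prf)\gets\Sim(x)$. Define a procedure $\Est$ that estimates, by sampling, the quantity
\[
q(x)=\Pr_{(\crs,\prf)\gets\Sim(x)}\bigl[\exists\,\prf' \text{ such that } \Ver(\crs,x,\prf')=1 \text{ and } \crs \text{ is ``explained'' as honest}\bigr].
\]
We make this testable by using the inverter. We sample $(\crs,\prf)\gets\Sim(x)$ and also an honest $\crs'\gets\Gen$. We then use the inverter to produce coins for a \emph{joint} sampler that, given only $\crs$, decides whether $\crs$ came from $\Sim$ or from $\Gen$. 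The intuition is that distinguishing real CRSs from simulated ones is itself an inversion task against a samplable mixture.

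\textbf{Step 2 (the two cases).} Suppose $x\in\Lang$. By zero-knowledge, $(\crs,\prf)$ from $\Sim$ is $\ezk$-close to an honest pair, and the honest proof is accepted with probability at least $1-\ec$. So a verifier-acceptance test, combined with a test that the CRS marginal looks honest, passes with probability at least $1-\ec-\ezk$. Suppose instead $x\notin\Lang$. We build a cheating prover from the inverter: given an honest $\crs$, it inverts $\Sim(x)$ conditioned on the CRS output, in order to find a $\prf$. Soundness caps its success at $\es$, up to the inversion error. Hence, whenever $\Sim$ produces CRSs indistinguishable from honest ones on $x$, its proofs verify with probability at most $\es+1/\mathrm{poly}$.

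The test therefore separates the two cases. On $x\in\Lang$ the success probability is at least $1-\ec-\ezk$, while on $x\notin\Lang$ it is at most $\es+\delta$, and the gap is at least $1/p-\delta$. Choosing the inversion accuracy $\delta<1/(2p)$ gives a noticeable gap. Amplifying by repetition and fixing coins non-uniformly yields an $\ioPpoly$ decider.

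\textbf{Main obstacle.} The delicate point is the case $x\notin\Lang$ when $\Sim$'s CRS marginal is far from honest. There the simulator's proofs can be accepted easily, and soundness says nothing about them. This is exactly where earlier analyses lose a square root, giving the $\ezk+\sqrt{\es}$ barrier. To handle it, I would avoid using a separate distinguisher. Instead I would accept with the score $\Pr[\Ver \text{ accepts on the pair produced by conditionally inverting } \Sim \text{ on an honest } \crs]$. This evaluates the verifier directly on honest CRSs. Since the inverter then acts as a genuine cheating prover, the score is at most $\es+\delta$ when $x\notin\Lang$. When $x\in\Lang$, the conditional inversion is, by a hybrid argument, $\ezk$-close to the honest prover's distribution, so the score is at least $1-\ec-\ezk-\delta$. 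These bounds are linear, not square-root. The technical work is showing that conditional inversion of $\Sim$ on the CRS coordinate is $\ezk$-faithful. To do this, I would apply the inverter to the function $r\mapsto \crs$-part of $\Sim(x;r)$, and argue via the statistical-closeness guarantee that an $\ezk$ shift in the CRS marginal costs at most $\ezk$ in acceptance probability.
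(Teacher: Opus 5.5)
There is a genuine gap, and it sits in the step you defer as ``the technical work.'' Your final decider draws $\crs\gets\Gen$, conditionally inverts the CRS-marginal of $\Sim(x)$ on it \emph{once}, and checks the resulting proof. The pair it produces is not $\ezk$-close to an honest pair. The hybrid chain $(\crs\gets\Gen,\ \prf\gets\Sim\mid\crs)\to\Sim(x)\to\text{real}$ pays $\ezk$ for the CRS marginal and another $\ezk$ for the joint distribution. So you only get acceptance $\ge 1-\ec-2\ezk$, i.e.\ the \cite{OstWig}-style condition $\ec+\es+2\ezk<_n 1$.

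This is not slack in your analysis; the decider itself fails. Take the following \NIZK:
\begin{itemize}
\item $\Gen$ outputs $\crs=1$ with probability $\ezk$ and $0$ otherwise.
\item The honest proof is $\prf=\crs$.
\item On $x\in\Lang$ the verifier accepts iff $\prf=\crs$; on $x\notin\Lang$ it accepts with probability $\es$ regardless.
\item $\Sim$ outputs $(0,0)$ with probability $1-\ezk$ and $(0,1)$ with probability $\ezk$.
\end{itemize}
The zero-knowledge error is exactly $\ezk$. Yet your decider accepts $x\in\Lang$ with probability $(1-\ezk)^2$, which equals $\es$ at the non-trivial setting $\ezk=\tfrac12$, $\es=\tfrac14$.

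Your diagnosis of the obstacle is also misplaced. The $x\notin\Lang$ side is never the issue: any efficient procedure run on an honest $\crs$ is a cheating prover, so soundness gives $\es$ with no inversion error at all. The entire loss is on the $x\in\Lang$ side. It comes from the simulator sometimes emitting a rejecting proof on a CRS for which it could equally emit an accepting one.

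The missing idea is repetition plus a good/bad dichotomy on CRSs.
\begin{itemize}
\item For the same honest $\crs$, draw polynomially many conditional simulated proofs with fresh inversions, and accept if any of them verifies.
\item Call $\crs$ bad if a single draw accepts with probability at most $1/(20p)$.
\item Build a distinguisher that estimates this per-draw rate by sampling, and outputs $1$ iff the rate looks low \emph{and} the given $\prf$ verifies. On simulated pairs it fires with probability $O(1/p)$, because a simulated $\prf$ is itself distributed like a conditional draw. On real pairs it fires with probability at least $\Pr_{\Gen}[\crs\text{ bad}]-\ec-O(1/p)$.
\item Zero knowledge then forces $\Pr_{\Gen}[\crs\text{ bad}]\le\ezk+\ec+O(1/p)$.
\item On non-bad CRSs the repetition succeeds with overwhelming probability. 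This gives the linear bound $1-\ec-\ezk-O(1/p)$, while soundness still caps $x\notin\Lang$ at $\es$.
\end{itemize}

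There is a second gap. The non-existence of standard OWFs does not let you invert $\Sim(x;\cdot)$ for a worst-case $x$, because that sampler takes $x$ as auxiliary input. For that you need io-UE with auxiliary input, i.e.\ the non-existence of auxiliary-input OWFs. So your contradiction argument only produces auxiliary-input OWFs. Reaching standard OWFs requires the \cite{HirNan,LMP} route. An auxiliary-input OWF yields an $\NP$ language that is hard on average over uniform instances. The repetition-based decider, viewed as a decision-to-inversion reduction on that language, then yields a weak (hence standard) OWF.
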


This improves recent work~\cite{CHK} which obtained one-way functions from any \NIZK with zero-knowledge error $\ezk$ and soundness error $\es$ satisfying
$
\ezk + \ec + 2\sqrt{\es} < 1 - \frac{1}{p(|x|)}
$
for some polynomial $p(\cdot)$, or satisfying $\ezk + \sqrt{\es} < 1$ in the case where $\ezk$ and $\es$ are constants and $\ec = o(1)$. However, the regime where $\ezk + \es < 1$ but $\ezk + \sqrt{\es} \geq 1$ remained open until this work.

\iffalse{
 {\em interactive} \ZK in constant rounds.

\begin{theorem}[Informal]
    If $\NP \not\subseteq \ioPpoly$ and non-trivial public-coin, constant round (computational) $\ZK$ arguments for $\NP$ exist, then one-way functions exist.
\end{theorem}

The above result requires the assumption that the underlying ZK argument is public-coin. We also obtain a result for private-coin protocols by showing that unless auxiliary-input one-way functions exist, we can convert any constant-round private-coin protocol into a public-coin one. Thus, non-trivial constant round \ZK protocols yield auxiliary-input \OWFs.

}\fi

\paragraph{Application: Amplifying \ZK.} 
Arguably one of the most natural questions about non-trivial \ZK proofs is whether they can be {\em amplified} to drive down their errors to negligible, and obtain ``standard'' \ZK. For $\NIZKs$, this question has been studied extensively recently, and such amplification has been shown to be possible under additional assumptions such as the existence of public-key encryption~\cite{GJS19}, and even just one-way functions~\cite{BG,AK25}.

Our theorem above can be viewed as eliminating the need to make any cryptographic assumptions when amplifying \NIZK. In more detail, if $\NP$ is easy in the worst case then \NIZKs for $\NP$ exist unconditionally, whereas if $\mathsf{NP}$ is hard in the worst case, then Informal Theorem \ref{thm:inf1} builds one-way functions from non-trivial \NIZKs for $\NP$. The resulting one-way functions can be used, together with weak \NIZKs, to obtain standard \NIZKs by relying on known amplification theorems in~\cite{BG,AK25}. 

For technical reasons we will still need to make the extremely mild complexity-theoretic assumption (Assumption 1) that {\em 
$\NP \not\subseteq (\ioPpoly)\setminus\BPP$}. In other words, if $\NP$ languages are efficiently decidable infinitely often with advice, then they are efficiently decidable on all input lengths.  This is an extremely mild complexity assumption (it is implied by $\NP \not \subseteq \ioPpoly$, which is already a  mild worst case assumption).

\begin{corollary}[Informal]
    Under Assumption 1, non-trivial (computational) \NIZK proofs imply standard \NIZK proofs.
\end{corollary}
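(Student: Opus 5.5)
We prove the corollary by splitting into two exhaustive cases according to the worst-case complexity of $\NP$. Under Assumption 1, either $\NP \subseteq \BPP$, or $\NP \not\subseteq \ioPpoly$. The reason is that if $\NP \subseteq \ioPpoly$, then Assumption 1 (i.e., $\NP \not\subseteq (\ioPpoly)\setminus\BPP$, read as ``infinitely-often easy with advice implies easy on all lengths'') forces $\NP \subseteq \BPP$. In each case we build a standard \NIZK proof system for $\NP$ with negligible $\ec,\es,\ezk$.

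\textbf{Case $\NP \subseteq \BPP$.} Here standard \NIZK proofs exist unconditionally. The prover sends nothing, and the verifier runs the $\BPP$ decision procedure for $\Lang$, amplified by repetition so that its error is $2^{-|x|}$; it accepts iff the procedure says $x\in\Lang$. Completeness and soundness errors are then negligible. Zero-knowledge is perfect, since the simulator outputs the same empty proof on every instance. To get perfect completeness one can use $\NP\subseteq\BPP \Rightarrow \NP = \mathsf{RP}$: the verifier runs a search-to-decision reduction, finds a witness itself, and checks it.

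\textbf{Case $\NP \not\subseteq \ioPpoly$.} We apply Informal Theorem~\ref{thm:inf1} (the formal NIZK version proved later) to the given non-trivial \NIZK argument for $\NP$. A proof is in particular an argument, so its hypotheses hold, and it yields one-way functions. We then invoke the amplification theorems of \cite{BG,AK25}: assuming one-way functions, any \NIZK proof for $\NP$ with $\ec+\es+\ezk < 1 - 1/\poly$ can be transformed into a \NIZK proof (resp.\ argument) with negligible errors. The one point to check is that these amplification results cover the full non-trivial regime, not only a sub-regime such as $\ezk+\sqrt{\es}<1$. We must also check that they preserve statistical soundness when starting from a proof, or else settle for the argument version of the corollary.

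The main obstacle is matching parameters and notions across the case split. First, the ``non-trivial'' condition in Informal Theorem~\ref{thm:inf1} must coincide with the hypothesis needed by \cite{BG,AK25}, including the case of inverse-polynomial rather than constant gaps. Second, the worst-case assumption must be stated uniformly versus non-uniformly in the right way, so that the dichotomy from Assumption 1 is truly exhaustive. I expect the first to be handled by citing the precise statement of the amplification theorem. The second is handled by the definition of Assumption 1 itself, which was chosen exactly to make the two cases cover everything.
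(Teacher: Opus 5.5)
Your proposal follows the paper's proof. The paper splits into the same two cases. If $\NP \subseteq \BPP$, it uses a trivial \NIZK. If $\NP \not\subseteq \ioPpoly$, it gets one-way functions from Corollary~\ref{cor:nizk-to-owf} and then amplifies. This is exactly how the formal version, Corollary~\ref{cor:nizk-amp-stat}, is proved.

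The one item you left open does not resolve as generally as you expect. The amplification result the paper cites, \cite{AK25} Theorem~8.27 (Theorem~\ref{thm:ak-stat-amp}), requires three things: $\ec$ negligible, $\es + \ezk <_n 1$, and adaptive statistical soundness. It does not amplify an arbitrary proof with $\ec + \es + \ezk < 1 - 1/\poly$. So the corollary you can actually prove carries these hypotheses, as Corollary~\ref{cor:nizk-amp-stat} does. With $\ec$ negligible they also imply the $\ec + \es + \ezk <_n 1$ needed to invoke Corollary~\ref{cor:nizk-to-owf}.

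Your fallback to an ``argument version'' is not available. The paper notes that no known method amplifies arbitrary non-trivial \NIZK arguments from assumptions weaker than public-key encryption. That is why the statement is restricted to proofs.

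Two small fixes in the $\BPP$ case. First, $\Ver$ is deterministic in this model, so its coins must come from the CRS. Adaptive statistical soundness therefore needs one CRS that works for all $x$ of length $n$. Per-instance error $2^{-|x|}$ is not enough for that; amplify to about $2^{-2|x|}$ and take a union bound. Second, the $\NP = \mathsf{RP}$ witness-search trick gives perfect \emph{soundness}, not perfect completeness: a found witness is always checked, while search may fail on $x \in \Lang$. As a side benefit, this also settles adaptive soundness.
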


Note that the corollary above applies only to \NIZK proofs (with statistical soundness), since we currently do not have approaches to amplify arbitrary non-trivial 
%(BG can amplify arguments with negligible $\es$)} 
\NIZK arguments from assumptions weaker than public-key encryption.

A natural next question is whether similar amplification is possible for {\em interactive} \ZK. Here, we note that one-way functions imply the existence of four-round \ZK arguments~\cite{BJY97}, yielding the following informal Corollary.
\begin{corollary}[Informal]
    Under Assumption 1, non-trivial public-coin constant-round interactive (computational) \ZK arguments imply four-round \ZK arguments.
\end{corollary}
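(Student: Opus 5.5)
The corollary follows from a case split on the worst-case complexity of $\NP$, combining the main theorem (non-trivial public-coin constant-round \ZK arguments for $\NP$ together with $\NP \not\subseteq \ioPpoly$ imply \OWFs) with the four-round construction of~\cite{BJY97}. Assumption 1 states that $\NP \not\subseteq (\ioPpoly)\setminus\BPP$. So either $\NP \not\subseteq \ioPpoly$, or $\NP \subseteq \BPP$.

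\emph{Case 1: $\NP \not\subseteq \ioPpoly$.} We apply the informal public-coin constant-round theorem to the assumed non-trivial public-coin constant-round \ZK argument for $\NP$, which gives one-way functions. We then invoke~\cite{BJY97}: one-way functions yield four-round (computational) \ZK arguments for $\NP$ with negligible completeness, soundness and zero-knowledge errors. The only point to check is that the notion of non-triviality ($\ec+\es+\ezk < 1 - 1/p(|x|)$), and the notion of ``arguments for $\NP$'', match the hypotheses of the main theorem exactly. Both should be immediate from the definitions.

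\emph{Case 2: $\NP \subseteq \BPP$.} Here we build a four-round \ZK argument directly, with no cryptography. Fix an $\NP$ relation $\Rel$ with language $\Lang$, and let $M$ be a $\BPP$ decider for $\Lang$. Amplify $M$ by repetition so its error is $2^{-|x|}$. The prover sends nothing (or dummy messages padding the protocol to four rounds), and the verifier outputs $M(x)$.
\begin{itemize}
\item Completeness and soundness errors are both at most $2^{-|x|}$. Soundness is in fact statistical, so it holds even against unbounded provers.
\item Zero-knowledge: for any malicious verifier $V^*$, the simulator simply runs $V^*$ on the empty or dummy prover messages. This reproduces the view exactly, so $\ezk = 0$.
\end{itemize}

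The main subtlety is the precise formalization of Assumption 1 across input lengths. The complement of $\NP \subseteq (\ioPpoly)\setminus\BPP$ must genuinely split as ``$\NP\not\subseteq\ioPpoly$ or $\NP\subseteq\BPP$''. This needs to hold \emph{for the specific language} used in the main theorem (the paper notes that Hirahara--Nanashima-style arguments use a particular language outside $\ioPpoly$). It must also be a uniform statement about all of $\NP$, so that the trivial protocol in Case 2 works for every $\NP$ language simultaneously.

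I would handle this by phrasing Case 1 as ``some $\NP$ language is outside $\ioPpoly$''. By $\NP$-completeness closure this puts SAT outside $\ioPpoly$, which is what the main theorem consumes. Case 2 is then ``SAT $\in \BPP$'', which gives the trivial protocol for all of $\NP$ via Karp reductions. In both cases we obtain four-round \ZK arguments for $\NP$, proving the corollary.
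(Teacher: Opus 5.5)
Your proposal is correct and matches the paper's proof of Corollary~\ref{cor:zk-amp-stat}. It uses the same case split: if $\NP \not\subseteq \ioPpoly$, Corollary~\ref{cor:zk-to-owf} gives OWFs and \cite{BJY97} then gives four-round arguments; if $\NP \subseteq \BPP$, the trivial protocol works, and your amplification of the $\BPP$ decider to negligible error fills in a detail the paper leaves implicit.

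The SAT detour is unnecessary. Every $\NP$ language is assumed to have a non-trivial argument, so any $\Lang \in \NP \setminus \ioPpoly$ feeds Corollary~\ref{cor:zk-to-owf} directly. If you keep the detour, ``SAT $\notin \ioPpoly$'' needs a padding argument, since $\ioPpoly$ is not automatically closed under Karp reductions.
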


\section{Technical Overview}
For simplicity throughout this overview, we will highlight our techniques assuming that the completeness error $\ec = 0$. Formally we work with the general case where $\ec \neq 0$, but this is a straightforward extension of the analysis described in this overview.

In the following two subsections, we recall techniques used in prior works \cite{OstWig,CHK} to show that zero-knowledge requires one-way functions, as well as barriers that prevent these techniques from working when $\es + \ezk \approx 1$. These will build intuition towards our eventual approach. Readers that are already familiar with these works may skip ahead to our techniques described in Section \ref{sec:newover}.

We discuss prior ideas in the context of \emph{non-interactive} zero-knowledge (\NIZK), and then describe methods to close the gap in parameters in the non-interactive setting. However, we note that the more general interactive case requires the most significant new insights, discussed in Section \ref{sec:izk}. 

\subsection{Previous Techniques}

%Prior techniques and barriers encountered for \NIZK will also help shed light on (even more significant) barriers that arise when considering interactive zero-knowledge.

\paragraph{\cite{OstWig} Techniques.}
The key tool used in \cite{OstWig} is Universal Extrapolation (UE): informally, UE lets us ``reverse sample'' the randomness that would generate a particular output of an efficient process.
In more detail, given a polynomial-time sampler $\Samp$ and an output $y$ from its distribution, UE refers to the ability to efficiently and uniformly sample (up to small error) from the set of all inputs that 
$\Samp$ maps to $y$.
UE is possible if and only if one-way functions do not exist~\cite{ImpLev}.
%reference string \crs, yielding an (approximately) uniform $r$ such that 
%$\Sim(x;r)$ outputs the same \crs. 

%which intuitively says that if \OWFs do not exist, then given the description of any polynomial-time algorithm $\Samp$, we can efficiently ``reverse sample'' the randomness that an to generate a given string

%roughly,  a polynomial-time sampler $\mathsf{Samp}$  it is possible to efficiently sample uniformly (up to arbitrarily small inverse-polynomial error) from the distribution of all inputs on which $\mathsf{Samp}$ outputs $y$.  Informally, UE allows us to  \CRS, getting (approximately) a random $r$ such that $\Sim(x;r)$ would output the same $\crs$.

%then there is an {\em efficient} algorithm to sample (nearly) uniformly from the set of input randomness that would lead a given randomized algorithm to generate a given output.  Using UE (which from \cite{ImpLev} we know exists exactly when one-way functions do not), 
Towards a contradiction,~\cite{OstWig} assume the non-existence of one-way functions. Then using UE, they convert a \NIZK $(\Gen, \Pro, \Ver)$ and simulator $\Sim$ for a language $\Lang$ into an algorithm $\Alg$ that decides $\Lang$. In particular, $\Alg$ samples a \emph{real} \CRS from $\Gen$, then uses UE to find a random string $r$ that would lead the \emph{simulator} to output that same \CRS.  $\Alg$ then accepts if and only if $\Ver$ would accept the proof output by $\Sim(x;r)$.

To argue that $\Alg$ works, one has to show that it will accept $x \in \Lang$ with higher probability than it accepts $x \not \in \Lang$.  Consider each of these cases:
\begin{itemize}
    \item If $x \not \in \Lang$, $\Alg$ can only accept if it finds a proof \emph{relative to a real \CRS} that causes $\Ver$ to accept $x$.  Soundness of the underlying \NIZK will immediately tell us that this can happen with probability at most $\es$.

    \item If $x \in \Lang$, note that by completeness $\Ver$ will accept a real \CRS and proof with probability 1
    %\footnote{Formally one can in fact work with an arbitrary completeness error $\ec$, but for simplicity throughout the overview we will assume $\ec = 0$.}, 
    and so by zero-knowledge will accept a simulated \CRS and proof with probability at least $1 - \ezk$.

    Now consider an algorithm $\cB$ that is \emph{almost} identical to $\Alg$, except that it starts by sampling a \CRS from $\Sim$ instead of from $\Gen$.  We note first that this is (up to the small statistical error of the UE) identically distributed to simply running $\Ver$ on a simulated \CRS and proof, so by the above $\cB$ accepts $x \in \Lang$ with probability at least $1 - \ezk$.  But note that since the only difference between $\Alg$ and $\cB$ is in the original sampling of the \CRS, and so by zero-knowledge their outputs must be $\ezk$-close.  Thus, $\Alg$ accepts $x \in \Lang$ with probability at least $1 - 2\ezk$.
\end{itemize}

Thus, $\Alg$ is a correct algorithm for $\Lang$ as long as $1 - 2\ezk$ is noticeably larger than $\es$, which is equivalent to requiring that $\es + 2\ezk$ is noticeably smaller than 1.  Since one can implement $\Alg$ as long as one-way functions \emph{don't} exist, this means that a \NIZK for a hard language with $\es + 2\ezk < 1$ \emph{requires} the existence of a one-way function.

Note that in the analysis above, we first uses the zero-knowledge property to bound the difference between real and simulated proofs (costing one $\ezk$) and then again bound the difference between real and simulated $\crs$ (costing another $\ezk$). It is natural to wonder whether there may be a better way to analyze the same algorithm and show that $\cA$ decides $\Lang$ whenever we have arbitrary {\em non-trivial \NIZKs}, but in the following subsection (Section \ref{sec:barrier}) we provide an explicit counterexample.

\paragraph{\cite{CHK} Techniques.}

The improvement in~\cite{CHK} stems from the observation that one of the $\ezk$ errors occurs specifically when the analysis transitions the $\crs$ (and not the proof itself) from real to simulated. They then add a ``CRS check'' to the deciding algorithm $\cA$ described above, which roughly predicts whether a \CRS was sampled from the real or simulated distribution.  This requires the use of Universal Approximation (UA), which intuitively allows one to approximate the probability of a given output coming from a given randomized process. Similar to the case of UE, \cite{ImpThesis} showed that UA is possible exactly when one-way functions don't exist.\footnote{\cite{CHK} technically uses a stronger notion of UA than the one considered in \cite{ImpThesis}, but for our purposes we will ignore this subtlety.}  

In particular, \cite{CHK} modifies the algorithm $\Alg$ above by adding a step that estimates the probability that the \CRS it samples came from $\Gen$ or from $\Sim$, and immediately rejects if the latter is at least $\frac{1}{\sqrt{\es}}$ times larger than the former.
As in \cite{OstWig}, one now has to show that the modified $\Alg$ accepts $x \in \Lang$ more often than $x \not \in \Lang$.
\begin{itemize}
    \item If $x \not \in \Lang$, the above modification can only make $\Alg$ less likely to accept, so the acceptance probability is still upper-bounded by $\es$.
    \item If $x \in \Lang$, all the hybrids considered above have to be modified to also include the \CRS checking step.  It is possible to show that a real \CRS fails the check with probability at most $\sqrt{\es}$ (ignoring a small error term from the UA),
    and thus a real \CRS and proof will be accepted \emph{and} pass the \CRS check with probability at least $1 - \sqrt{\es}$.  Thus, by zero-knowledge a \emph{simulated} \CRS and proof must pass both checks with probability at least $1 - \sqrt{\es} - \ezk$.

    The key step in the \cite{CHK} analysis comes when considering $\cB$, which as before is equivalent to $\cA$ except that it starts with a simulated \CRS.  As in the prior analysis, this is identically distributed (up to the small statistical error of the UE) to sampling the \CRS and proof from $\Sim$ directly, and so will accept $x \in \Lang$ with probability at least $1 - \sqrt{\es} - \ezk$.  However, instead of paying an \emph{additive} $\ezk$ loss in acceptance probability between $\cB$ and $\Alg$, \cite{CHK} notes that one can instead pay a \emph{multiplicative} $\sqrt{\es}$ loss.  Indeed, any \CRS that is at most $\frac{1}{\sqrt{\es}}$ times more likely to come from $\Sim$ as from $\Gen$ can ``contribute'' at most $\frac{1}{\sqrt{\es}}$ times more the the acceptance probability of $\cB$ as it does to $\cA$; all other \CRS are rejected by the  checking step, and so ``contribute'' nothing in either algorithm.  Thus, $\cA$ accepts $x \in \Lang$ with probability at least $\sqrt{\es}$ times the acceptance probability of $\cB$, which gives a final bound of $\sqrt{\es}\left( 1 - \sqrt{\es} - \ezk \right)$.
\end{itemize}

This means that $\Alg$ is a correct algorithm for $\Lang$ as long as $\sqrt{\es}\left( 1 - \sqrt{\es} - \ezk \right)$ is noticeably smaller than $\es$; rearranging terms, this requires that $\ezk + 2\sqrt{\es}$ is noticeably smaller than 1.  As before, this says that a \NIZK for a hard language with $\ezk + 2 \sqrt{\es} < 1$ must imply the existence of a one-way function. 

\subsection{Barriers to Improvement}
\label{sec:barrier}

We would ideally like a result that works for \emph{any} non-trivial \NIZK, meaning it should only need to assume that $\es + \ezk < 1$.  One might hope that a tighter analysis of the hybrids in \cite{OstWig} or \cite{CHK} could achieve such a result, but this turns out to be impossible: there exists a (contrived) protocol with $\es + \ezk < 1$ for which neither algorithm $\Alg$ described above give advantage in distinguishing $x \in \Lang$ from $x \not \in \Lang$!

In particular, consider some language $\Lang \in \classP$\footnote{Note that while this language is by definition easy, that doesn't matter for the purposes of our counterexample; we are simply trying to show that \emph{the particular algorithm} $\Alg$ doesn't decide $\Lang$.} and define a \NIZK for $\Lang$ with errors $(\es, \ezk)$ as follows:
\begin{itemize}
    \item $\Gen:$ Output 
    \[\crs = 
        \begin{cases}
        0 & \text{w.p. } 1-\ezk, \\
        1 & \text{w.p. } \ezk.
        \end{cases}\]
    \item $\Pro (\crs)$ outputs as proof a bit $\pi = \crs$.
    \item $\Ver$ decides $x$ on its own and if $x \in \Lang$, $\mathsf{V}$ outputs
    \[\mathsf{V}(\crs,\pi,x) = 
        \begin{cases}
        1 & \text{if }  \pi = \crs, \\
        0 & \text{if }  \pi \neq \crs. 
        \end{cases}\]
    If $x \not \in \Lang$, $\mathsf{V}$ outputs
    \[\mathsf{V}(\crs,\pi,x) = 
        \begin{cases}
        1 & \text{w.p. }  \es, \\
        0 & \text{w.p. }  1-\es.
        \end{cases}\]
    \item $\Sim(x)$ outputs \[(\crs, \pi) = 
    \begin{cases}
        00 &  \text{w.p. }1-\ezk,\\ 
        01 &  \text{w.p. }\ezk.
        \end{cases}\]
%        probability $1 - \ezk$ and $01$ with probability $\ezk$.
    %- \delta$, and $10$ with probability $\delta$, where $\delta$ is taken to be inverse exponentially small (say $2^{-|x|})$.\footnote{For simplicity in the following we will treat $\delta$ as 0; the only reason to have $\delta$ at all is so that the behavior of $\Sim$ conditioned on the \CRS being 1 is well-defined.} 
\end{itemize}

One can easily verify that the soundness error is exactly $\es$ and the zero-knowledge error is exactly $\ezk$ by construction. 

But now consider the behavior of the algorithm $\Alg$ from \cite{OstWig} if instantiated with this \NIZK (assuming, for simplicity, that the UE has no errors):
%\footnote{Below, we assume that the UE has no errors; allowing (small but adversarial) errors in the UE can only make things worse.} :
\begin{itemize}
    \item If $x \not \in \Lang$, $\Ver$ will accept with probability $\es$ regardless of what \CRS and proof it is given, so $\Alg$ will accept with the same probability.
    \item If $x \in \Lang$, note that $\Alg$ can never sample the proof to be $1$ if the \CRS is $1$, since $\Sim$ never outputs $11$.\footnote{Technically, $\Sim$ can never output a CRS of 1, and so the behavior of $\Alg$ if it samples 1 from $\Gen$ is undefined.  Informally, we can say that if the distribution of $\Sim$ is undefined, $\Alg$ simply fails.  Alternatively, we can fix this formally by having $\Sim$ output $10$ with some tiny probability $\delta$, where we set $\delta$ to be small enough to not affect our analysis meaningfully.}  Thus, the only way for $\Alg$ to accept is if it samples the \CRS as $0$ (which happens with probability $1 - \ezk$ from $\Gen$) and the proof as $0$ (which happens with probability $(1 - \ezk)$ from $\Sim$ conditioned on the \CRS being $0$). Thus, $\Alg$ will only accept $x \in \Lang$ with probability $(1 - \ezk)^2$.
\end{itemize}
Thus, at a minimum, we need to assume that $(1 - \ezk)^2$ is noticeably larger than $\es$ in order for the \cite{OstWig} algorithm to have any chance of working.  This rules out the possibility of a better analysis of the same algorithm giving us the optimal result.\footnote{As an explicit example, the parameter setting of $\ezk = 0.5$ and $\es = 0.25$ is non-trivial, but the \cite{OstWig} algorithm would accept $x$ with probability $0.25$ regardless of if $x \in \Lang$ or not.}

Additionally, the same counterexample also rules out the possibility of getting a tight result from the modified decision algorithm $\Alg$ in \cite{CHK}.  Indeed, if the parameters are set such that the \CRS check rejects when the \CRS is 0, $\Alg$ will \emph{never} accept $x \in \Lang$.  If instead the \CRS check accepts when the \CRS is 0, then it accepts \emph{every} \CRS\footnote{Note that $\Sim$ outputs the \CRS as $1$ with probability \emph{smaller} than $\Gen$ does, so the \CRS check can never reject this.}, and thus (for this particular \NIZK) the algorithm from \cite{CHK} is functionally equivalent to that from \cite{OstWig}!

\subsection{Repetition to the Rescue}
\label{sec:newover}

In order to improve on prior techniques, it will help to first understand \emph{why} the above counterexample is bad for them.  We note that the counterexample hinges on the fact that on $\crs = 0$ the simulator always has the \emph{ability} to output a \good proof (i.e. $0$), but still chooses to output a \bad proof (i.e. $1$) some of the time.
Here by \good we mean a proof that is statistically close to a real proof, and by \bad we mean a proof that is far from a real proof. This causes the distribution used by $\Alg$ to have $2\ezk$ error (once due to the distance between the real and simulated \CRS and once due to the distance between the real and simulated proofs \emph{conditioned on} a particular \CRS), while still keeping the real and simulated (\CRS, proof) distributions $\ezk$-close overall.

It turns out that this type of simulator behavior is the \emph{only} reason prior techniques don't work with protocols where $\es + \ezk \approx 1$.  In particular, for every fixing of $x \in \Lang$, if every \CRS was either \good (i.e., the simulator conditioned on this \CRS \emph{always} outputs an accepting proof for $x$) or \bad (i.e., the simulator conditioned on this \CRS \emph{never} outputs an accepting proof for  $x$), then the algorithm $\Alg$ from \cite{OstWig} {\em will work} only assuming that $\es + \ezk < 1$!  This is because on any \bad \CRS output by $\Gen$, the real prover is guaranteed to output an accepting proof whereas the simulator never does. This allows us to infer that $\Gen$ must output a \good \CRS with probability at least $1 - \ezk$, and thus $\Alg$ will accept $x \in \Lang$ with the same probability.

Of course, we cannot in general assume that the simulator is guaranteed to always output \good proofs on certain {\CRS}es and always output \bad proofs on others. But, we can hope to mimic a similar effect by sampling \emph{many} simulated proofs for the same \CRS and checking whether {\em they all} reject. Our updated definition of a $\good$ \CRS is one where (with overwhelming probability) {\em at least one} of the simulated proofs that is sampled is accepted.\footnote{We note that \cite{Ps05} use a similar idea, where they categorize the $\crs$ according to whether there \emph{exists} an accepting proof relative to it.  However, their analysis is limited to the negligible-error setting and doesn't seem to naturally generalize to the interactive setting. }
%\footnote{We note that \cite{Ps05} also categorize the $\crs$ according to whether there \emph{exists} an accepting proof relative to it, however their analysis is limited to the negligible-error setting. (We thank an anonymous STOC reviewer for pointing this out.) On the other hand, our presentation here more easily generalizes to the case of arguments as well as the interactive setting.}
%the categorization in this way would limit us to the case of statistical soundness, as well as not generalizing as naturally to the interactive case.}

In particular, we fix an appropriate polynomial $p(\cdot)$ and consider a modified algorithm $\Alg'$ that repeats the %UE step from the 
\cite{OstWig} algorithm $p(|x|)$ times (all with respect to the same \CRS, but reverse sampling fresh randomness each time) and accepts if \emph{any} of the repetitions give an accepting proof.
\begin{itemize}
    \item For $x \not \in \Lang$, $\Alg'$ still needs to find an accepting proof relative to a real \CRS in order to accept.  Since repeatedly running the~\cite{OstWig} algorithm is a valid strategy to break soundness of the \NIZK, we have that the probability that even one of the proofs accept is at most $\es$.
    \item For $x \in \Lang$, we call a \CRS \good if $\Sim$ conditioned on outputting \CRS gives an accepting proof for $x$ with probability $\geq \frac{1}{p(|x|)}$ and \bad otherwise.
    %\footnote{In the formal proof, we will actually consider the probability that \emph{UE} run on the simulator finds an accepting proof.  However, for the sake of simplicity in the overview, we will phrase this as the true conditional distribution of $\Sim$'s output instead.}. 
    Very roughly, as long as $p(|x|)$ is large enough, we can as above argue that $\Gen$ can only output a \bad \CRS with probability (approximately) $\ezk$, since doing so contributes directly to the zero-knowledge error of the \NIZK\footnote{In more detail note that $\Sim$ will almost never output a \bad \CRS with an accepting proof, since by definition the probability $\Sim$'s proof is accepting conditioned on the \CRS being bad is at most $\frac{1}{p(|x|)}$.  Since we are assuming perfect completeness however, $\Pro$ will always output an accepting proof, even when the \CRS from $\Gen$ is \bad.  Thus, if $\Gen$ outputs a \bad \CRS with probability much higher than $\ezk$, we can break zero-knowledge with a distinguisher that checks if the \CRS is \bad but the proof is still accepting. Here, note that we can (probably approximately) estimate whether a \CRS is \good or \bad \emph{efficiently} by using UE to sample many possible simulated proofs relative to it, and checking what fraction of them are accepting.}.
    And if $\Gen$ outputs a \good \CRS, $\Alg'$ will find an accepting proof for $x$ with overwhelming probability.  Thus, up to small errors, $\Alg'$ will accept $x \in \Lang$ with probability at least $1 - \ezk$. 
\end{itemize}

As in prior analysis, we just need the probability that $\Alg'$ accepts $x \in \Lang$ to be noticeably larger than the probability it accepts $x \not \in \Lang$.  Rearranging terms, this only requires that $\es + \ezk$ be noticeably smaller than 1---meaning we can work with \emph{any} non-trivial \NIZK!

We have swept some technical details under the rug, but the above ideas coarsely suffice to obtain \OWFs from non-trivial \NIZKs.
In the following section, we discuss barriers and methods to extend these ideas to the interactive setting.

\subsection{Generalizing to Interactive Zero Knowledge}
\label{sec:izk}
%While the above fully resolves the relationship between \NIZKs and one-way functions, we would ideally like to have a matching result for interactive zero-knowledge.  
As discussed in the introduction, our primary goal is to obtain an implication from non-trivial {\em honest-verifier} \ZK to \OWFs, since this gives us the strongest possible theorem statement.
First, we note that techniques from \cite{OstWig} naturally extend to the interactive case, but get further from the desired result as the number of rounds increases: to get a one-way function from a $k$ round protocol, they require $\es + k\ezk < 1$. Similarly, simple extensions of the~\cite{CHK} techniques require $(\es)^{1/k} + \ezk < 1$.

A natural goal is therefore to see if we can extend the techniques that gave us a tight result for \NIZKs to the interactive setting, and more specifically, eliminate the dependence on $k$ in the above constraints.

\iffalse{
At first, it might seem as though our new techniques can't be applied to interactive zero-knowledge at all, since they require us to check if a prover message leads to an accepting versus a rejecting proof.  In the interactive case, this check is ill-defined as the verifier only decides whether to accept at the very end of the interaction; if we wait to apply our techniques to the very last prover message where this isn't a problem, it may be that \emph{no} message can change the outcome!\footnote{In particular, given any interactive protocol one can always construct a related protocol that simply adds a ``dummy'' round at the end.  Nothing we do with just the last round of such a modified protocol can possibly help us in any way.}

In order to generalize our techniques to the interactive setting, it will help to first rephrase slightly what we are doing in the non-interactive case.  Instead of viewing our algorithm $\Alg$ as generating many possible proof strings $\prf$ and checking if any of them would be accepted (which would be ill-defined if we tried to apply it part way through an interactive protocol), we can instead treat $\Alg$ as creating many possible proof strings $\prf$ and keeping only the one that maximizes the probability that $\Ver$ accepts.  Since that probability is always either 0 or 1 (corresponding to $\Ver$ rejecting or accepting $\prf$), this new formulation is equivalent to the original!
}\fi

In order to generalize our techniques to the interactive setting, it will help to take a slightly different view of our proof in the non-interactive case.  We can treat our algorithm $\Alg$ as defining a malicious prover $\widetilde{\Pro}$, then simulating a run of $\widetilde{\Pro}$ and $\Ver$ to decide whether to accept $x$.  In the analysis above, we treated $\widetilde{\Pro}$ as using UE to sample many possible simulated proof strings, and outputting an accepting one if found.  But note that we equally well could phrase $\widetilde{\Pro}$ as choosing whichever sampled proof string has the maximum probability of being accepted. Phrased this way, we can now begin to extend this idea to the interactive case.  %

\paragraph{An Inefficient Prover.}
We would like to define a malicious prover $\widetilde{\Pro}$ such that our algorithm $\cA$ to decide $\Lang$ can simply simulate an interaction between $\widetilde{\Pro}$ and $\Ver$ on $x$ and output the result. Unlike the non-interactive case, $\widetilde{\Pro}$ can't immediately check if the message it is considering would lead to $\Ver$ accepting or rejecting, since this decision may depend on randomness sampled at a later point in the protocol.  Thus, all we can hope for is to have $\widetilde{\Pro}$ choose the ``best'' message available to it, which means choosing the message that maximizes the probability that $\Ver$ accepts \emph{if we continue to apply the prover strategy $\widetilde{P}$ in future rounds}. For now, we will assume that $\widetilde{P}$ can (inefficiently) find such messages, and we will later see how to make $\widetilde{P}$ efficient.

More formally, suppose for simplicity that we are starting from a protocol that satisfies statistical soundness and zero-knowledge.  We can (recursively) define $\widetilde{\Pro}$ that, given $x$ and a partial transcript $\transcript$, considers all possible next messages $m$ in the support of $\Sim$
%\footnote{Technically, this will be ill-defined if no full transcript in the support of $\Sim$ begins with $\transcript$.  In the formal proof, we get around this issue by considering the support of the UE \emph{applied to} $\Sim$.  For simplicity in the overview though we will ignore this detail.} 
and outputs whichever one maximizes the probability of $\Ver$ accepting $x$ if it interacts with $\widetilde{\Pro}$ starting from transcript $\transcript \| m$. Here, defining this success probability based \emph{only} on the transcript and not on $\Ver$'s internal state limits us to only working with public-coin protocols.

Let $\Alg(x)$ be the algorithm that runs $\widetilde{\Pro}$ and $\Ver$ on input $x$ and outputs $\Ver$'s decision bit; we can now argue that $\Alg$ correctly decides $\Lang$.
\begin{itemize}
    \item If $x \not \in \Lang$, statistical soundness tells us that no prover can convince $\Ver$ with probability greater than $\es$, and so in particular $\widetilde{\Pro}$ cannot.  Thus, $\Alg(x)$ will accept with probability at most $\es$.

%    \item If $x \in \Lang$, we can argue that $\widetilde{\Pro}$ performs ``almost as well as'' the honest prover $\Pro$.  In particular, we first note that the probability that $\Pro$ outputs \emph{any} message outside the support of $\Sim$ is bounded by $\ezk$ from statistical zero-knowledge.\footnote{Importantly, note that here we are applying zero-knowledge \emph{globally} to the entire transcript instead of on a message-by-message basis.  This is the key reason why we avoid the $O(k)\ezk$ term that the analysis from \cite{OstWig} has.}  Additionally, since $\widetilde{\Pro}$ always chooses the \emph{best} message in the support of $\Sim$, we know that the probability that $\widetilde{\Pro}$ convinces $\Ver$ to accept is at least the probability that $\Pro$ convinces $\Ver$ \emph{conditioned on $\Pro$ always choosing messages from the support of $\Sim$}.  Combining these two facts, we get that the probability that $\widetilde{\Pro}$ convinces $\Ver$ can be no more than $\ezk$ smaller than the probability that $\Pro$ convinces $\Ver$. 
%    Since in the overview we are assuming perfect completeness, this tells us that $\Alg(x)$ will accept with probability at least $1 - \ezk$. 

    \item If $x \in \Lang$, we can argue that $\widetilde{\Pro}$ performs ``almost as well as'' the honest prover $\Pro$.  In particular, we note that since $\widetilde{\Pro}$ always chooses the best possible message in the support of $\Sim$, the only way for $\Pro$ to choose a better message is if it sends a message that $\Sim$ never would.  But from statistical zero-knowledge, we know that this can happen with probability at most $\ezk$!\footnote{Importantly, we are applying zero-knowledge \emph{globally} to the entire transcript instead of on a message-by-message basis.  This is the key reason why we avoid the $O(k)\ezk$ term in the analysis from \cite{OstWig}.} Thus, the probability that $\widetilde{\Pro}$ convinces $\mathsf{V}$ is at least $(1-\ezk)$, which tells us that $\cA$ accepts with probability at least $(1-\ezk)$.
    
    %Formally, we have that the probability that $\widetilde{\Pro}$ convinces $\Ver$ is at least the probability that $\Pro$ convinces $\Ver$ \emph{conditioned on the output transcript being in the support of $\Sim$}.  But we know that this conditional probability is at least the probability that $\Pro$ convinces $\Ver$ minus the probability that $\Pro$ and $\Ver$ create a transcript outside the support of $\Sim$; applying (perfect) completeness and statistical zero-knowledge, this is at least $1 - \ezk$.
\end{itemize}

As in the \NIZK case, this immediately gives us an algorithm for $\Lang$ as long as $\es + \ezk$ is noticeably smaller than 1.

\paragraph{Making $\widetilde{\Pro}$ Efficient.}

So far, we required $\widetilde{\Pro}$ to find the \emph{best} possible next prover message in the support of $\Sim$, which may not be efficient to compute.
%\footnote{For example, if $\Sim$ has a small probability of outputting a fully random transcript (so that every message is in its support), this would require $\widetilde{\Pro}$ to implement the optimal prover strategy!}  
This inefficiency meant %we were stuck working with statistically sound protocols---and more importantly, 
that our eventual decision algorithm $\Alg$ is inefficient, which is a problem. 
To fix these issues, we need to modify how we define $\widetilde{\Pro}$ so that it has a polynomial run time, but still works well enough to argue that $\Alg(x)$ will accept with good probability when $x \in \Lang$.

Taking inspiration from the \NIZK case, we can have $\widetilde{\Pro}$ use UE to sample many possible next messages from $\Sim$, and simply choose the best message \emph{among those it sampled}.  While this won't guarantee that we get the best \emph{possible} next message, it does guarantee that (with overwhelming probability) we get a message that is ``one of the best''.  In particular, suppose we order all the messages in the support of $\Sim$ by the probability that $\widetilde{\Pro}$ will convince $\Ver$ if it chooses that message next.  For any $\delta$, if we have $\widetilde{\Pro}$ sample $\frac{n}{\delta}$ many possible next messages, it will sample at least one in the top $\delta$ fraction of this order with overwhelming probability. 
%\footnote{Formally, the probability that it \emph{never} samples a message in the top $\delta$ fraction is $(1 - \delta)^{n/\delta} \leq e^{-n}$.} 
\iffalse{The above analysis is robust enough to handle this approximation error as long as $\delta$ is a sufficiently small inverse polynomial in $n$.\footnote{We do have to modify the distinguisher we use slightly so that instead of checking membership in the \emph{entire} support of $\Sim$, it checks if every prover message is in the bottom $(1 - \delta)$ fraction of the support.  Note that for a transcript from $\Sim$, this will hold with probability all but $\delta k$ by a union bound.} }\fi

There is however one catch to this strategy that only arises in the interactive case: in order to choose the best message among those sampled, $\widetilde{\Pro}$ has to be able to compute the probability that each message will lead to it convincing $\Ver$.  Note though that as long as $\Ver$ is public coin (and so stateless), $\widetilde{\Pro}$ can get a very good estimate of this probability, as $\widetilde{\Pro}$ can simulate an interaction between itself and $\Ver$ ``in its own head''.  A Chernoff bound says that $\widetilde{\Pro}$ only needs to run $\poly(n, \frac{1}{\delta})$ such simulations in order to (with overwhelming probability) estimate its success probability to within an additive $\delta$ factor for any $\delta$. Note, however, that since $\widetilde{\Pro}$ has to simulate itself a polynomial number of times on a transcript that is only one message longer, its run time will grow by a \emph{multiplicative} factor of $\poly(n)$ for each round.  This will give us a final run time of $\poly(n)^k$, which constrains us to only work with protocols where $k$ is a constant.

\iffalse{If we plug this estimation procedure into $\widetilde{\Pro}$ as described above, we have that with overwhelming probability, it always chooses a prover message whose success probability is within $2\delta$ of a message in the top $\delta$ fraction of the support of $\Sim$.  Our analysis is still robust even to this form of approximation error as long as $\delta$ is chosen to be a sufficiently small inverse polynomial.} \fi

Putting this all together, we get an \emph{efficient} $\widetilde{\Pro}$ that with overwhelming probability chooses a next message that has success probability at most $2\delta$ smaller than some message in the top $\delta$ fraction of messages in the support of $\Sim$.  We claim that as long as $\delta$ is set to a sufficiently small inverse polynomial, our prior analysis is robust enough to work with these weaker guarantees on $\widetilde{\Pro}$.

In more detail, note that all we have to change in the above analysis is the argument that $\Pro$ can't pick a ``better'' message than $\widetilde{\Pro}$ too commonly.  Since our weaker guarantee on $\widetilde{\Pro}$ is that it picks a message (approximately) in the top $\delta$ fraction of possible messages from the support of $\Sim$, we just have to bound the probability that $\Pro$ picks a message that is better than a $(1 - \delta)$ fraction of the support of $\Sim$. 

By a union bound over each round, the probability that $\Sim$ ever outputs such a message is at most $\delta k$, so by (statistical) zero-knowledge, the probability that a real transcript contains such a message is at most $\ezk + \delta k$.  By our guarantees on $\widetilde{\Pro}$, we know that it loses at most $2\delta$ success probability in each round compared to $\Pro$ conditioned on not picking such a ``too good'' message.  Thus, since $\Pro$ will always convince $\Ver$ to accept $x \in \Lang$, we have that $\widetilde{\Pro}$ convinces $\Ver$ with probability at least $1 - \ezk - 3\delta k$. 

Finally, in order for $\Alg$ to be correct, we will need $\es$ to be noticeably smaller than $1 - \ezk - 3\delta k$.  As long as $\es + \ezk$ is noticeably smaller than 1, we can pick an appropriate inverse-polynomial $\delta$ that makes this happen.

\subsubsection{Working with Computational Zero-Knowledge}

Up until this point, we've been applying zero-knowledge with respect to a distinguisher $\cD$ that \emph{perfectly} answers questions about the support of $\Sim$---in particular, one that checks if a given prover message better than the bottom $(1 - \delta)$ fraction of the support of $\Sim$. 
Here recall that a message $a$ is better than $b$ if the probability that $\widetilde{P}$ convinces $\mathsf{V}$ is higher if $\widetilde{P}$ chooses $a$ in its next round, compared with $b$. 

If we want to work with protocols that only have computational zero-knowledge, we need to come up with an efficient distinguisher that still works well enough for our proof.

Our efficient distinguisher will try to {\em approximate} the fraction of messages from $\Sim$ that a given message is better than.
To do this, it will sample a large number of possible next messages from the support of $\Sim$ and compare them to the given message.  Formally, it will sample $\poly(n, \frac{1}{\delta})$ such messages, and check if at most a $2\delta$ fraction of them are better than the given prover message.
By a Chernoff bound, we have that if the given prover message is in fact better than a $(1 - \delta)$ fraction of all possible messages in the support of $\Sim$, this check will pass with overwhelming probability.

As in the previous section, we note that while we may not be able to compute the \emph{exact} success probability efficiently, the same estimation procedure can with overwhelming probability get us within an additive factor of $\delta$ for any inverse-polynomial $\delta$.  Thus, except with negligible probability, we know that our distinguisher will say ``real'' if \emph{any} prover message in its input transcript is at least $2\delta$ better than the bottom $(1 - \delta)$ fraction of messages in the support of $\Sim$. 

By a careful analysis, we can show that 
%We now claim 
that this efficient distinguisher still works for the analysis in the previous section. 
Indeed, there are only two points where we had to use properties of the distinguisher: first to argue that it rarely outputs ``real'' on a simulated transcript, and second to argue that $\widetilde{\Pro}$ does ``almost as well'' as $\Pro$ conditioned on the distinguisher saying ``simulated''.
\begin{itemize}
    \item For the former case, we claim that the efficient distinguisher says ``real'' on a transcript from $\Sim$ with probability at most $2\delta k$.  For any given round, we know that the prover message in the transcript was generated by $\Sim$, and hence comes from the same distribution as the messages the distinguisher samples.  Thus, the probability that the message in the transcript is in the top $2\delta$ fraction of the sampled messages is just $2\delta$, regardless of what procedure we use to compare them.

    \item For the latter case, we note that the only change between this section and the last is that the efficient distinguisher may allow $\Pro$ to have messages that have up to $2\delta$ higher success probability than the bottom $(1 - \delta)$ fraction of messages in the support of $\Sim$.  Since in the last section we already allowed $\widetilde{\Pro}$ to output messages that have up to $2\delta$ \emph{lower} probability than the top $\delta$ fraction of messages, we essentially are just doubling the error from there. 
\end{itemize}
In both cases, we see that setting $\delta$ to be half as large as the setting of $\delta$ in the previous section suffices to make the same analysis go through. 

\subsubsection{Working with Private Coins}

The above proof critically relies on the zero-knowledge protocol we start with being public-coin so that we can efficiently simulate $\Ver$ starting from an arbitrary transcript and thus estimate the probability of it accepting.  However, we note that at least intuitively, we shouldn't have to make this assumption: if $\Ver$ is able to computationally hide information about its randomness in a private-coin protocol, intuitively the map from $\Ver$'s randomness to its messages is a one-way function.

We formalize this intuition in Appendix \ref{sec:private-to-public}, showing that either infinitely-often auxiliary-input uniform-secure one-way functions exist or every constant-round private-coin protocol can be converted into an equivalent public-coin protocol. 

We note that, unlike in the case of public-coin protocols, we obtain \textit{auxiliary-input} one-way functions. This is because in our transformation, in order to achieve soundness the verifier has to resample its randomness such that it is consistent with the full transcript. Since the verifier does not have access to the malicious prover's code, there does not immediately seem to be a way to construct an efficient sampler such that UE on that sampler will allow us to do this.  We get past this by allowing the sampler to take the transcript as auxiliary input, but this limits our result to achieving auxiliary-input one-way functions. Since this auxiliary input comes from the prover messages and not from the worst-case instance, switching to average-case hardness does not seem to improve the situation. Nonetheless, our results are sufficient to rule out information-theoretic constructions of (constant-round) non-trivial ZK even in the private-coin setting.

\section{Preliminaries}

\subsection{Notation}

We define the following notation:
\begin{itemize}
    \item $[n]$ represents the set $\{ 1, 2, \ldots, n \}$ and $[n]_0$ represents the set $\{ 0, 1, \ldots, n \}$.
    \item We call a function $\mu$ negligible if it is smaller than any inverse polynomial, that is if for every polynomial $p$ there exists an $n_0 \in \mathbb{N}$ such that for all $n \geq n_0$, $\mu(n) < \frac{1}{p(n)}$.  We let $\negl$ denote an unspecified negligible function.
    \item We call a function $\mu$ noticeable if it is asymptotically larger than some inverse polynomial, that is if there exists a polynomial $p$ and $n_0 \in \mathbb{N}$ such that for all $n \geq n_0$, $\mu(n) \geq \frac{1}{p(n)}$.
    \item $a <_n b$ represents that $a$ is noticeably less than $b$, that is that there exists a polynomial $p$ and $n_0 \in \mathbb{N}$ such that for all $n \geq n_0$, $a(n) < b(n) - \frac{1}{p(n)}$.
\end{itemize}

\subsection{Uniform and Non-Uniform Machines}
Throughout this paper, our protocols will have (honest) participants that can be described as uniform Turing machines, whereas adversaries will (w.l.o.g.) be non-uniform Turing Machines, or equivalently, families of polynomial-sized circuits.

\subsection{One-Way Functions}

\begin{definition} [One-way Functions] \label{def:owf}
    Fix an alphabet $\auxAlph$.  We say that a family of functions $\{ f_x \}_{x \in \auxAlph^*}$ with $f_x : \zo^{m_i(|x|)} \to \zo^{m_o(|x|)}$ is one-way if
    \begin{itemize}
        \item \textbf{[Efficiency]} There exists a Turing machine $M$ and polynomial $p$ such that for all $x \in \Sigma^*$ and $r \in \zo^{m_i(|x|)}$, $M(x, r)$ outputs $f_x(r)$ in time at most $p(|x|)$
        
        \item \textbf{[One-Wayness]} For all non-uniform PPT machines $\Alg$, all polynomials $p$, and all sufficiently large $n$, there exists an $x \in \Sigma^n$ with
        \begin{equation*}
            \P[f_x(\Alg(x, f_x(r))) = f_x(r) | r \samp \zo^{m_i(|x|)}] \leq \frac{1}{p(n)}
        \end{equation*}
    \end{itemize}
\end{definition}

\begin{definition} [Weak One-way Functions] \label{def:wowf}
    Fix an alphabet $\auxAlph$.  We say that a family of functions $\{ f_x \}_{x \in \auxAlph^*}$ with $f_x : \zo^{m_i(|x|)} \to \zo^{m_o(|x|)}$ is weakly one-way if
    \begin{itemize}
        \item \textbf{[Efficiency]} There exists a Turing machine $M$ and polynomial $p$ such that for all $x \in \Sigma^*$ and $r \in \zo^{m_i(|x|)}$, $M(x, r)$ outputs $f_x(r)$ in time at most $p(|x|)$
        
        \item \textbf{[Weak One-Wayness]} For all non-uniform PPT machines $\Alg$ there exists a polynomial $p$ such that for sufficiently large $n$ there exists an $x \in \Sigma^n$ with
        \begin{equation*}
            \P[f_x(\Alg(x, f_x(r))) = f_x(r) | r \samp \zo^{m_i(|x|)}] \leq 1 - \frac{1}{p(n)}
        \end{equation*}
    \end{itemize}
\end{definition}

\begin{remark}
    A few remarks about these definitions are in order:
    \begin{itemize}
        \item The definition above can be used to define both (standard) one-way functions, and one-way functions with auxiliary input. If we take $\auxAlph = \zo$, these definitions correspond to (weak) one-way functions with auxiliary inputs; if we take $\auxAlph = \{ 1 \}$, these definitions correspond to (weak) one-way functions with no auxiliary input.  In this paper, if we refer to an ``auxiliary input (weak) one-way function'', we mean one with respect to $\auxAlph = \zo$; if we don't otherwise specify $\auxAlph$, we default to $\auxAlph = \{ 1 \}$.  A similar remark applies to all our definitions that make reference to $\auxAlph$.

        \item Since we require that $M$ is computable in time polynomial in $|x|$ (as opposed to in the size of its full input), we can assume WLOG that $m_i$ and $m_o$ are polynomially-bounded functions.
    \end{itemize}
\end{remark}

While weak one-way functions may at first seem much weaker than one-way functions, the two notions are existentially equivalent.

\begin{theorem}[\cite{Yao}]\label{thm:wowf-vs-owf}
    Fix an alphabet $\auxAlph$.  One-way functions exist with respect to $\auxAlph$ if and only if weak one-way functions do.  
\end{theorem}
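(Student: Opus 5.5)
The statement to prove is Yao's weak-to-strong amplification: for a fixed alphabet $\auxAlph$, weak one-way functions exist if and only if one-way functions exist. One direction is immediate. A one-way function $\{f_x\}$ fails to be inverted with probability more than $1/p(n)$ for every polynomial $p$, on some $x \in \auxAlph^n$. In particular, taking $p(n) = 2$ shows it is weakly one-way with the constant polynomial $2$.

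For the converse, I use the direct-product construction. Given a weak one-way family $\{f_x\}$, I define
\[F_x(r_1,\dots,r_t) = \bigl(f_x(r_1),\dots,f_x(r_t)\bigr),\]
where $t = t(|x|)$ is a polynomial fixed later. Efficiency is clear, since $F_x$ is computable in time $t(|x|)\cdot p(|x|)$ and its input and output lengths stay polynomial in $|x|$. Keeping the same auxiliary input $x$ in every coordinate makes the construction work uniformly for both $\auxAlph = \zo$ and $\auxAlph = \{1\}$.

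For one-wayness I argue by contradiction. Suppose a non-uniform PPT $\Alg$ inverts $F_x$ with probability at least $1/q(n)$ for every $x \in \auxAlph^n$, for infinitely many $n$. From it I build an inverter $\cB$ for $f_x$. On input $(x,y)$, $\cB$ repeats the following $\poly(n)$ times: pick a random coordinate $i \in [t]$, sample fresh $r_j$ for $j \neq i$, plant $y$ in coordinate $i$, run $\Alg$, and check whether the $i$-th output component is a valid preimage of $y$.

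The main step, and the expected obstacle, is showing $\cB$ succeeds on all but a $1/(2p(n))$ fraction of images. This contradicts weak one-wayness, since weak one-wayness must hold for every $x$ of length $n$ when $\Alg$ is good on all $x$; so I should keep the quantifiers per-$x$ and let the polynomial $p$ be the one attached to $\cB$.

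For this I follow Yao's standard analysis. Call $y$ ``bad'' if, for a random planted position, $\Alg$ inverts with probability below $1/(2tq)$ over the other coordinates. If the bad set had measure at least $1/(2p)$, take $t = 2pn$. Then all $t$ coordinates avoid the bad set with probability at most $(1-1/(2p))^t \le e^{-n}$. Whenever some coordinate is bad, a union bound over which coordinate is bad shows $\Alg$ succeeds with probability at most $t \cdot 1/(2tq) = 1/(2q)$. So $\Alg$'s total success is at most $e^{-n} + 1/(2q) < 1/q$, a contradiction. Hence the bad set has measure below $1/(2p)$.

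On good $y$, $\poly(n)$ repetitions with $2ntq$ trials drive the failure probability to $e^{-n}$. The total inversion rate of $\cB$ is therefore at least $1 - 1/(2p) - e^{-n} > 1 - 1/p$, contradicting weak one-wayness. One subtlety is the order of quantifiers, since $p$ depends on the inverter. I would fix $p$ from the weak one-wayness guarantee for this specific $\cB$, noting that $\cB$'s code depends only on $\Alg$, $q$ and $t$. To avoid circularity, the cleanest route is the standard trick of choosing $t = n \cdot p'(n)$ for the polynomial $p'$ attached to $\cB$ with $t$ treated as a parameter, or simply invoking the textbook version of this argument.
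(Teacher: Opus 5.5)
Your direct-product argument is the standard Yao amplification, which is what the paper relies on when it cites \cite{Yao}. Its core analysis (threshold $\frac{1}{2tq}$, $t=2np(n)$, $2ntq$ planted trials) is sound under the usual definition.

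One step needs to be stated more carefully. Write $s_i(y)$ for the probability that $\Alg$ succeeds when $y$ is planted at position $i$. For a bad $y$, an individual $s_i(y)$ need not be below $\frac{1}{2tq}$; only their average is. The bound $\frac{1}{2q}$ comes from the fact that all $y_i$ have the same marginal:
\[
\sum_i \Pr[\Alg \text{ succeeds} \wedge y_i \text{ bad}] = \sum_{y \text{ bad}} \Pr[y] \sum_i s_i(y) < \frac{\Pr[\text{bad}]}{2q}.
\]

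The real problem is the quantifier issue you raise at the end, and your proposed fix is circular. $F$, and hence $t$, must be fixed before any adversary is quantified. $\Alg$ is an inverter for that particular $F$, and $\cB$ is built from $\Alg$. So the polynomial that Definition~\ref{def:wowf} assigns to $\cB$ depends on $t$, since that definition, read literally, says ``for all $\Alg$ there exists $p$''. You therefore cannot set $t = n\cdot p'(n)$ with $p'$ the polynomial attached to $\cB$. Treating $t$ as a parameter does not help either, because each $t$ gives a different $F$ with its own inverter.

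Under that literal reading, no single polynomial $t$ can be justified. The hypothesis allows an $f$ that time-$n^c$ adversaries can invert except with probability roughly $n^{-c}$, and no better. Each adversary then gets its own $p$, so the literal hypothesis holds. Yet for $t=n^{c_0}$, a still polynomial-time adversary with per-coordinate failure $n^{-c_0-1}$ inverts $F$ coordinatewise with probability at least $1-1/n$. So the direct product with a fixed $t$ cannot, by itself, prove the literal statement.

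The correct resolution is the standard order ``there exists $p$ such that for all $\Alg$,'' which is what Yao's theorem asserts. Then $t(n)=2np(n)$ is chosen up front. Your conclusion that $\cB$ inverts with probability above $1-\frac{1}{p(n)}$ on every $x\in\auxAlph^n$, for infinitely many $n$, contradicts this adversary-independent $p$, and the proof is complete. This is also all the paper needs: in the proof of Lemma~\ref{lem:lmp-reduction-to-owf}, $g$ is shown to resist every inverter with the fixed constant $p=4$.
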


\subsection{Universal Extrapolation}

\begin{definition}[Sampler]
    Fix an alphabet $\auxAlph$.  We say that a probabilistic polynomial time machine $M$ is a sampler from $m_i$ input bits to $m_o$ output bits if for all $x \in \auxAlph^*$, $M(x)$ uses at most $m_i(|x|)$ bits of randomness and for all $r \in \zo^{m_i(|x|)}$, $M(x; r) \in \zo^{m_o(|x|)}$.
\end{definition}

\begin{remark}
    Note that $m_i$ and $m_o$ can WLOG be bounded by the run time of $M$; since $M$ runs in time polynomial in $|x|$, we can freely assume that $m_i$ and $m_o$ are polynomially-bounded functions of $n$.
\end{remark}

\begin{definition}[Infinitely-Often Universal Extrapolation] \label{def:io-ue}
    Fix an alphabet $\auxAlph$.  We say that Infinitely-Often Universal Extrapolation (io-UE) is possible if for every sampler $M$ from $m_i$ input bits to $m_o$ output bits and all polynomials $p$, there exists a non-uniform PPT machine $N$ such that for infinitely many $n$, the following two distributions are $\frac{1}{p(n)}$-close in statistical distance for every $x \in \auxAlph^n$:
    \begin{enumerate}
        \item Sample $r \samp \zo^{m_i(n)}$ and output $r \| M(x;r)$
        \item Sample $r \samp \zo^{m_i(n)}$ and output $N(x, M(x;r)) \| M(x;r)$
    \end{enumerate}
\end{definition}

\begin{theorem}[\cite{ImpLev}] \label{thm:owf-vs-ioue}
    Fix an alphabet $\auxAlph$.  io-UE is possible with respect to $\auxAlph$ if and only if one-way functions do \emph{not} exist with respect to $\auxAlph$. 
\end{theorem}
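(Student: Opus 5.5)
The plan is to prove the two directions separately, working throughout with a fixed alphabet $\auxAlph$ so the same argument covers both the plain and the auxiliary-input setting.

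\textbf{One-way functions rule out io-UE.} Suppose $\{f_x\}$ is one-way with respect to $\auxAlph$.
\begin{enumerate}
\item View $M(x;r)=f_x(r)$ as a sampler from $m_i$ to $m_o$ bits.
\item If io-UE held, then for the polynomial $2p$ there would be a non-uniform PPT $N$ with the following property. For infinitely many $n$ and every $x\in\auxAlph^n$, the pair $N(x,f_x(r))\|f_x(r)$ is $\frac{1}{2p(n)}$-close to $r\|f_x(r)$.
\item Under the real distribution, the first component always maps to the second under $f_x$. Hence $N$ inverts $f_x$ with probability at least $1-\frac{1}{2p(n)}$ for every $x$ at these infinitely many $n$.
\item One-wayness says that for all sufficiently large $n$ some $x\in\auxAlph^n$ has inversion probability at most $\frac1{p(n)}$. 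This contradicts the previous step once $p\ge 2$.
\end{enumerate}
This direction is routine.

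\textbf{No one-way functions give io-UE.} Assume no one-way functions exist with respect to $\auxAlph$. By Theorem~\ref{thm:wowf-vs-owf}, no weak one-way functions exist either. Negating Definition~\ref{def:wowf} gives the following for any efficiently computable family $g$: there is a non-uniform PPT inverter such that, for every polynomial $q$, there are infinitely many $n$ on which it inverts $g_x$ with probability $>1-\frac1{q(n)}$ for \emph{every} $x\in\auxAlph^n$.

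Mere inversion only returns \emph{some} preimage, not a near-uniform one. So I would not apply this to $M$ directly but to the Impagliazzo--Luby hashed function
\[
g_x(r,i,h)=\bigl(M(x;r),\,i,\,h,\,h(r)_{1..i}\bigr),
\]
where $h$ ranges over a pairwise-independent family on $m_i(|x|)$ bits and $i\in[m_i(|x|)]_0$.
\begin{itemize}
\item Let $y=M(x;r)$ and let $S_y$ be its preimage set. For the index $i\approx\log|S_y|-O(\log(p(n)))$, the leftover hash lemma shows that fixing $h(r)_{1..i}$ leaves a set of consistent preimages that is nearly uniform over $S_y$.
\item Any inverter for $g_x$ therefore returns a nearly uniform element of $S_y$, provided it succeeds on all but a small fraction of $(r,i,h)$.
\item The extrapolator $N(x,y)$ cannot compute the right $i$. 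Instead it samples $h$ and bits $b$ at random, tries the relevant values of $i$, and outputs the preimage found.
\item Success on a random tuple, together with Markov's inequality over $i$, controls the statistical distance to $r\|M(x;r)$ by $\frac1{p(n)}$, after choosing $q$ large relative to $p$ and $m_i$.
\end{itemize}

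\textbf{Main obstacle.} The technical heart is this hashing step: turning worst-case-over-$x$ but average-over-$r$ inversion into near-uniform preimage sampling, with a clean statistical-distance bound. The quantifiers also need care. The inverter supplied by the negation is good on infinitely many $n$ simultaneously for all $x\in\auxAlph^n$, and the proof must only ever invoke it on $g_x$ at those same lengths. This requires padding so that the hashed function's input length is determined by $|x|$ alone, which holds since $m_i$ is polynomially bounded, so that the good lengths transfer directly to the io-UE guarantee.
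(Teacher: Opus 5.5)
Your first direction is correct, and your quantifier bookkeeping in the converse is fine. The paper gives no proof of this statement; it imports the equivalence from \cite{ImpLev}.

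The gap is the central step of the converse. At $i\approx\log|S_y|-O(\log p(n))$, the leftover hash lemma only shows that the inverter's \emph{input} $(y,i,h,z)$ with uniform $z$ is close to a real one, so the inverter will succeed. It says nothing about \emph{which} of the roughly $2^{\log|S_y|-i}=\poly(p(n))$ consistent elements of $S_y$ the inverter returns. Your inference ``any inverter for $g_x$ therefore returns a nearly uniform element of $S_y$'' is false. An inverter that always returns a canonical element of the bucket, say the smallest, is a perfectly valid inverter of $g_x$. Its outputs are heavily biased toward the small elements of $S_y$: the smallest one is returned with about $2^{\log|S_y|-i}$ times its uniform probability. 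The non-existence of one-way functions does not exclude such inverters. Take $M(x;r)=$ the first half of $r$ and $h$ affine over $\mathrm{GF}(2)$. Then Gaussian elimination with all free variables set to $0$ is an efficient inverter that never fails. Yet its output always vanishes on the at least $\log|S_y|-i$ free coordinates (for random $h$, essentially the last ones), so it is far from uniform on $S_y$. Hence an extrapolator that works at levels where a random $z$ is almost always consistent can be far from $r\|M(x;r)$. Markov's inequality over $i$ or $y$ does nothing to remove this bias.

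The missing idea is that near-uniformity has to come from \emph{uniqueness}. Work at $i\ge\log|S_y|+c$ with $c=O(\log p(n))$. By pairwise independence, the expected number of other elements of $S_y$ agreeing with $r$ on $h(\cdot)_{1..i}$ is at most $|S_y|2^{-i}\le 2^{-c}$, so any successful inversion of a real tuple must return $r$ itself. The leftover hash lemma is then only needed to say that the first $\log|S_y|-O(\log p(n))$ hash bits look uniform. The remaining $O(\log p(n))$ bits are effectively guessed, so a uniform $z$ is consistent with $S_y$ only with probability about $|S_y|2^{-i}$. The extrapolator must therefore retry, and must not drift down to low levels. For example, it can scan $i$ downward from $m_i(n)$, make polynomially many attempts with fresh $(h,z)$ per level, and output the first valid preimage found. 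The analysis then uses three facts:
(i) uniform $(h,z)$ conditioned on $S_y\cap\{r':h(r')_{1..i}=z\}\neq\emptyset$ has density at most $(1-|S_y|2^{-i})^{-1}$ times the real hashed distribution, so the inverter's average-case guarantee transfers;
(ii) conditioned on success at a given level, the output is $r$ conditioned on a high-probability event;
(iii) Markov over $y$ bounds the probability of passing level $\log|S_y|+c$ without success.
This large-$i$/small-$i$ trade-off is the heart of the Impagliazzo--Luby argument, and your sketch keeps only the small-$i$ half.
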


\subsection{Zero-Knowledge}

\begin{definition}[$(\ec, \es, \ezk)$-ZK]
    An $(\ec, \es, \ezk)$-ZK argument for an $\NP$ relation $\Rel$ is a pair of uniform PPT interactive algorithms $(\Pro, \Ver)$ such that
    \begin{itemize}
        \item \textbf{Syntax}
        \begin{itemize}
            \item $\Pro$ is given input $(x, w)$ and $\Ver$ is given input $x$.
            \item $\Pro$ and $\Ver$ interact over the course of $k$ rounds, creating a transcript $\transcript$.
            \item After the interaction, $\Ver$ outputs a single bit.  We let $\langle \Pro(w), \Ver\rangle(x)$ denote this output.
        \end{itemize}

        \item \textbf{Completeness}: For every $(x, w) \in \Rel$,
        \begin{equation*}
            \P\left[\langle \Pro(w), \Ver\rangle(x) = 1\right] \geq 1 - \ec(|x|)
        \end{equation*}

        \item \textbf{Soundness}: For any \emph{non-uniform} PPT algorithm $\widetilde{\Pro}$, we have that for $x \not \in \Lang_\Rel$ such that $|x|$ is large enough,
        \begin{equation*}
            \P\left[\langle \widetilde{\Pro}, \Ver\rangle(x) = 1\right] \leq \es(|x|)
        \end{equation*}

        \item \textbf{Zero-Knowledge}: There exists a PPT simulator $\Sim$ such that for all non-uniform PPT distinguishers $\Distinguisher$ and all $(x, w) \in \Rel$ with $|x|$ large enough,
        \begin{equation*}
            \left| \P\left[\Distinguisher(r, \transcript) = 1 \mid (r, \transcript) \gets \Sim(x)\right] - \P\left[\Distinguisher(r, \transcript) = 1 \middle| \begin{matrix}
                r \samp \zo^m \\
                \transcript \gets \langle \Pro(w), \Ver \rangle (x;r)
            \end{matrix}\right] \right| \leq \ezk(|x|)
        \end{equation*}
        where we let $m$ be the number of random bits used by $\Ver$ and $\langle \Pro(w), \Ver \rangle (x;r)$ denote the transcript generated by $\Pro$ and $\Ver$ interacting, fixing $\Ver$'s random tape to $r$.
    \end{itemize}
\end{definition}

\begin{remark}
    We say a ZK argument for $\Lang$ is \emph{public-coin} if the every message from $\Ver$ is simply a fresh random string.  In particular, note that this implies without loss of generality that $\Ver$ does not maintain an internal state between rounds. 
\end{remark}

\begin{definition}[$(\ec, \es, \ezk)$-NIZK]
    An $(\ec, \es, \ezk)$-NIZK for an $\NP$ relation $\Rel$ is a tuple of polynomial-time algorithms $(\Gen, \Pro, \Ver)$ such that
    \begin{itemize}
        \item \textbf{Syntax}
        \begin{itemize}
            \item $\crs \gets \Gen(1^n)$: $\Gen$ is a randomized algorithm that on input an instance size (in unary) outputs a common reference string.
            \item $\prf \gets \Pro(\crs, x, w)$: $\Pro$ is a randomized algorithm that on input a crs, instance, and witness for that instance outputs a proof.
            \item $b \gets \Ver(\crs, x, \prf)$: $\Ver$ is a \emph{deterministic} algorithm that on input a crs, instance, and proof for that instance outputs a bit denoting acceptance or rejection.
        \end{itemize}
        
        \item \textbf{Completeness}: For every $(x, w) \in \Rel$,
        \begin{equation*}
            \P\left[\Ver(\crs, x, \prf) = 1 \middle| \begin{matrix}
                \crs \gets \Gen(1^{|x|}) \\
                \prf \gets \Pro(\crs, x, w)
            \end{matrix}\right] \geq 1 - \ec(|x|)
        \end{equation*}

        \item \textbf{Soundness}: For any \emph{non-uniform} PPT algorithm $\widetilde{\Pro}$, we have that for $x \not \in \Lang_\Rel$ such that $|x|$ is large enough,
        \begin{equation*}
            \P\left[\Ver(\crs, x, \tilde{\prf}) = 1 \middle| \begin{matrix}
                \crs \gets \Gen(1^{|x|}) \\
                \widetilde{\prf} \gets \widetilde{\Pro}(\crs, x)
            \end{matrix}\right] \leq \es(|x|)
        \end{equation*}

        \item \textbf{Zero-Knowledge}: There exists a PPT simulator $\Sim$ such that for all non-uniform PPT distinguishers $\Distinguisher$ and all $(x, w) \in \Rel$ with $|x|$ large enough,
        \begin{equation*}
            \left| \P\left[\Distinguisher(\crs, \prf) = 1 \mid (\crs, \prf) \gets \Sim(x)\right] - \P\left[\Distinguisher(\crs, \prf) = 1 \middle| \begin{matrix}
                \crs \gets \Gen(1^{|x|}) \\
                \prf \gets \Pro(\crs, x, w)
            \end{matrix}\right] \right| \leq \ezk(|x|)
        \end{equation*}
        %\begin{equation*}
            %\Sim(x) \overset{c}{\approx}_{\ezk(|x|)} \left( \crs \gets \Gen(1^{|x|}), \prf \gets \Pro(\crs, x, w) \right)
        %\end{equation*}
    \end{itemize}
\end{definition}

\begin{remark}
    We will often refer to a (non-interactive) zero-knowledge argument as being for the language defined by the relation $\Rel$, rather than for the relation itself.
\end{remark}

\begin{remark}
    We have defined soundness for NIZKs as being \emph{computational} and \emph{non-adaptive}.  One can also consider \emph{statistical} soundness, where our guarantee needs to hold even against inefficient $\widetilde{\Pro}$.  Additionally, one can consider \emph{adaptive} soundness, where the adversary is allowed to choose the instance after seeing the CRS.  Formally, our soundness requirement would be replaced by
    \begin{equation*}
            \P\left[ \begin{matrix}
                |x| = n \land x \not \in \Lang_\Rel \\
                \land \ \Ver(\crs, x, \tilde{\prf}) = 1
            \end{matrix}
            \middle| \begin{matrix}
                \crs \gets \Gen(1^{n}) \\
                (x, \tilde{\prf}) \gets \tilde{\Pro}(\crs)
            \end{matrix}\right] \leq \es(n)
    \end{equation*}
    for sufficiently large $n$.
\end{remark}

\subsection{Complexity Classes} \label{sec:def-comp-class}
\label{sec:defcc}
\begin{definition}[Infinitely-Often BPP]\label{def:io-bpp}
    Let $\ioBPP$ be the class of languages $\Lang$ such that there exists a PPT algorithm $\Alg$, functions $a, b: \mathbb{N} \to [0, 1]$, and a polynomial $p$ such that for infinitely many $n$,
    \begin{itemize}
        \item For all $x \in \Lang \cap \zo^n$, $\P[\Alg(x) = 1] \geq a(n)$
        \item For all $x \in \zo^n - \Lang$, $\P[\Alg(x) = 1] \leq b(n)$
        \item $a(n) - b(n) \geq \frac{1}{p(n)}$
    \end{itemize}
    We let $\ioBPPpoly$ be defined similarly, except that $\Alg$ is allowed to be non-uniform.
\end{definition}

\begin{definition}[Infinitely-Often $\Ppoly$]
    Let $\ioPpoly$ be the class of languages $\Lang$ such that there exists a non-uniform polynomial-time algorithm $\Alg$ such that for infinitely many $n$,
    \begin{itemize}
        \item For all $x \in \Lang \cap \zo^n$, $\Alg(x) = 1$
        \item For all $x \in \zo^n - \Lang$, $\Alg(x) = 0$
    \end{itemize}

\end{definition}

\begin{remark} \label{rmk:bpppoly-vs-ppoly}
    The textbook proof that $\BPP \subseteq \Ppoly$ also says that $\ioBPPpoly \subseteq \ioPpoly$, so in particular $\ioBPPpoly = \ioPpoly$.  
    Since $\ioPpoly$ is a more standard complexity class, we will state our results in terms of it, even when the algorithms we give are in fact probabilistic.
\end{remark}

\subsection{Chernoff Bounds}

We will need a few different standard instantiations of Chernoff bounds for binomial random variables, as listed below.

\begin{theorem}[Chernoff Bounds] \label{thm:chernoff-combined}
    For every $i \in [m]$, let $X_i$ be an independent Bernoulli random variable that takes value $1$ with probability $p$, and let $X:= \sum_i X_i$.  Then:
    \begin{enumerate}
        \item \label{eqn:chernoff-mult-above} for any $0 < \delta < 1$, $\Pr[X \geq (1 + \delta) mp] \leq \exp\left(-\frac{\delta^2mp}{3}\right)$.
        \item \label{eqn:chernoff-mult-below} for any $0 < \delta < 1$, $\Pr[X \leq (1 - \delta) mp] \leq \exp\left(-\frac{\delta^2mp}{2}\right)$.
        \item \label{eqn:chernoff-add} for any $0 < \Delta$, $\Pr[|X - mp| \geq \Delta] \leq 2\exp\left(-\frac{\Delta^2}{m}\right)$.
    \end{enumerate}
    
\end{theorem}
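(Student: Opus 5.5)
We prove all three bounds with the exponential moment (Chernoff) method. For a parameter $t$, apply Markov's inequality to $e^{tX}$ (with $t>0$ for upper tails and $t<0$ for lower tails). By independence the moment generating function factors:
\begin{equation*}
\mathbb{E}[e^{tX}] = \prod_{i=1}^m \mathbb{E}[e^{tX_i}] = \left(1 + p(e^t - 1)\right)^m \leq \exp\left(mp(e^t-1)\right),
\end{equation*}
where the last step uses $1+y \leq e^y$. Each bound then follows by choosing $t$ well and simplifying the exponent with an elementary inequality.

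For part \ref{eqn:chernoff-mult-above}, take $t = \ln(1+\delta) > 0$. This gives
\begin{equation*}
\Pr[X \geq (1+\delta)mp] \leq \left(\frac{e^\delta}{(1+\delta)^{1+\delta}}\right)^{mp}.
\end{equation*}
It remains to show $\delta - (1+\delta)\ln(1+\delta) \leq -\delta^2/3$ for $0<\delta<1$. Let $f(\delta)=\delta-(1+\delta)\ln(1+\delta)+\delta^2/3$. Then $f(0)=0$, $f'(\delta) = -\ln(1+\delta) + 2\delta/3$ with $f'(0)=0$, and $f''(\delta) = -\frac{1}{1+\delta}+\frac23$. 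So $f''<0$ on $[0,1/2)$ and $f''>0$ on $(1/2,1)$. Hence $f'$ first decreases and then increases, and since $f'(0)=0$ and $f'(1)=2/3-\ln 2<0$, we get $f'\leq 0$ on $[0,1]$. Therefore $f\leq 0$, as required.

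For part \ref{eqn:chernoff-mult-below}, apply the same argument to $e^{-tX}$ with $t = -\ln(1-\delta)$. This yields the bound $\left(e^{-\delta}/(1-\delta)^{1-\delta}\right)^{mp}$. The needed inequality is $-\delta-(1-\delta)\ln(1-\delta)\leq -\delta^2/2$. It follows from the Taylor expansion $(1-\delta)\ln(1-\delta) = -\delta + \sum_{k\geq 2}\frac{\delta^k}{k(k-1)} \geq -\delta+\delta^2/2$.

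For part \ref{eqn:chernoff-add}, we use Hoeffding's lemma instead of the multiplicative bound. For a variable $Y\in[0,1]$ with mean $p$,
\begin{equation*}
\mathbb{E}[e^{s(Y-p)}]\leq e^{s^2/8}.
\end{equation*}
The standard proof takes the logarithm of the MGF, $\psi(s)$, notes $\psi(0)=\psi'(0)=0$ and $\psi''\leq 1/4$ (a variance bound for a $[0,1]$ variable under a tilted measure), and applies Taylor's theorem. Multiplying over the $m$ independent variables and applying Markov gives $\Pr[X-mp\geq\Delta]\leq \exp(ms^2/8 - s\Delta)$. Choosing $s=4\Delta/m$ gives $\exp(-2\Delta^2/m)$, and the symmetric argument bounds the lower tail. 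A union bound over the two tails gives $2\exp(-2\Delta^2/m)\leq 2\exp(-\Delta^2/m)$, which is stronger than the stated bound.

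The only step that needs care is the calculus verifying the elementary inequalities, especially the constant $1/3$ in part \ref{eqn:chernoff-mult-above}. These are all textbook facts, so in the paper it is also acceptable to simply cite a standard reference.
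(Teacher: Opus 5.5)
Your proof is correct: the exponential-moment argument with $t=\ln(1+\delta)$ and $t=-\ln(1-\delta)$, together with your calculus check of $\delta-(1+\delta)\ln(1+\delta)\le-\delta^2/3$ and your Taylor-series check for the lower tail, gives parts 1 and 2. Hoeffding's lemma gives part 3, and in fact with the stronger exponent $2\Delta^2/m$. The paper gives no proof of this theorem and simply invokes these as standard textbook Chernoff bounds, so your derivation is the standard argument the paper implicitly relies on.
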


\iffalse{

\begin{theorem}[Additive Chernoff Bound]\label{thm:chernoff-additive}
    For every $i \in [m]$, let $X_i$ be an independent Bernoulli random variable that takes value $1$ with probability $p$. 
    Let $X:= \sum_i X_i/m$. Then for $\Delta>0$:
    \[
        \Pr[|X - p| \geq \Delta] \leq 2e^{-m\cdot\Delta^2}
    \]
\end{theorem}

\begin{theorem}[Multiplicative Chernoff Bound]\label{thm:chernoff-mult}
    For every $i \in [m]$, let $X_i$ be an independent Bernoulli random variable that takes value $1$ with probability $p$. 
    Let $X:= \sum_i X_i/m$. Then for $\Delta>0$:
    \[
        \Pr[|X - p| \geq \Delta p] \leq 2e^{-\frac{\Delta^2 \cdot mp}{3}}
    \]
\end{theorem}

}\fi

\subsection{Lemmas From \cite{LMP}}

In order to bridge the gap from auxiliary-input one-way functions to standard one-way functions, we recall the following definition and lemmas from~\cite{HirNan} as presented in \cite{LMP}.

\begin{definition}[Infinitely-Often Average-Case BPP, \cite{LMP} Definition 2.1] \label{def:ioAvgBPP}
    Let $\Lang$ be a language and $\Dist = \{ \Dist_n \}_{n \in \mathbb{N}}$ be an efficiently sampleable distribution.  We say $(\Lang, \Dist) \in \aBPPpoly$ if there exists a non-uniform PPT $\Alg$ such that for infinitely many $n \in \mathbb{N}$:
    \begin{itemize}
        \item For all $x \in \Supp(\Dist_n)$, $\Pr[\Alg(x) \in \{ \fail, \Lang(x) \}] \geq 0.9$
        \item $\Pr[\Alg(x) = \fail \mid x \samp \Dist_n] \leq 0.25$
    \end{itemize}
\end{definition}

\begin{lemma}[\cite{LMP} Lemma 3.5] \label{lem:lmp-aiowf-to-avghard}
    Suppose auxiliary-input one-way functions exist.  Then there is a language $\Lang \in \NP$ and a polynomial $m(\cdot)$ such that $(\Lang, \{ U_{m(n)} \}_{n \in \mathbb{N}}) \not \in \aBPPpoly$.
\end{lemma}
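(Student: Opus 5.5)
We argue by contraposition. Assume that every pair $(\Lang,\{U_{m(n)}\})$ with $\Lang\in\NP$ lies in $\aBPPpoly$, and fix a candidate auxiliary-input function family $\{f_x\}$. The goal is a non-uniform PPT inverter that, for infinitely many $n$, inverts $f_x$ with probability noticeably above zero for \emph{every} $x\in\zo^n$. By Theorem~\ref{thm:wowf-vs-owf} with $\auxAlph=\zo$, this rules out auxiliary-input weak one-way functions, hence also auxiliary-input one-way functions.

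\textbf{Choice of language.} Rather than a language tied to one particular $f_x$, I would use a single universal $\NP$ language whose uniformly random instances already carry all the information needed to invert every $f_x$. The candidate is a conditional time-bounded Kolmogorov complexity language, in the style of Hirahara--Nanashima and Liu--Pass:
\[
\Lang=\{(x,z,1^s,1^t): K^t(z\mid x)\le s\},
\]
padded so that instances of each length are uniform strings. It is in $\NP$ because the witness is the short program itself.

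\textbf{From a decider to an inverter.} First, an errorless heuristic $\Alg$ for $\Lang$ gives, for every instance on which it does not output $\fail$, a correct membership answer. Combined with a standard search-to-decision argument via pairwise-independent hashing (Valiant--Vazirani / Ajtai--Impagliazzo--Levin style), $\Alg$ lets us recover, bit by bit, a short conditional description of $y=f_x(r)$ relative to $x$. From such a description we obtain an approximately uniform preimage, i.e.\ universal extrapolation (Definition~\ref{def:io-ue}) for the sampler $r\mapsto f_x(r)$. Second, errorlessness matters here: on the instances we actually query, $\Alg$ is never wrong with probability above $0.1$, only possibly $\fail$. So the whole correctness burden reduces to bounding how often $\Alg$ fails on the query distribution induced by $(x,f_x(r))$.

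\textbf{Main obstacle.} That failure bound is the step I expect to be hardest. The $0.25$ failure bound holds only on average over uniform instances, whereas we need it for a worst-case fixed prefix $x$, which is a $2^{-n}$-measure slice. My first approach is a domination argument. A string $f_x(r)$ with small conditional complexity given $x$ is reached by the uniform distribution on the $z$-part with probability only polynomially smaller than under $f_x(U)$. This is the Impagliazzo--Levin ``uniform is complete for samplable'' principle, applied conditionally on $x$. The resulting polynomial loss is then absorbed by repeating with fresh hashes and boosting $\Alg$'s success.

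If the fixed-$x$ slice still resists, I would move $x$ out of the instance entirely. The idea is to use the non-uniform advice of $\Alg$ together with a universal sampler that, on its random input, interprets a short prefix as a program. The hard auxiliary input $x$ then enters only through the inverter's own input, never through the instance distribution, and the argument reduces to standard (non-auxiliary) universal extrapolation. Making this worst-case-over-$x$ step rigorous is where I expect essentially all the difficulty to lie. Once it is established, the remaining Chernoff and hashing calculations are routine.
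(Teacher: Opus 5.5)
\textbf{There is a genuine gap: the proposal never handles a worst-case auxiliary input $x$.} The paper gives no proof of this lemma; it imports it from \cite{LMP}. As far as I know, the known argument rests on Hirahara's non-black-box worst-case-to-average-case reduction for time-bounded Kolmogorov complexity. You correctly locate the difficulty: the $0.25$ bound on $\fail$ is an average over uniform instances, while $f_x$ must be inverted for a worst-case $x$. But you leave exactly that step open, and neither fallback closes it.

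\textbf{Why the two fallbacks fail.}
The domination idea cannot work. The instances carrying a fixed $x$ form a slice of measure $2^{-n}$, so an errorless heuristic may output $\fail$ on \emph{all} of it. No domination between $f_x(U)$ and the uniform distribution on that slice can give a failure bound there. The domination claim is also false as stated: uniform does not dominate $f_x(U)$ (take $f_x$ constant). Nor does ``repeating with fresh hashes'' help: $\fail$ is a property of the instance, and every re-randomized instance still carries the same $x$.

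The second fallback removes $x$ from the instance. A universal sampler with a short program prefix cannot encode an $n$-bit worst-case $x$ without exponential loss. So you recover only ordinary universal extrapolation, which rules out standard one-way functions (the standard Impagliazzo--Levin argument already gives this), not auxiliary-input ones.

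\textbf{The missing ingredient is non-black-box.} Suppose an errorless heuristic for a $K^t$-type problem outputs $\fail$ often on a reconstructive encoding of a worst-case string (Hirahara uses the direct-product generator), while rarely doing so on uniform inputs. Then it is a distinguisher for that encoding. The reconstruction procedure, together with the heuristic's own code, gives a short time-bounded description of the string. So either the string is compressible, and the answer is known, or the heuristic rarely fails and, being errorless, answers correctly. This yields a \emph{worst-case} gap approximation of $K^t$.

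Only then does inversion follow by a black-box step. For example, one distinguishes $x\|G_x(s)$ from $x\|u$ for a HILL generator $G_x$ built from $f_x$, and applies HILL's security reduction for each fixed $x$.

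\textbf{A second, smaller error.} Your ``decider to inverter'' step is wrong as stated. A short program printing $y=f_x(r)$ need not have the form ``apply $f_x$ to $r'$''. So recovering one yields no preimage, let alone a uniformly distributed one.
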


In order to fit the next lemma from \cite{LMP} into our framework, it will help to introduce some new notation.

\begin{definition} [Decision-To-Inversion Reduction] \label{def:dti-reduction}
    Fix a language $\Lang$.  A decision-to-inversion reduction for $\Lang$ is a PPT machine $R^{(\cdot)}$, a function family $\{ f_x \}_{x \in \zo^*}$, and a polynomial $p(\cdot)$ such that for any non-uniform PPT $\Alg$ and sufficiently large\footnote{Note here that we explicitly allow the notion of ``sufficiently large'' to depend on $\Alg$.} $x$ such that $\Alg(x, \cdot)$ inverts $f_x$ with probability at least $\frac{1}{p(|x|)}$, $\Pr[R^\Alg(x) = \Lang(x)] \geq 1 - \frac{1}{p(|x|)}$.
\end{definition}

\begin{lemma}[Implicit in proof of \cite{LMP} Lemma 3.1] \label{lem:lmp-reduction-to-owf}
    Let $(\Lang, \Dist) \not \in \aBPPpoly$.  If there exists a decision-to-inversion reduction for $\Lang$ with $p(n) = \omega(1)$, then one-way functions exist.
\end{lemma}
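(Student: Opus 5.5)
The plan is to prove the contrapositive. Assuming one-way functions do not exist, I will use the decision-to-inversion reduction to place $(\Lang, \Dist)$ in $\aBPPpoly$. The candidate one-way function is the natural ``average-case instantiation'' of the auxiliary-input family $\{f_x\}$ given by the reduction. Let $\Dist_n$ be sampled by an efficient $S(1^n; s)$. Define
\[
g_n(s, r) = \bigl(x, f_x(r)\bigr), \qquad x = S(1^n; s),
\]
with $\auxAlph = \{1\}$, i.e.\ no auxiliary input. By padding, the input length of $g_n$ is a fixed polynomial in $n$, so input lengths of $g$ correspond to values of $n$.

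\textbf{Step 1: reduce to weak one-wayness.} By Theorem~\ref{thm:wowf-vs-owf}, it suffices to show that $g$ is weakly one-way. Suppose it is not. Then there is a non-uniform PPT $\Alg'$ such that, for every polynomial $q$, for infinitely many $n$, $\Alg'$ inverts $g_n$ with probability at least $1 - 1/q(n)$. From $\Alg'$ I define $\Alg(x, y)$: run $\Alg'(x, y)$ and output the $r$-part of the answer. Whenever $\Alg'$ returns a preimage of $(x, y)$, its $s$-part is irrelevant and its $r$-part is a valid $f_x$-preimage of $y$. Write $\sigma(x) = \Pr_r[\Alg(x, f_x(r)) \in f_x^{-1}(f_x(r))]$. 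Then $\mathbb{E}_{x \gets \Dist_n}[\sigma(x)] \ge 1 - 1/q(n)$. By Markov's inequality applied to $1 - \sigma$,
\[
\Pr_{x \gets \Dist_n}\bigl[\sigma(x) < 3/p(n)\bigr] \le \frac{1/q(n)}{1 - 3/p(n)},
\]
which is at most $0.2$ once $q$ is a large enough constant and $n$ is large, using $p = \omega(1)$.

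\textbf{Step 2: the average-case decider.} The decider $\mathcal{B}(x)$ works as follows.
\begin{enumerate}
\item It estimates $\sigma(x)$ to additive accuracy $1/p(n)$ with failure probability at most $0.01$. This uses $O(p(n)^2)$ samples, by Theorem~\ref{thm:chernoff-combined}; it is efficient because $f_x$ is efficiently computable, so $\mathcal{B}$ can sample $r$ itself and check $\Alg$'s answers.
\item If the estimate is below $2/p(n)$, it outputs $\fail$.
\item Otherwise it outputs $R^{\Alg}(x)$.
\end{enumerate}
\emph{Correctness on non-$\fail$ outputs.} If $\mathcal{B}$ does not output $\fail$ and the estimate is accurate, then $\sigma(x) \ge 1/p(|x|)$. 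Definition~\ref{def:dti-reduction} then gives $\Pr[R^{\Alg}(x) = \Lang(x)] \ge 1 - 1/p(|x|) \ge 0.95$ for large $n$. This is exactly where $p = \omega(1)$ is needed. Hence $\Pr[\mathcal{B}(x) \in \{\fail, \Lang(x)\}] \ge 0.9$ for every $x$, including $x$ where $\Alg$ is poor, since those only lead to $\fail$ or to a correct answer except with small error.

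\emph{Fail rate.} $\mathcal{B}$ outputs $\fail$ only if $\sigma(x) < 3/p$ or the estimate errs. By Step 1 this has total probability at most $0.2 + 0.01 \le 0.25$ over $x \gets \Dist_n$. Both conditions therefore hold for infinitely many $n$, so $(\Lang, \Dist) \in \aBPPpoly$, a contradiction.

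\textbf{Main obstacle.} The delicate point is the quantifier bookkeeping. One issue is the ``sufficiently large $x$, depending on $\Alg$'' clause in Definition~\ref{def:dti-reduction}. Another is that the infinitely many $n$ on which $\Alg'$ succeeds must coincide with the input lengths on which $\aBPPpoly$ is witnessed. I would handle the first by fixing $\Alg$ before choosing the threshold on $n$. I would handle the second by the padding that makes $g$'s input length a fixed injective polynomial in $n$. I would also check that the constant $q$ chosen in Step 1 is legitimate given the order of quantifiers in the negation of Definition~\ref{def:wowf}.
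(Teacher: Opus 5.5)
Your proposal is correct and follows essentially the paper's route. It uses the same candidate $g(r_1\|r_2)=\Dist(r_1)\|f_{\Dist(r_1)}(r_2)$, reduces to weak one-wayness via Theorem~\ref{thm:wowf-vs-owf}, and builds a decider that outputs $\fail$ when the inverter is bad on $x$ and otherwise returns $R^{\Alg}(x)$, using $p=\omega(1)$ to get correctness at least $0.9$.

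The only difference is the $\fail$ test. The paper samples a single $r_2$ and outputs $\fail$ iff $\Alg$ fails to invert $f_x(r_2)$, so the $\fail$ rate over $x\gets\Dist_n$ is exactly $g$'s inversion-failure rate; inversion probability $0.75$ then suffices, with no Chernoff estimate or Markov step. Your estimate-and-threshold test is equally valid but needs inversion probability $1-1/q$ for a somewhat larger constant $q$.
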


Since Lemma \ref{lem:lmp-reduction-to-owf} is not stated in exactly this form in \cite{LMP}, we include a formal proof of it in Appendix \ref{sec:lmp-proof}, noting that the proof essentially mirrors what already appears in \cite{LMP}.

\section{OWF From Non-Trivial NIZK}

We begin with the simpler case of NIZKs.  

\begin{theorem} \label{thm:nizk-to-owf-wc}
    Suppose there exists a language $\Lang \not \in \ioPpoly$ such that $\Lang$ has an $(\ec, \es, \ezk)$-NIZK with $\ec + \es + \ezk <_n 1$.  Then there exists an auxiliary-input one-way function.
\end{theorem}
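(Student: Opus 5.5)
We prove the contrapositive: assume auxiliary-input one-way functions do not exist (with $\auxAlph = \zo$), and show $\Lang \in \ioBPPpoly = \ioPpoly$ (Remark~\ref{rmk:bpppoly-vs-ppoly}), contradicting $\Lang \not\in \ioPpoly$. By Theorem~\ref{thm:owf-vs-ioue}, io-UE is then possible with auxiliary input $x$. We apply it to the sampler $M(x;r) = \crs$, where $(\crs,\prf) = \Sim(x;r)$. This gives a non-uniform inverter $N$ such that, for infinitely many $n$ and every $x$ of length $n$, $N(x,\crs)$ outputs randomness that is $\frac{1}{q(n)}$-close to a uniform $r$ conditioned on $\Sim(x;r)$ outputting $\crs$. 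The polynomial $q$ will be chosen below.

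The decider $\Alg'(x)$ samples $\crs \gets \Gen(1^n)$ and then runs $N(x,\crs)$ independently $t = t(n)$ times to get $r_1,\dots,r_t$. It computes $\prf_i$ from $\Sim(x;r_i)$ and accepts iff some $\Ver(\crs,x,\prf_i)=1$.

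\emph{Soundness side.} If $x\notin\Lang$, then $\Alg'(x,\cdot)$, with $N$ hardwired, is a non-uniform PPT cheating prover. It takes $\crs$ and outputs the first accepting $\prf_i$, which it can check because $\Ver$ is deterministic and public. So $\Alg'$ accepts with probability at most $\es$.

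\emph{Completeness side.} Fix $x\in\Lang$. For each $\crs$, let $g(\crs)$ be the probability that $\Ver(\crs,x,\prf)=1$ when $\prf$ is computed from $\Sim(x;N(x,\crs))$. Call $\crs$ \good if $g(\crs)\ge \frac{1}{p(n)}$, and take $t = n\,p(n)$. Then on a \good $\crs$ the decider $\Alg'$ fails with probability at most $e^{-n}$, so it suffices to show that $\Gen$ outputs a \bad $\crs$ with probability at most about $\ezk+\ec+\frac{1}{p}+\frac{1}{q}$.

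Consider the distinguisher $\Distinguisher(\crs,\prf)$, which has $N$ hardwired. It estimates $g(\crs)$ by sampling $\poly(n)$ runs of $N$, and outputs $1$ iff the estimate is below roughly $\frac{2}{p(n)}$ \emph{and} $\Ver(\crs,x,\prf)=1$. By a Chernoff bound (Theorem~\ref{thm:chernoff-combined}), the estimate is accurate except with negligible probability.
\begin{itemize}
\item \emph{Real pair.} When $(\crs,\prf)$ comes from $\Gen$ and $\Pro$, $\Distinguisher$ outputs $1$ with probability at least $\Pr[\bad]-\ec-\negl$.
\item \emph{Simulated pair.} Use io-UE: the joint distribution of $(\crs,r)$ from $\Sim$ is $\frac{1}{q}$-close to that of $(\crs, N(x,\crs))$ with $\crs$ drawn from $\Sim$. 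Hence, conditioned on a $\crs$ whose $g$ is at most about $\frac{3}{p}$, the simulated proof is accepting with probability $O(\frac{1}{p})$, up to an additive $\frac{1}{q}$ overall. So $\Distinguisher$ outputs $1$ with probability $O(\frac{1}{p})+\frac{1}{q}+\negl$.
\end{itemize}
Zero knowledge then gives $\Pr[\bad]\le \ezk+\ec+O(\frac{1}{p})+\frac{1}{q}+\negl$.

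Consequently, $\Alg'$ accepts $x\in\Lang$ with probability at least $1-\ec-\ezk-O(\frac{1}{p})-\frac{1}{q}-\negl$. Since $\ec+\es+\ezk<_n 1$, choosing $p,q$ to be large enough polynomials leaves a noticeable gap over $\es$ on the infinitely many good lengths. Hence $\Lang\in\ioBPPpoly$.

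The main obstacle is the completeness step. One issue is the threshold mismatch: \good and \bad are defined by the exact $g$, but $\Distinguisher$ sees only an estimate. We handle this by using two thresholds, $\frac{1}{p}$ and $\frac{3}{p}$, with the estimate landing between them. The other issue is the bookkeeping that ensures the distinguisher and the adversary are non-uniform PPT, using the same $N$ and working on the same infinitely many input lengths where io-UE succeeds. We also need the zero-knowledge and soundness bounds' ``$|x|$ large enough'' clauses to apply there, which they do since both hold for all sufficiently large lengths.
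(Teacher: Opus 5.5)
Your proposal is correct and takes essentially the same route as the paper. The decider samples a real $\crs$, uses the io-UE inverter for the simulator's $\crs$-marginal to generate polynomially many candidate proofs, and accepts if any verifies. A distinguisher that estimates the per-$\crs$ acceptance rate and also checks the given proof then bounds the mass of bad $\crs$'s by $\ezk + \ec$ plus small terms. The differences are only in constants and presentation: you use thresholds $\frac{1}{p}, \frac{2}{p}, \frac{3}{p}$ with a separate estimator, whereas the paper uses thresholds $\frac{1}{20p}, \frac{1}{10p}$ and a count-based test with the same number of iterations as the decider.
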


\begin{proof}
    Let $(\Gen, \Pro, \Ver)$ be the $(\ec, \es, \ezk)$-NIZK for $\Lang$ guaranteed by the theorem statement, and $p(n)$ be a polynomial such that for sufficiently large $n$, $\ec(n) + \es(n) + \ezk(n) < 1 - \frac{1}{p(n)}$.\footnote{In what follows, it will help to assume WLOG that $p(n) = \omega(n^2)$.}  Suppose for the sake of contradiction that auxiliary-input one-way functions don't exist.  Define:
    \begin{itemize}
        \item $M$ as an auxiliary-input sampler where $M(x;r)$ computes $(\crs, \prf) \gets \Sim(x;r)$ and outputs $\crs$.

        \item $N$ as the io-UE machine guaranteed by Theorem \ref{thm:owf-vs-ioue} with respect to sampler $M$ and polynomial $20p(n)$.

        \item $\paramSet$ as the (infinite) set of $n$ such that the guarantees of $N$ hold.
    \end{itemize}
    We now claim that Algorithm \ref{alg:nizk-decider} decides $\Lang$ infinitely often, contradicting that $\Lang \not \in \ioPpoly$.

    \begin{algorithm}
        \DontPrintSemicolon
        \caption{Algorithm $\Alg$ to decide $\Lang$} \label{alg:nizk-decider}
        \KwIn{$x \in \zo^{n}$}
        \KwOut{Decision bit $b$, with $b = 1$ representing $x \in \Lang$}
        Compute $\crs \gets \Gen(1^n)$ \;
        \For {$\_ \gets 1$ \KwTo $20n\cdot p(n)$}{
            Compute $r' \gets N(x, \crs)$ \;
            Compute $(\crs', \prf') \gets \Sim(x;r')$ \;
            \If{$\Ver(\crs, x, \prf') = 1$}{
                Output $b = 1$
            }
        }
        Output $b = 0$
    \end{algorithm}

    To analyze this algorithm, we first consider what it does when $x \in \zo^n - \Lang$.  In this case, $\Alg$ can only accept if it finds an accepting proof for $x$ relative to an honestly-generated CRS.  Since $\Alg$ is efficient, soundness tells us that this happens with probability at most $\es(n)$.

    Next, we consider what $\Alg$ does when $x \in \zo^n \cap \Lang$ for some $n \in \paramSet$.  Let $p_\crs$ be the probability that a single iteration of the for loop in $\Alg$ accepts.  We call $\crs$ ``bad'' if $p_{\crs} \leq \frac{1}{20p(n)}$, and ``good'' if $p_{\crs} \geq \frac{1}{10p(n)}$.  The core of our analysis comes from the following claim:

    \begin{claim} \label{clm:bad-crs}
        Suppose $x \in \zo^n \cap \Lang$ for some $n \in \paramSet$.  Then $\P[\crs \text{ is bad} \mid \crs \gets \Gen(1^n)] \leq \ezk(n) + \ec(n) + \frac{1}{4p(n)}$.
    \end{claim}

    \begin{proof}
        \begin{algorithm}
            \DontPrintSemicolon
            \caption{Distinguisher $\Distinguisher$} \label{alg:nizk-distinguisher}
            \KwIn{$x \in \zo^{n}$, $(\crs, \prf)$}
            \KwOut{Decision bit $b$, with $b = 0$ representing that $(\crs, \prf)$ was generated by $\Sim$}
            \For {$\_ \gets 1$ \KwTo $20n\cdot p(n)$}{
                Compute $r' \gets N(x, \crs)$ \;
                Compute $(\crs', \prf') \gets \Sim(x;r')$ \;
                Run $\Ver(\crs, x, \prf')$ \;
            }
            Let $\text{count}$ be the number of iterations where $\Ver$ accepted \;
            \uIf{$\text{count} \leq 1.5n$ and $\Ver(x, \crs, \prf) = 1$} {
                Output $b = 1$
            }
            \Else{
                Output $b = 0$
            }
        \end{algorithm}
    
        Consider the distinguisher $\Distinguisher$ defined by Algorithm \ref{alg:nizk-distinguisher}.  We first claim that if $\Distinguisher$ is given $(\crs, \prf)$ drawn from $\Sim(x)$, it will output 1 with probability at most $\frac{1}{5p(n)}$.  First note that if we change $\prf$ to instead be the proof we get from using randomness $r' \gets N(x, \crs)$, our guarantees on $N$ tell us that this probability can change by at most $\frac{1}{20p(n)}$.  Now there are two cases to consider: either $\crs$ is good, or it is not good.

        In the case where $\crs$ is good, we note that each iteration of the for loop has $\Ver(\crs, x, \prf') = 1$ with probability at least $\frac{1}{10p(n)}$.  Thus, bound \ref{eqn:chernoff-mult-below} in Theorem \ref{thm:chernoff-combined} (with respect to $\delta = \frac{1}{4}$) tells us that the final value of $\text{count}$ can be below $1.5n$ with probability at most $\exp(-n/16)$, which is at most $\frac{1}{20p(n)}$ for sufficiently large $n$.

        \iffalse{
        In the case where $\crs$ is good, we note that each iteration of the for loop has $\Ver(\crs, x, \prf') = 1$ with probability at least $\frac{1}{10p(n)}$.  Thus, the expected final value of $\text{count}$ is at least $2n$, so a simple Chernoff bound tells us that the probability it is at most $1.5n$ is bounded by $\exp(-n/16)$.  Since $\Distinguisher$ can only output 1 if $\text{count}$ is at most $1.5n$, we get that this happens when $\crs$ is good with probability at most $\exp(-n/16)$, which is at most $\frac{1}{20p(n)}$ for sufficiently large $n$.
        }\fi

        In the case where $\crs$ is not good, we know that the probability that we get an accepting proof by computing $r' \gets N(x, \crs)$ and $(\crs', \prf') \gets \Sim(x;r')$ is at most $\frac{1}{10p(n)}$.  But note that this is exactly how our given proof is computed!  Since $\Distinguisher$ can only output 1 if $\Ver(x, \crs, \prf) = 1$, we have that when $\crs$ is not good this happens with probability at most $\frac{1}{10p(n)}$.  Combining all of the above, we get that $\Distinguisher$ outputs 1 on a simulated $(\crs, \prf)$ with probability at most $\frac{1}{5p(n)}$ as desired.

        Now we turn to what $\Distinguisher$ does when given $(\crs, \prf)$ generated by $\Gen$ and $\Pro$.  We claim here that it will output 1 with probability at least $\P[\crs \text{ is bad} \mid \crs \gets \Gen(1^n)] - \ec(n) - \frac{1}{20p(n)}$.  Indeed, there are only two ways for $\Distinguisher$ to output 0: either $\Ver(x, \crs, \prf)$ rejects, or the final value of $\text{count}$ is at most $1.5n$.  The former can happen with probability at most $\ec(n)$ by completeness.  For the latter, if $\crs$ is bad, we know that each iteration of the for loop has $\Ver(\crs', x, \prf') = 1$ with probability at most $\frac{1}{20p(n)}$.  Pessimistically assuming that this probability is \emph{exactly} $\frac{1}{20p(n)}$, bound \ref{eqn:chernoff-mult-above} of Theorem \ref{thm:chernoff-combined} (with respect to $\delta = \frac{1}{2}$) tells us that the final value of $\text{count}$ can exceed $1.5n$ with probability at most $\exp(-n/12)$; for sufficiently large $n$, this can be bounded by $\frac{1}{20p(n)}$.

        \iffalse{
        Now we turn to what $\Distinguisher$ does when given $(\crs, \prf)$ generated by $\Gen$ and $\Pro$.  We claim here that it will output 1 with probability at least $\P[\crs \text{ is bad} \mid \crs \gets \Gen(1^n)] - \ec(n) - \frac{1}{20p(n)}$.  Indeed, there are only two ways for $\Distinguisher$ to output 0: either $\Ver(x, \crs, \prf)$ rejects, or the final value of $\text{count}$ is at most $1.5n$.  The former can happen with probability at most $\ec(n)$ by completeness.  For the latter, if $\crs$ is bad, we know that each iteration of the for loop has $\Ver(\crs', x, \prf') = 1$ with probability at most $\frac{1}{20p(n)}$.  Thus, the expected final value of $\text{count}$ is at most $n$, so a simple Chernoff bound tells us that the probability it is at least $1.5n$ is bounded by $\exp(-3n/4)$; for sufficiently large $n$, this can be bounded by $\frac{1}{20p(n)}$.
        }\fi

        Putting this all together, we have that $\Distinguisher$ distinguishes a real $(\crs, \prf)$ from a simulated one with advantage at least $\P[\crs \text{ is bad} \mid \crs \gets \Gen(1^n)] - \ec(n) - \frac{1}{4p(n)}$.  But by zero-knowledge, we know that this advantage can be at most $\ezk(n)$.  Rearranging terms gives the desired claim.
    \end{proof}

    We can now argue that $\Alg$ must accept $x \in \Lang$ with good probability.  Suppose that $\crs$ is not bad.  Then each iteration of the for loop must have $\Ver(\crs, x, \prf') = 1$ with probability at least $\frac{1}{20p(n)}$.  Thus, the probability that \emph{no} iteration of the for loop outputs 1 is at most $\left( 1 - \frac{1}{20p(n)} \right)^{20n \cdot p(n)} \leq \exp(-n)$, which for sufficiently large $n$ is at most $\frac{1}{4p(n)}$.  Thus, the probability that $\Alg$ accepts $x \in \zo^n \cap \Lang$ for $n \in \paramSet$ is at least the probability that $\crs$ is not bad minus $\frac{1}{4p(n)}$; plugging in Claim \ref{clm:bad-crs}, this is at least $1 - \ezk(n) - \ec(n) - \frac{1}{2p(n)}$.

    Finally, this all means that we can instantiate Definition \ref{def:io-bpp} with respect to $a(n) = 1 - \ezk(n) - \ec(n) - \frac{1}{2p(n)}$ and $b(n) = \es(n)$.  The above shows that we satisfy the first two points of the definition for all $n \in \paramSet$.  For sufficiently large $n$, we know that $\ec(n) + \es(n) + \ezk(n) < 1 - \frac{1}{p(n)}$, so $a(n) - b(n) > \frac{1}{2p(n)}$, satisfying the third point.  Thus, we have that $\Lang \in \ioBPPpoly$, so by Remark \ref{rmk:bpppoly-vs-ppoly}, $\Lang \in \ioPpoly$.
\end{proof}

\begin{remark} \label{rmk:nizk-owf-reduction}
    Theorem \ref{thm:nizk-to-owf-wc} can also be phrased in the following way: if $\Lang$ has an $(\ec, \es, \ezk)$-NIZK with $\ec + \es + \ezk <_n 1$, then $\Lang$ has a decision-to-inversion reduction (Definition \ref{def:dti-reduction}).  Indeed, all we need to prove the correctness of Algorithm \ref{alg:nizk-decider} is for $N$ to satisfy the io-UE guarantees on auxiliary input $x$.  Additionally, we note that the proof of Theorem \ref{thm:owf-vs-ioue} constructs a candidate auxiliary-input one-way function $\{ f_x \}_x$ and reduces inverting $f_x$ to satisfying io-UE guarantees on auxiliary input $x$.  Putting these two together gives the desired reduction $R$.\footnote{Technically, Algorithm \ref{alg:nizk-decider} doesn't quite give a decision-to-inversion reduction as written, since we are only guaranteed a polynomial gap between the probability of accepting $x \in \Lang$ versus $x \not \in \Lang$.  However, standard BPP amplification can convert this into a reduction that outputs the correct answer with high probability.}
\end{remark}

\iffalse{
\begin{remark} \label{rmk:nizk-owf-reduction}
    Theorem \ref{thm:nizk-to-owf-wc} can also be phrased in the following way: given an $(\ec, \es, \ezk)$-NIZK for $\Lang$ and $p(\cdot)$ and $q(\cdot)$ as any two polynomials, there exists a polynomial-time oracle-aided reduction $R^{(\cdot)}$ and a family of functions $\{ f_x \}_x$ such that for any efficient (possibly non-uniform) $\Alg$,
    \begin{itemize}
        \item For all $x \not \in \Lang$, $\Pr[R^\Alg(x) = 1] \leq \es(|x|)$
        \item For all $x \in \Lang$, if $\Alg$ inverts $f_x$ with probability at least $\frac{1}{p(|x|)}$, then $\Pr[R^\Alg(x) = 1] \geq 1 - \ec(|x|) - \ezk(|x|) - \frac{1}{q(|x|)}$.
    \end{itemize}
    Indeed, all we need in order for Algorithm \ref{alg:nizk-decider} to satisfy the given conditions if for $N$ to satisfy the io-UE guarantees on auxiliary input $x$.  Additionally, we note that the proof of Theorem \ref{thm:owf-vs-ioue} constructs a candidate auxiliary-input one-way function $\{ f_x \}_x$ and reduces inverting $f_x$ to satisfying io-UE guarantees on auxiliary input $x$.  Putting these two together gives the desired reduction $R$.\footnote{Technically, certain parts of our proof only work as long as $|x|$ is sufficiently large.  However, this is simple to fix by just having $R$ decide $x$ using brute force whenever $|x|$ is too small.}
\end{remark}
}\fi

Given the above remark, the techniques from \cite{HirNan,LMP} immediately allow us to improve Theorem \ref{thm:nizk-to-owf-wc} from an \emph{auxiliary-input} one-way function to a standard one-way function.

\begin{corollary} \label{cor:nizk-to-owf}
    Suppose that every $\Lang \in \NP$ has an $(\ec, \es, \ezk)$-NIZK with $\ec + \es + \ezk <_n 1$ and $\NP \not \subseteq \ioPpoly$.  Then OWFs exist.
\end{corollary}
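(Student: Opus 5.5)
The plan is to combine Remark \ref{rmk:nizk-owf-reduction} with Lemmas \ref{lem:lmp-aiowf-to-avghard} and \ref{lem:lmp-reduction-to-owf}. We split into two cases, according to whether auxiliary-input one-way functions exist.

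\textbf{Case 1: auxiliary-input one-way functions do not exist.} The hypothesis $\NP \not\subseteq \ioPpoly$ gives some $\Lang \in \NP \setminus \ioPpoly$. By assumption this $\Lang$ has a non-trivial NIZK, so Theorem \ref{thm:nizk-to-owf-wc} yields an auxiliary-input one-way function. This contradicts the case assumption, so this case cannot occur.

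\textbf{Case 2: auxiliary-input one-way functions exist.} Lemma \ref{lem:lmp-aiowf-to-avghard} then provides $\Lang' \in \NP$ and a polynomial $m$ with $(\Lang', \{U_{m(n)}\}) \not\in \aBPPpoly$. Since every $\NP$ language has a non-trivial NIZK, so does $\Lang'$. By Remark \ref{rmk:nizk-owf-reduction}, $\Lang'$ therefore has a decision-to-inversion reduction $(R, \{f_x\}, p)$. Applying Lemma \ref{lem:lmp-reduction-to-owf} to $\Lang'$ with $\Dist = \{U_{m(n)}\}$ gives a (standard) one-way function, provided the reduction's polynomial satisfies $p(n) = \omega(1)$.

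The main obstacle is checking that Remark \ref{rmk:nizk-owf-reduction} really produces a reduction in the sense of Definition \ref{def:dti-reduction}, with $p(n)=\omega(1)$. I would verify this concretely as follows.

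First, fix any polynomial $q(n)$ growing to infinity. The proof of Theorem \ref{thm:owf-vs-ioue} turns any inverter for the candidate $f_x$ with success $1/q(|x|)$ into an extrapolator $N$ that works on the specific auxiliary input $x$, rather than on infinitely many lengths. The extrapolator's error is set to the $1/(20p(n))$ needed in the proof of Theorem \ref{thm:nizk-to-owf-wc}.

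Second, I would re-run the analysis of Algorithm \ref{alg:nizk-decider} and Claim \ref{clm:bad-crs} pointwise in $x$. For $x \in \Lang'$ the acceptance probability is at least $1-\ec-\ezk-1/(2p)$. For $x\notin\Lang'$ soundness needs some care: the reduction runs with an arbitrary oracle, so $\Alg$ is only a non-uniform PPT if the oracle is. This holds, since the definition quantifies over efficient $\Alg$, and soundness is against non-uniform provers, giving acceptance at most $\es$.

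Finally, I would amplify the $1/(2p)$ gap by majority over $O(n\,p(n)^2)$ repetitions with a threshold, giving correctness probability $1-1/q(|x|)$. This matches Definition \ref{def:dti-reduction} with the polynomial $q$, and taking the reduction's polynomial to be $q$, which is $\omega(1)$, completes the argument.
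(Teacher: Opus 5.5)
Your proposal is correct and is essentially the paper's proof. It runs the same chain: Theorem \ref{thm:nizk-to-owf-wc} gives an auxiliary-input OWF, Lemma \ref{lem:lmp-aiowf-to-avghard} gives an average-case hard $\Lang' \in \NP$, Remark \ref{rmk:nizk-owf-reduction} gives a decision-to-inversion reduction for $\Lang'$, and Lemma \ref{lem:lmp-reduction-to-owf} finishes. Your Case 1 is vacuous, so the case split is cosmetic, and your extra checks of the remark (pointwise extrapolation on the fixed $x$, soundness against the non-uniform prover built from $R^{\Alg}$, and gap amplification so the reduction's polynomial is $\omega(1)$) just spell out what the paper leaves to that remark and its footnote.
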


\begin{proof}
    Since $\NP \not \subseteq \ioPpoly$, we know there is a language $\Lang \in \NP$ (that thus has an appropriate NIZK) where $\Lang \not \in \ioPpoly$.  Theorem \ref{thm:nizk-to-owf-wc} thus tells us that there exists an auxiliary-input one-way function.  We can then plug this into Lemma \ref{lem:lmp-aiowf-to-avghard} to get a new language $\Lang' \in \NP$ and a polynomial $m(\cdot)$ such that $(\Lang', \{ U_{m(n)} \}_{n \in \mathbb{N}}) \not \in \aBPPpoly$.  But since $\Lang' \in \NP$ (and thus has an appropriate NIZK), Remark \ref{rmk:nizk-owf-reduction} tells us that there is a decision-to-inversion reduction for $\Lang'$.  Hence, Lemma \ref{lem:lmp-reduction-to-owf} tells us that one-way functions exist.
\end{proof}

\section{OWF From Non-Trivial Public-Coin ZK}

We now turn to the more general case of interactive zero-knowledge.

\begin{theorem} \label{thm:zk-to-owf-wc}
    Suppose there exists a language $\Lang \not \in \ioPpoly$ such that $\Lang$ has a constant-round, public coin $(\ec, \es, \ezk)$-ZK argument with $\ec + \es + \ezk <_n 1$.  Then there exists an auxiliary-input one-way function.
\end{theorem}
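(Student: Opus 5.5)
We argue by contradiction, mirroring the proof of Theorem \ref{thm:nizk-to-owf-wc}. Suppose auxiliary-input one-way functions do not exist. Fix $p(n)=\omega(n^2)$ with $\ec+\es+\ezk<1-\frac{1}{p(n)}$, and an accuracy parameter $\delta=\frac{1}{c\,k\,p(n)}$ for a large constant $c$. For each round $i\le k$, let $M_i$ be the auxiliary-input sampler that runs $\Sim(x;r)$ and outputs the prefix of the simulated transcript up to (but excluding) the $i$-th prover message. Let $N_i$ be the io-UE inverter from Theorem \ref{thm:owf-vs-ioue} for $M_i$ with error $\frac{\delta}{k}$. Let $\paramSet$ be an infinite set of $n$ on which all $N_i$ work simultaneously; we can get this by combining the $k$ samplers into one sampler indexed by $i$.

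\textbf{The malicious prover.} Define $\widetilde{\Pro}_i$ recursively from the last round down. Given $x$ and a public-coin partial transcript $\transcript$ ending in a verifier message, $\widetilde{\Pro}_i$ does the following. It uses $N_i(x,\transcript)$ and $\Sim$ to sample $\frac{n}{\delta}$ candidate next messages $m$. For each candidate, it estimates the value $v(\transcript\|m)$, defined as the probability that $\Ver$ accepts when the protocol continues with $\widetilde{\Pro}_{i+1},\dots,\widetilde{\Pro}_k$ and fresh public coins. The estimate uses $\poly(n/\delta)$ runs ``in its head'', so by the additive Chernoff bound it is within $\delta$ except with probability $2^{-n}$. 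It then outputs the candidate with the largest estimate. Public coins are essential here: the continuation depends only on the transcript. Constant $k$ keeps the running time at $\poly(n)^k$. The decider $\Alg(x)$ runs $\langle\widetilde{\Pro},\Ver\rangle(x)$ and outputs $\Ver$'s bit. For $x\notin\Lang$, $\widetilde{\Pro}$ is a non-uniform PPT prover, so computational soundness bounds acceptance by $\es$.

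\textbf{Completeness of $\Alg$ (main step).} For $x\in\Lang$ and $n\in\paramSet$, we compare $\widetilde{\Pro}$ to the honest $\Pro$ using a distinguisher $\Distinguisher$ on full transcripts. For each round $i$, $\Distinguisher$ samples $\poly(n/\delta)$ alternative messages from $N_i,\Sim$ given the prefix. It estimates the values $v$ of these alternatives and of the actual message, and it flags round $i$ if the actual message's estimate exceeds all but a $2\delta$ fraction of the alternatives' estimates. $\Distinguisher$ outputs ``real'' iff some round is flagged and the final transcript is accepting.

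On simulated transcripts, each true prover message is (up to UE error) exchangeable with the sampled alternatives. So round $i$ is flagged with probability about $2\delta$ regardless of estimation noise, and $\Distinguisher$ says real with probability at most $O(k\delta)$. Zero-knowledge then implies that, on real transcripts, the event ``honest $\Pro$ accepts and no round is flagged'' has probability at least $1-\ec-\ezk-O(k\delta)$.

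The remaining task is a hybrid argument over rounds. In hybrid $H_j$, $\Pro$ plays the first $j$ messages and $\widetilde{\Pro}$ plays the rest. Conditioned on round $j$ being unflagged, the honest message has value at most about $2\delta$ above the $(1-\delta)$-quantile of the simulator-sampled values. Meanwhile, $\widetilde{\Pro}$ picks, with overwhelming probability, a message whose value is within $2\delta$ of that quantile or above it. Hence each round swap loses only $O(\delta)$ acceptance, plus the mass on the flagged event, which is already paid for once globally.

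Summing over $k$ rounds gives
\[
\Pr[\Alg(x)=1]\;\ge\;1-\ec-\ezk-O(k\delta).
\]
With $c$ large, this exceeds $\es+\frac{1}{2p(n)}$. Therefore $\Lang\in\ioBPPpoly=\ioPpoly$, a contradiction.

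The hardest part is making this hybrid rigorous with a single global application of zero-knowledge. The flagged event concerns the honest transcript, but the hybrids mix honest and $\widetilde{\Pro}$ messages. I would handle this by defining $v$ with respect to $\widetilde{\Pro}$'s continuation, which is fixed across hybrids, and by writing the honest acceptance probability as the expectation of $v$ along the honest path, truncated at the first flagged round. The UE errors, which appear in the prefix distribution seen by $N_i$, are absorbed into the $\delta$ terms.
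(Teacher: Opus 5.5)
Your proposal is correct and is essentially the paper's proof. It uses the same recursively defined UE-based prover $\widetilde{\Pro}$ with in-head success estimation (public coins and constant $k$ used exactly where the paper uses them), the same quantile-flagging distinguisher bounded on simulated transcripts by exchangeability, and the same round-by-round hybrid from $\Pro$ to $\widetilde{\Pro}$ with a single global use of zero-knowledge. The differences are cosmetic: your distinguisher also checks acceptance, so your truncated-expectation bookkeeping on $\Pr_{\text{real}}[\text{accept}\wedge\text{no flag}]$ replaces the paper's telescoping $\Pr[\overline{R_{i-1}}\wedge R_i]$ terms, and (as in the paper's definition of ``okay''/``very good'') the quantiles should be taken with respect to the $N_i$-induced sampling distribution on the actual prefix, so UE error enters only in the simulated-transcript bound.
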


Before proving this theorem, it will help to define some tools that will assist us in the proof.  In all of the sections below, let
\begin{itemize}
    \item $(\Gen, \Pro, \Ver)$ be the $(\ec, \es, \ezk)$-ZK argument for $\Lang$ guaranteed by the theorem statement with simulator $\Sim$,
    \item $k$ be the (constant) number of rounds of interaction in this argument,
    \item $p(n)$ be a polynomial such that for sufficiently large $n$, $\ec(n) + \es(n) + \ezk(n) < 1 - \frac{1}{p(n)}$,
    \item $p_\Est(n)$ as the polynomial $256k^2 \cdot n \cdot p(n)^2$,
    \item $M$ as an auxiliary-input sampler where $M(x;r)$ interprets $r = r_1 \| r_2$ for $r_1 \in [k]$ and outputs the first $r_1 - 1$ rounds of the transcript generated by $\Sim(x;r_2)$,
    \item $N$ be the io-UE machine guaranteed (assuming auxiliary-input one-way functions don't exist) by Theorem \ref{thm:owf-vs-ioue} with respect to sampler $M$ and polynomial $8k^2 \cdot p(n)$, and
    \item $\paramSet$ as the (infinite) set of $n$ such that the guarantees of $N$ hold.
\end{itemize}

\subsection{Prover $\widetilde{\Pro}$}
We start by defining a malicious prover strategy $\widetilde{\Pro}$ for $(\Gen, \Pro, \Ver)$.   Intuitively, $\widetilde{\Pro}$ will, given the current partial transcript of interaction with $\Ver$, use $N$ to sample ``many'' possible next prover messages.  In order to choose which to use, $\widetilde{\Pro}$ will estimate the probability of each leading to an accepting transcript if it continues to use this strategy in future rounds, and pick the one with the largest success probability.  We formally define the behavior of $\widetilde{\Pro}$ in Algorithm \ref{alg:zk-p-tilde}; we've separated out the definition of the success probability estimation procedure $\Est$ as Algorithm \ref{alg:zk-prob-est}, as this will see use elsewhere.

\begin{algorithm}
    \DontPrintSemicolon
    \caption{Malicious Prover $\widetilde{\Pro}$} \label{alg:zk-p-tilde}
    \KwIn{$x \in \zo^{n}$, partial transcript $\transcript$ with $\Pro$ sending the next message}
    \KwOut{$m$ as the next message to send}
    Let $i$ be the number of rounds already completed in $\transcript$ \;
    \For {$\_ \gets 1$ \KwTo $16k \cdot n \cdot p(n)$}{
        Compute $r' \gets N(x, \transcript)$ \;
        Parse $r' = r_1' \| r_2'$ \;
        Compute $\transcript'$ as the first $i + 1$ messages of $\Sim(x;r_2')$ \;
        Compute $\Est(x, \transcript')$ \;
    }
    Let $\transcript^*$ be the transcript with the highest value of $\Est(x, \transcript')$ sampled above \;
    Output $m^*$ as the $(i + 1)$st message in $\transcript^*$
\end{algorithm}

\begin{algorithm}
    \DontPrintSemicolon
    \caption{Success Probability Estimator $\Est$} \label{alg:zk-prob-est}
    \KwIn{$x \in \zo^{n}$, (possibly partial) transcript $\transcript$}
    \KwOut{Estimate of the probability that $\Ver$ will accept when interacting with $\widetilde{\Pro}$ starting on transcript $\tau$}
    \For {$\_ \gets 1$ \KwTo $12p_\Est(n)^4$}{
        Simulate\footnotemark \ interaction between $\Ver$ and $\widetilde{\Pro}$ starting from partial transcript $\transcript$ \;
    }
    Output fraction of iterations where $\Ver$ accepted \;
\end{algorithm}

\footnotetext{We note that this simulation is possible because $\Ver$ is public-coin, and thus has no internal state.  In fact, this is the \emph{only} reason we need to limit our proof to public-coin protocols.}

\begin{remark}
    We note that while the definitions of $\widetilde{\Pro}$ and $\Est$ refer to each other, $\widetilde{\Pro}$ always calls $\Est$ on a strictly longer transcript than it started with, so these definitions are not circular. 
\end{remark}

We now prove some useful properties of $\widetilde{\Pro}$ and $\Est$.

\begin{lemma} \label{lem:pro-est-efficient}
   $\widetilde{\Pro}$ and $\Est$ run in polynomial time.
\end{lemma}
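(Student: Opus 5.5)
Since $\widetilde{\Pro}$ and $\Est$ call each other, I will bound their running times by induction on the number of rounds left to go. Let $i$ be the number of rounds already completed in the input transcript, and let $T_{\Pro}(n,i)$ and $T_{\Est}(n,i)$ be the worst-case running times of $\widetilde{\Pro}$ and $\Est$ on inputs $x\in\zo^n$ and such a transcript. The quantity to induct on is $k-i$, which is well-founded by the remark that $\widetilde{\Pro}$ only calls $\Est$ on strictly longer transcripts.

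For the base case, take a complete transcript ($i=k$). Here $\Est$ only runs $\Ver$'s decision procedure $12p_\Est(n)^4$ times, which costs $\poly(n)$ because $\Ver$ is a uniform PPT machine.

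For the inductive step on $\widetilde{\Pro}$, a call with $i$ rounds completed makes $16k\cdot n\cdot p(n)$ iterations. Each iteration runs $N$ once, runs $\Sim$ once (both are polynomial time in $n$), and calls $\Est$ on a transcript with $i+1$ prover messages. This gives
\[ T_{\Pro}(n,i) \le q(n)\bigl(\poly(n) + T_{\Est}(n,i+1)\bigr) \]
for a fixed polynomial $q$.

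For the inductive step on $\Est$, a call on a partial transcript simulates the remainder of the interaction $12p_\Est(n)^4$ times. Each simulation alternates fresh public coins from $\Ver$, which are polynomial time to sample, with calls to $\widetilde{\Pro}$ on transcripts having between $i$ and $k-1$ completed rounds. That is at most $k$ calls per simulation, followed by one final run of $\Ver$'s decision procedure. So $T_{\Est}(n,i)$ is at most $q'(n)$ times the sum of $T_{\Pro}(n,j)$ over $j\ge i$, plus $\poly(n)$.

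Unrolling the two recurrences, the recursion depth is at most $2k$. Each level multiplies the cost by a fixed polynomial, and the number of branches per level is at most $k$ times a polynomial. The total running time is therefore $\poly(n)^{O(k)}$.

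The step that needs care is exactly this unrolling, because the running time compounds multiplicatively across levels rather than adding. The bound is polynomial only because $k$ is a constant, which is where the constant-round hypothesis of Theorem \ref{thm:zk-to-owf-wc} is used. I will also note explicitly that $\Ver$ needs no internal state in these simulations because the protocol is public-coin, so the simulation can start from an arbitrary partial transcript.
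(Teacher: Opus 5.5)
Your proof is correct and follows essentially the same route as the paper. Both argue by recursion on the number of remaining messages, with each level costing a fixed polynomial factor, so the total is $\poly(n)^{O(k)}$, which is polynomial because $k$ is constant; the paper just collapses your mutual recursion into the single recurrence $T_{\widetilde{\Pro}}(i) \le q(k,n)\cdot T_{\widetilde{\Pro}}(i+2)$, and your remark about public coins matches the paper's footnote.
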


\begin{proof}
    Define $T_{\widetilde{\Pro}}(i)$ as the run time of $\widetilde{\Pro}$ when it is given a transcript with $i$ messages in it.  We note that this is a decreasing function, as $\widetilde{\Pro}$
    needs to do less work estimating its success probability when the transcript is closer to completion.
    
    Our first claim is that there exists a polynomial $q(\cdot, \cdot)$ such that $T_{\widetilde{\Pro}}(i) \leq q(k, n) \cdot T_{\widetilde{\Pro}}(i + 2)$ and $T_{\widetilde{\Pro}}(k - 1) \leq q(k, n)$.  Indeed, ignoring various (additive) polynomial overheads, when $\widetilde{\Pro}$ is given a transcript with $i$ messages, it simply needs to run $\Est$ on transcripts with $i + 1$ messages a total of $16k \cdot n \cdot p(n)$ times.  Each call to $\Est$ needs to simulate $12p_\Est(n)^4$ interactions between $\widetilde{\Pro}$ and $\Ver$, each of which needs to run $\widetilde{\Pro}$ at most $k$ times, each time starting with a transcript that has at least $i + 2$ messages.  Combining these two statements, we get the first part of our claim.
    %\footnote{Technically the additive overheads will mean that we only get that $T_{\widetilde{\Pro}}(i) \leq k^{10} \cdot q_1(n) \cdot T_{\widetilde{\Pro}}(i + 2) + q_2(n)$ for some polynomials $q_1$ and $q_2$.  But since $T_{\widetilde{\Pro}}(i + 2) \geq 1$, we can rewrite this as $T_{\widetilde{\Pro}}(i) \leq k^{10} \cdot (q_1(n) + q_2(n)) \cdot T_{\widetilde{\Pro}}(i + 2)$.}
    For the second part, we note that $\Est$ on a full transcript simply needs to check if $\Ver$ accepts $12p_\Est(n)^4$ times, each of which takes polynomial time independent of $\widetilde{\Pro}$.
    Since $\widetilde{\Pro}$ on input a transcript with $k - 1$ messages simply needs to run $\Est$ on a full transcript $16k \cdot n \cdot p(n)$ times, we get the second half of the claim.

    Given this claim, we can say that $T_{\widetilde{\Pro}}(0) \leq q(k, n)^{O(k)}$.  Since $k$ is a constant\footnote{We note that this is the only reason we need to limit our proof to constant round protocols.} and $T_{\widetilde{\Pro}}$ is a decreasing function, we have that $\widetilde{\Pro}$ always runs in polynomial time.  Furthermore noting that $\Est$ only ever needs to run $\widetilde{\Pro}$ a polynomial number of times (along with some other polynomial additive overhead), we have that $\Est$ also always runs in polynomial time.
\end{proof}

\begin{lemma} \label{lem:zk-est-error}
    On any input $\transcript$, the output of $\Est$ is within an additive distance of $\frac{1}{p_\Est(n)}$ of the true acceptance probability with probability at least $1 - \frac{1}{p_\Est(n)}$.
\end{lemma}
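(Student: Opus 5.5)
The plan is a direct application of the additive Chernoff bound (bound \ref{eqn:chernoff-add} of Theorem \ref{thm:chernoff-combined}). Fix the input $x \in \zo^n$ and the (possibly partial) transcript $\transcript$, and let $q$ denote the true probability that $\Ver$ accepts when interacting with $\widetilde{\Pro}$ starting from $\transcript$; here $q$ is taken over all internal randomness of $\widetilde{\Pro}$ (including its own recursive calls to $\Est$ and $N$) and over $\Ver$'s public coins in the remaining rounds. Since $\Ver$ is public-coin and therefore stateless, each of the $m = 12p_\Est(n)^4$ iterations of the loop in Algorithm \ref{alg:zk-prob-est} samples a fresh continuation with exactly this distribution. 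Each iteration uses fresh, independent randomness, so the indicators $X_1, \dots, X_m$ that $\Ver$ accepts in iteration $j$ are i.i.d.\ Bernoulli$(q)$, and the output of $\Est$ is $X/m$ with $X = \sum_j X_j$.

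The estimate is off by at least $\frac{1}{p_\Est(n)}$ exactly when $|X - mq| \geq \frac{m}{p_\Est(n)} =: \Delta$. Bound \ref{eqn:chernoff-add} gives
\[
\P\left[|X - mq| \geq \Delta\right] \leq 2\exp\left(-\frac{\Delta^2}{m}\right) = 2\exp\left(-\frac{m}{p_\Est(n)^2}\right) = 2\exp\left(-12p_\Est(n)^2\right).
\]
It then remains to check that $2\exp(-12p_\Est(n)^2) \leq \frac{1}{p_\Est(n)}$. This holds for every $n \geq 1$, because $p_\Est(n) = 256k^2 n p(n)^2 \geq 1$, and for $y \geq 1$ we have $2y \leq e^{12y^2}$.

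The only step that needs care is justifying that the iterations are genuinely i.i.d.\ samples of the quantity being estimated. The ``true acceptance probability'' must be defined as the probability for the randomized strategy $\widetilde{\Pro}$, with its internal sampling and estimation randomness included, rather than for some idealized deterministic prover. With that definition, each simulated run is distributed exactly as a real interaction of $\widetilde{\Pro}$ with $\Ver$ from $\transcript$. This is well defined and non-circular by the remark following Algorithm \ref{alg:zk-prob-est}, since all recursive calls act on strictly longer transcripts. Independence across iterations then follows from the use of fresh coins, and the Chernoff bound applies directly.
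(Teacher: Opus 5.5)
Your proposal is correct and matches the paper's proof: it applies the additive Chernoff bound with $\Delta = 12p_\Est(n)^3$ to get $2\exp(-12p_\Est(n)^2) \leq \frac{1}{p_\Est(n)}$. Your extra justification, that the simulated runs are i.i.d.\ samples of the randomized $\widetilde{\Pro}$'s acceptance probability because $\Ver$ is public-coin and recursion is only on longer transcripts, is a welcome explicit version of what the paper leaves implicit.
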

\begin{proof}
    Let $p_\transcript$ 

    be the true acceptance probability given that $\widetilde{\Pro}$ and $\Ver$ start with the transcript $\transcript$. An application of bound \ref{eqn:chernoff-add} in Theorem \ref{thm:chernoff-combined} (with respect to $\Delta = 12p_\Est(n)^3$) tells us that
    \begin{align*}
        \P[|\Est(x, \transcript) - p_\transcript| \geq \frac{1}{p_\Est(n)}] &\leq 2\exp\left( -\frac{144p_\Est(n)^6}{12 p_\Est(n)^4 } \right) \\
        & = 2\exp(-12 p_\Est(n)^2) \leq \frac{1}{p_\Est(n)}
    \end{align*}
\end{proof}

\noindent Before stating our next property, we need to introduce some notation.

\begin{definition} \label{def:p-msg-quality}
    Fix a partial transcript $\transcript$ where the prover is going next, and let $i$ be the number of messages in $\transcript$.  For any possible next prover message $m$, let $p_m$ be the probability that $\Ver$ would accept if it interacted with $\widetilde{\Pro}$ starting on the transcript $\transcript \| m$. We define $q_\text{threshold}$ as the threshold such that $q_\text{threshold} \geq p_{m'}$ for at least $1-\frac{1}{16k \cdot p(n)}$ fraction (by probability) of messages $m'$ sampled using $N$. More formally, let $q_\text{threshold}$ be the smallest value such that
    \begin{align*}
    \P\left[ q_\text{threshold} \geq p_{m'}\ \middle | \begin{array}{c}
         r'= r_1' \| r_2' \gets N(x, \transcript) \\
         \tau' \gets \Sim(x;r_2') \\
         m' = \text{the } (i + 1) \text{st message in } \transcript'
    \end{array} \right] > 1 - \frac{1}{16k \cdot p(n)}
\end{align*}
%     Then we let the \emph{quality} of $m$ be the smallest value $q$ such that
%     \begin{align*}
%     \P\left[ p_m \geq p_{m'} + q \ \middle | \begin{array}{c}
%          r'= r_1' \| r_2' \gets N(x, \transcript) \\
%          \tau' \gets \Sim(x;r_2') \\
%          m' = \text{the } (i + 1) \text{st message in } \transcript'
%     \end{array} \right] > 1 - \frac{1}{16k \cdot p(n)}
% \end{align*}
% (Informally, the quality of a message captures how much it increases the success probability compared to ``most'' messages that could be sampled using $N$; note that $q$ is a real number between $-1$ and $1$.) 
We call $m$ ``okay'' if $p_m \geq q_\text{threshold}-\frac{2}{p_\Est(n)}$, ``good'' if $p_m \geq q_\text{threshold}$, and ``very good'' if $p_m \geq q_\text{threshold}+\frac{2}{p_\Est(n)}$. 
\end{definition}

\iffalse{
For a given partial transcript $\transcript$ with $i$ messages where the prover is going next and a prover message $m$, let $p_m$ be the probability that $\Ver$ would accept if it interacted with $\widetilde{\Pro}$ starting on the transcript $\transcript \| m$.  Call $m$ ``good'' if
\begin{align*}
    \P\left[ p_m \geq p_{m'} - \frac{2}{p_\Est(n)} \middle | \begin{array}{c}
         r'= r_1' \| r_2' \gets N(x, \transcript) \\
         \tau' \gets \Sim(x;r_2') \\
         m' = \text{the } (i + 1) \text{st message in } \transcript'
    \end{array} \right] > 1 - \frac{1}{16k \cdot p(n)}
\end{align*}
(Informally, $m$ is ``good'' if its success probability almost as good as most messages next messages that we would get by using $N$.)  In later sections, we will also refer to $m$ as ``very good'' if
\begin{align*}
    \P\left[ p_m > p_{m'} + \frac{2}{p_\Est(n)} \middle | \begin{array}{c}
         r'= r_1' \| r_2' \gets N(x, \transcript) \\
         \tau' \gets \Sim(x;r_2') \\
         m' = \text{the } (i + 1) \text{st message in } \transcript'
    \end{array} \right] > 1 - \frac{1}{16k \cdot p(n)}
\end{align*}
(Informally, $m$ is ``very good'' if its success probability enough \emph{better} than most messages next messages that we would get by using $N$.  Note that any ``very good'' message is also in particular ``good''.)
}\fi

\begin{lemma} \label{lem:zk-p-tilde-good}
    Let $\transcript$ be any partial transcript where the prover sends the next message.  If we run $\widetilde{\Pro}(x, \transcript)$ to get a next prover message $m$, then $m$ is ``okay'' with probability at least $1 - \frac{1}{16k \cdot p(n)} - e^{-n}$.
\end{lemma}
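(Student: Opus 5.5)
Write $L = 16k\cdot n\cdot p(n)$ for the number of candidates $\widetilde{\Pro}(x,\transcript)$ samples, and $\epsilon = \frac{1}{16k\cdot p(n)}$. The plan is to show that, except with small probability, (a) at least one sampled candidate is ``good'', and (b) every estimate $\Est$ that $\widetilde{\Pro}$ computes is accurate. Given both, the chosen message is forced to be ``okay''.

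\emph{Step 1: some sampled candidate is good.} Call $m'$ ``bad'' if $p_{m'} < q_\text{threshold}$, i.e. it is not good. By minimality of $q_\text{threshold}$ in Definition \ref{def:p-msg-quality}, the probability over $N$ and $\Sim$ that the sampled message is bad is at most $1-\epsilon$. Concretely, the set of values $q$ satisfying the inequality is closed upward, and letting $q \downarrow q_\text{threshold}$ shows $\P[p_{m'} \le q_\text{threshold}] \ge 1-\epsilon$. Then
\[
\P[p_{m'} < q_\text{threshold}] \le 1-\epsilon,
\]
since the candidates meeting $p_{m'} \ge q_\text{threshold}$ with positive mass make up the top tail. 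I expect this step to be the main obstacle, because it needs a careful treatment of atoms, given that $q_\text{threshold}$ is defined via a strict inequality. If the mass at exactly $q_\text{threshold}$ causes trouble, I will instead argue directly: the set $\{m' : p_{m'} \ge q_\text{threshold}\}$ has probability at least $\epsilon$, since otherwise a slightly smaller $q$ would already satisfy the defining inequality, contradicting minimality. The $L$ iterations in Algorithm \ref{alg:zk-p-tilde} draw candidates independently, since each call $N(x,\transcript)$ uses fresh coins on the same input $\transcript$. So the probability that none is good is at most
\[
(1-\epsilon)^{L} \le \exp(-n).
\]
This is the $e^{-n}$ term in the statement.

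\emph{Step 2: all estimates are accurate.} By Lemma \ref{lem:zk-est-error}, each call $\Est(x,\transcript')$ is within $\frac{1}{p_\Est(n)}$ of $p_{m'}$ except with probability $\frac{1}{p_\Est(n)}$. Note that $\Est$ on $\transcript' = \transcript\|m'$ estimates exactly $p_{m'}$. A union bound over the $L$ calls gives total failure probability at most
\[
\frac{16k\cdot n\cdot p(n)}{256k^2\, n\, p(n)^2} = \frac{1}{16k\cdot p(n)}.
\]
This is why $p_\Est$ was chosen as it was, and it accounts for the other error term.

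\emph{Step 3: conclude.} Condition on both good events. Let $g$ be a good candidate and $m^*$ the chosen one. Then
\[
p_{m^*} \ge \Est(m^*) - \tfrac{1}{p_\Est(n)} \ge \Est(g) - \tfrac{1}{p_\Est(n)} \ge p_g - \tfrac{2}{p_\Est(n)} \ge q_\text{threshold} - \tfrac{2}{p_\Est(n)},
\]
so $m^*$ is okay. Combining the two failure bounds gives probability at least $1 - \frac{1}{16k\cdot p(n)} - e^{-n}$, as claimed.
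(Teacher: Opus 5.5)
Your proposal is correct and follows the paper's proof essentially step for step: a union bound over the $16k\cdot n\cdot p(n)$ calls to $\Est$ (costing $\frac{1}{16k\cdot p(n)}$), the bound $(1-\frac{1}{16k\cdot p(n)})^{16k\cdot n\cdot p(n)}\le e^{-n}$ on never sampling a ``good'' candidate, and the $\frac{2}{p_\Est(n)}$ sandwich showing the chosen message is ``okay''. The one thing to tidy is Step 1: your first justification (letting $q\downarrow q_\text{threshold}$) only yields $\P[p_{m'}\le q_\text{threshold}]\ge 1-\epsilon$, which does not imply what you need, so keep your fallback, namely that minimality of $q_\text{threshold}$ gives $\P[p_{m'}<q_\text{threshold}]=\sup_{q<q_\text{threshold}}\P[p_{m'}\le q]\le 1-\epsilon$ (a step the paper simply asserts).
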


\begin{proof}
    We note that $\widetilde{\Pro}$ makes $16k \cdot n \cdot p(n)$ calls to $\Est$.  By Lemma \ref{lem:zk-est-error} and a union bound, we know that the probability that any of these calls is more than $\frac{1}{p_\Est(n)}$ away from the true success probability is at most $\frac{16k \cdot n \cdot p(n)}{p_\Est(n)} = \frac{1}{16k \cdot p(n)}$.

    \iffalse{
    Suppose now we order all possible messages $m$ a single iteration of the for loop in $\widetilde{\Pro}$ could sample by $p_m$ (as defined above, breaking ties arbitrarily).  Then we can immediately say that each iteration of the for loop samples a message in the top $\frac{1}{16k \cdot p(n)}$ fraction of these messages with probability $\frac{1}{16k \cdot p(n)}$.  Thus, the probability that \emph{no} iteration does this is $(1 - \frac{1}{16k \cdot p(n)})^{\frac{1}{16k \cdot n \cdot p(n)}} \leq e^{-n}$.
    }\fi

    But we now note that since each iteration of the for loop in $\widetilde{\Pro}$ samples from exactly the distribution of messages used in Definition \ref{def:p-msg-quality}, each iteration chooses a ``good'' message with probability at least $\frac{1}{16k \cdot p(n)}$.  Thus, the probability that \emph{no} iteration chooses a ``good'' message is at most $(1 - \frac{1}{16k \cdot p(n)})^{16k \cdot n \cdot p(n)} \leq e^{-n}$.
    
    Now note that if neither of the above bad events happen, the message $m^*$ output by $\widetilde{\Pro}$ must be ``okay''.\footnote{In the worst case, we could underestimate the success probability of every ``good'' message we sampled by $\frac{1}{p_\Est(n)}$, and choose a different message due to overestimating its success probability by $\frac{1}{p_\Est(n)}$.}  Thus, we have that $\widetilde{\Pro}$ outputs an ``okay'' message with probability at least $1 - \frac{1}{16k \cdot p(n)} - e^{-n}$, as desired.
\end{proof}

\subsection{Distinguisher $\Distinguisher$}

We next define a distinguisher $\Distinguisher$ that we will use in our proof.  Intuitively, $\Distinguisher$ will check each prover message to see if it is ``much better'' than ``most'' next messages that we would get by using $N$.  In our proof, we will use the fact that this distinguisher can only have $\ezk(n)$ advantage in distinguishing a real from a simulated transcript in order to argue that our algorithm cannot lose too much success probability.  We formally define $\Distinguisher$ as Algorithm \ref{alg:zk-dist}.

\begin{algorithm}
    \DontPrintSemicolon
    \caption{Distinguisher $\Distinguisher$} \label{alg:zk-dist}
    \KwIn{$x \in \zo^{n}$, (full) transcript $\transcript$}
    \KwOut{Decision bit $b$, with $b = 0$ representing that $\transcript$ was generated by $\Sim$}
    \For {each $\Pro$ round $i$}{
        Let $\transcript_i$ be the first $i$ rounds of $\transcript$ and $\transcript_{i - 1}$ the first $i - 1$ rounds \;
        Compute $\Est(x, \transcript_i)$ \;
        \For {$\_ \gets 1$ \KwTo $8k \cdot n \cdot p(n)$}{
            Compute $r' \gets N(x, \transcript_{i - 1})$ \;
            Parse $r' = r_1' \| r_2'$ \;
            Compute $\transcript'$ as $\transcript_{i-1}$ concatenated with the $i$th message of $\Sim(x;r_2')$ \;
            Compute $\Est(x, \transcript')$ \;
        }
        Let $\text{count}$ be the number of iterations where $\Est(x, \transcript') \geq \Est(x, \transcript_i)$ \;
        \If{$\text{count} < n$}{
            Output $b = 1$ \;
        }
    }
    Output $b = 0$ \;
\end{algorithm}

We now prove some useful properties of $\Distinguisher$.

\begin{lemma} \label{lem:zk-dist-sim-prob}
    For large enough $n \in \paramSet$ and $x \in \zo^n$, $\P[\Dist(x, \transcript) = 1 \mid \transcript \gets \Sim(x)] \leq \frac{1}{4 p(n)}$.
\end{lemma}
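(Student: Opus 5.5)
We bound the probability that $\Distinguisher$ outputs $1$ on a simulated transcript round by round, and then take a union bound over the at most $k$ prover rounds. It suffices to show that for each prover round $i$, the probability that $\text{count} < n$ in round $i$ is at most $\frac{1}{4kp(n)}$.

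Fix a prover round $i$ and let $\transcript \gets \Sim(x)$. The first step replaces the real prefix by a resampled one. In the actual experiment, $\transcript_i$ is the first $i$ messages of $\Sim(x;r)$ for uniform $r$. Consider instead the hybrid where we take $\transcript_{i-1}$ from $\Sim(x)$, sample $r' \gets N(x,\transcript_{i-1})$, and let $\transcript_i$ be $\transcript_{i-1}$ followed by the $i$th message of $\Sim(x;r_2')$. The sampler $M$ outputs the first $r_1-1$ rounds of $\Sim(x;r_2)$. Conditioned on $r_1=i$, the io-UE guarantee for $n\in\paramSet$ therefore says that (uniform $r_2$, prefix) and ($N$'s output, prefix) are close in statistical distance. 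The round index $r_1$ is uniform over $[k]$, so conditioning on it costs a factor $k$, giving distance at most $k\cdot\frac{1}{8k^2p(n)}=\frac{1}{8kp(n)}$. Since everything else $\Distinguisher$ does is a (randomized) function of $(\transcript_{i-1},\transcript_i)$, the round-$i$ rejection probability changes by at most $\frac{1}{8kp(n)}$. (I may need to check whether $N$'s output also fixes $r_1'=i$. Only $r_2'$ is used, so what matters is the marginal distance, which the above conditioning argument handles.)

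In the hybrid, the tested message $m_0 := $ the $i$th message of $\transcript_i$ and the $8kn\,p(n)$ sampled messages $m_1,\dots,m_T$ are i.i.d. given $\transcript_{i-1}$. The comparisons use independent $\Est$ runs, so the values $E_j=\Est(x,\transcript_{i-1}\|m_j)$ are also i.i.d. given $\transcript_{i-1}$, each being a fresh draw from the same randomized process. Let $\rho = \P[E_1 \ge E_0]$ over this joint randomness. By exchangeability, $\P[E_1\ge E_0]\ge 1/2$, and more usefully, for a typical $E_0$ the chance that another draw is at least as large is about the rank fraction. Rather than pass through $\rho$ directly, I would use the symmetry argument. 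Among $T+1$ exchangeable values, the event that fewer than $n$ of the other $T$ values are $\ge E_0$ means $E_0$ is strictly ranked among the top $n$ (ties counted in its disfavor), and this has probability at most $\frac{n}{T+1}$. Any ties only help, since ties increase $\text{count}$. With $T=8kn\,p(n)$ this gives at most $\frac{1}{8kp(n)}$.

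Adding the two contributions gives round-$i$ rejection probability at most $\frac{1}{4kp(n)}$, and the union bound over the at most $k$ prover rounds yields $\frac{1}{4p(n)}$. The main subtlety I expect is the exchangeability step. Strictly, the claim is that the probability of $E_0$ being strictly above all but fewer than $n$ of the $E_j$ is at most $n/(T+1)$. I would prove it by a random-permutation argument: given the multiset of values, at most $n$ positions can have fewer than $n$ others $\ge$ them. This argument avoids any Chernoff bound or any need for $\Est$ to be accurate.
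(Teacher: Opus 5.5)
Your proof is correct and follows the paper's argument: an io-UE switch costing $\frac{1}{8kp(n)}$ per round (via the same factor-$k$ loss from conditioning on $r_1 = i$), then an exchangeability/rank argument bounding the round-$i$ probability of $\text{count} < n$ by roughly $n/T = \frac{1}{8kp(n)}$, and a union bound over rounds. The only difference is organizational: you resample just the $i$th message separately for each prover round, whereas the paper moves the whole transcript to $\Hyb_0$ once (total cost $\frac{1}{8p(n)}$) and then argues round by round; both reach the same $\frac{1}{4p(n)}$ bound.
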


\begin{proof}
    We first define a sequence of hybrids for how we generate $\transcript$.  For $i \in [k]_0$, let $\Hyb_i(x)$ be defined as in Algorithm \ref{alg:zk-transcript-hyb}.  We note immediately that $\Hyb_k$ is identical to $\Sim$.  Furthermore, the guarantees on $N$ tell us that $\Hyb_i(x)$ has statistical distance at most $\frac{1}{8k \cdot p(n)}$ from $\Hyb_{i + 1}(x)$.\footnote{Note that this loses a factor of $k$ compared to the overall guarantee on $N$ since we are specifically asking $N$ to work on a partial transcript with $i$ messages, where $M$ outputs such a transcript with probability $\frac{1}{k}$.}  Thus, we have that $\Sim(x)$ has statistical distance at most $\frac{1}{8p(n)}$ from $\Hyb_0(x)$.

    \begin{algorithm}
    \DontPrintSemicolon
    \caption{Hybrid Transcript Generation $\Hyb_i$} \label{alg:zk-transcript-hyb}
    \KwIn{$x \in \zo^{n}$}
    \KwOut{(Full) transcript $\transcript$}
    Let $\transcript$ be the first $i$ rounds of $\Sim(x)$ \;
    \For{$j \gets i + 1$ \KwTo $k$}{
        Compute $r' \gets N(x, \transcript)$ \;
        Parse $r' = r_1' \| r_2'$ \;
        Append the $j$th message in $\Sim(x;r_2')$ to $\transcript$ \;
    }
    Output $\transcript$ \;
\end{algorithm}

    To complete the proof, we now claim that $\P[\Dist(x, \transcript) = 1 \mid \transcript \gets \Hyb_0(x)] \leq \frac{1}{8p(n)}$.  Indeed, consider the iteration of $\Distinguisher$ corresponding to prover round $i$.  We note that because $\transcript$ was sampled from $\Hyb_0$, $\transcript_i$ is drawn from the same distribution as each $\transcript'$ that $\Distinguisher$ samples in this iteration.  Thus, if we order $\transcript_i$ and each sampled $\transcript'$ by their respective values of $\Est$ (breaking ties randomly), they will be in a random order.  Since it is only possible for $\Distinguisher$ to output 1 in this iteration if $\transcript_i$ is in the top $n$ of this order, we have that the probability of this happening is at most $\frac{n}{8k \cdot n \cdot p(n)} = \frac{1}{8k \cdot p(n)}$.  Taking a union bound over all prover rounds, we get our desired claim.
\end{proof}

\begin{lemma} \label{lem:zk-dist-catch-good}
    Let $\transcript$ be a (full) transcript and $n = |x|$ sufficiently large.  If any prover message in $\transcript$ is ``very good'' (as in Definition \ref{def:p-msg-quality}, with respect to the partial transcript up to that message), $\Distinguisher(x, \transcript)$ will output $b = 1$ in the iteration corresponding to that round with probability at least $1 - \frac{1}{16k \cdot p(n)}$.
\end{lemma}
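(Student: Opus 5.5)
Fix the prover round $i$ whose message $m$ (so $\transcript_i = \transcript_{i-1} \| m$) is ``very good'', i.e.\ $p_m \geq q_\text{threshold} + \frac{2}{p_\Est(n)}$. The plan is to show two things hold simultaneously with high probability: all estimates computed in this iteration of $\Distinguisher$ are accurate, and few of the sampled messages beat the threshold. Together these force $\text{count} < n$.

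\textbf{Step 1 (accurate estimates).} In the iteration for round $i$, $\Distinguisher$ makes $8k \cdot n \cdot p(n) + 1$ calls to $\Est$. By Lemma \ref{lem:zk-est-error} and a union bound, every call is within $\frac{1}{p_\Est(n)}$ of the true acceptance probability, except with probability at most
\[
\frac{8k \cdot n \cdot p(n) + 1}{p_\Est(n)} \leq \frac{9k \cdot n \cdot p(n)}{256 k^2 \cdot n \cdot p(n)^2} < \frac{1}{32k \cdot p(n)}.
\]
On this event we get $\Est(x,\transcript_i) \geq p_m - \frac{1}{p_\Est(n)} \geq q_\text{threshold} + \frac{1}{p_\Est(n)}$. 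So a sampled $\transcript'$ with next message $m'$ can satisfy $\Est(x,\transcript') \geq \Est(x,\transcript_i)$ only if $p_{m'} + \frac{1}{p_\Est(n)} \geq q_\text{threshold} + \frac{1}{p_\Est(n)}$, that is, only if $p_{m'} \geq q_\text{threshold}$. The inequality is non-strict, which matters for the next step.

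\textbf{Step 2 (few samples reach the threshold).} Each sampled $m'$ is drawn exactly from the distribution in Definition \ref{def:p-msg-quality}, independently across iterations. The main obstacle is turning the definition of $q_\text{threshold}$ into a bound on $\P[p_{m'} \geq q_\text{threshold}]$.

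The definition only says that $q_\text{threshold}$ is the smallest value with $\P[p_{m'} \leq q] > 1 - \frac{1}{16k\cdot p(n)}$. Minimality gives $\P[p_{m'} < q_\text{threshold}] \leq 1 - \frac{1}{16k\cdot p(n)}$, which is a lower bound on $\P[p_{m'} \geq q_\text{threshold}]$, the wrong direction. The trouble is an atom at $q_\text{threshold}$. To handle it, I would use that the estimate is strictly above $q_\text{threshold}$ by a margin. Define the count via the event $p_{m'} > q_\text{threshold}$, which the definition bounds correctly: $\P[p_{m'} > q_\text{threshold}] < \frac{1}{16k\cdot p(n)}$. Then I would strengthen Step 1 so that only $p_{m'} > q_\text{threshold}$ can pass. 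This needs either strict slack (shrink the estimation error slightly, or note that the accuracy of Lemma \ref{lem:zk-est-error} is in fact much better than $\frac{1}{p_\Est(n)}$), or ties broken in the comparison. If this turns out too delicate, I would accept the expected count bound for a hypothetical slightly larger $q$ and absorb the difference.

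Given the bound, the expected count is below $\frac{8k \cdot n \cdot p(n)}{16k \cdot p(n)} = \frac{n}{2}$. Pessimistically take the per-sample probability to be exactly $\frac{1}{16k\cdot p(n)}$, so the mean is $\frac{n}{2}$. Bound \ref{eqn:chernoff-mult-above} of Theorem \ref{thm:chernoff-combined} with $\delta$ just under $1$ (or the standard form for $\delta = 1$) shows $\text{count} \geq n$ only with probability $e^{-\Omega(n)}$. For sufficiently large $n$ this is below $\frac{1}{32k\cdot p(n)}$.

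\textbf{Conclusion.} Combining the two failure probabilities, $\text{count} < n$ with probability at least $1 - \frac{1}{16k\cdot p(n)}$. Hence $\Distinguisher$ outputs $b = 1$ in the iteration for round $i$ (if it has not already output $1$ earlier), which proves the lemma.
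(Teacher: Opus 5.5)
Your proof is correct and follows the paper's argument. The steps match: a union bound over the $8k\cdot n\cdot p(n)+1$ calls to $\Est$, reducing $\Est(x,\transcript')\geq\Est(x,\transcript_i)$ to a threshold event on $p_{m'}$, then a multiplicative Chernoff bound around mean $n/2$.

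Your concern about an atom at $q_\text{threshold}$ is legitimate. The paper applies the $\frac{1}{16k\cdot p(n)}$ bound to the non-strict event $p_{m'}\geq p_{m_i}-\frac{2}{p_\Est(n)}$ without comment. Your strict-slack fix needs no new work: the proof of Lemma \ref{lem:zk-est-error} actually bounds $\P[|\Est(x,\transcript)-p_\transcript|\geq\frac{1}{p_\Est(n)}]$. So accurate estimates are strictly within $\frac{1}{p_\Est(n)}$, which forces $p_{m'}>q_\text{threshold}$.

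There is one arithmetic slip: $\frac{9}{256k\cdot p(n)}$ is not below $\frac{1}{32k\cdot p(n)}=\frac{8}{256k\cdot p(n)}$. The union bound actually gives $\frac{1}{32k\cdot p(n)}+\frac{1}{p_\Est(n)}$. This still leaves room under $\frac{1}{16k\cdot p(n)}$ for the $e^{-\Omega(n)}$ Chernoff term once $n$ is large, so the conclusion stands.
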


\begin{proof}
    Let $i$ be a prover round where the prover's message $m_i$ is ``very good''.  We first note that the corresponding iteration of $\Distinguisher$ only makes $8k \cdot n \cdot p(n) + 1$ calls to $\Est$.  Thus, by Lemma \ref{lem:zk-est-error} and a union bound, we have that all values returned by $\Est$ are within a $\frac{1}{p_\Est(n)}$ additive distance of their corresponding correct values with probability at least $1 - \frac{8k \cdot n \cdot p(n)}{p_\Est(n)} = 1 - \frac{1}{32k \cdot p(n)} - \frac{1}{p_\Est(n)}$.  For the remainder of the proof, we will assume this is the case.

    Since every call to $\Est$ is within an additive $\frac{1}{p_\Est(n)}$ distance of the true acceptance probability, we have that for any $\transcript'$ sampled in the iteration of $\Distinguisher$ corresponding to round $i$, we can only have $\Est(x, \transcript') \geq \Est(x, \transcript_i)$ if $p_{m'} \geq p_{m_i} - \frac{2}{p_\Est(n)}$, where $m'$ is the $i$th message of $\transcript'$.  But since $m_i$ is ``very good'', we know that this happens with probability at most $\frac{1}{16k \cdot p(n)}$ over the sampling of $\transcript'$.  Pessimistically assuming that this probability is exactly $\frac{1}{16k \cdot p(n)}$, bound \ref{eqn:chernoff-mult-above} in Theorem \ref{thm:chernoff-combined} (with respect to $\delta = \frac{1}{2}$) will tell us that the final value of $\text{count}$ can only exceed $\frac{3n}{4}$ with probability at most $\exp(-n/24)$; for sufficiently large $n$, this is at most $\frac{1}{p_\Est(n)}$.  Since $\Distinguisher$ will output 1 if $\text{count} < n$ (and so in particular if $\text{count} \leq \frac{3n}{4}$), this tells us that as long as no call to $\Est$ fails, $\Distinguisher$ will output 1 with probability at least $1 - \frac{1}{p_\Est(n)}$.

    Applying a union bound to the above two paragraphs, we get that overall $\Distinguisher$ outputs 1 with probability at least $1 - \frac{1}{32k \cdot p(n)} - \frac{2}{p_\Est(n)} \geq 1 - \frac{1}{16k \cdot p(n)}$ as desired.
\end{proof}

\subsection{Algorithm $\Alg$}

We can now describe our algorithm for deciding $\Lang$, as described in Algorithm \ref{alg:zk-alg}.

\begin{algorithm}
    \DontPrintSemicolon
    \caption{Algorithm $\Alg$ for $\Lang$} \label{alg:zk-alg}
    \KwIn{$x \in \zo^{n}$}
    \KwOut{Decision bit $b$}
    Simulate an interaction between $\widetilde{\Pro}$ and $\Ver$ on input $x$ \;
    \uIf{$\Ver$ accepts}{
        Output $b = 1$ \;
    }
    \Else{
        Output $b = 0$ \;
    }
\end{algorithm}

\begin{lemma}
    For all $x \in \zo^n - \Lang$, $\P[\Alg(x) = 1] \leq \es(n)$. 
\end{lemma}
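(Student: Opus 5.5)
The plan is to reduce directly to the soundness of the argument. Fix $x \in \zo^n - \Lang$. By construction, $\Alg(x)$ runs an interaction between $\widetilde{\Pro}$ and the honest verifier $\Ver$ on input $x$ and outputs exactly $\Ver$'s decision bit. Hence
\begin{equation*}
\P[\Alg(x) = 1] = \P\left[\langle \widetilde{\Pro}, \Ver\rangle(x) = 1\right].
\end{equation*}
Soundness bounds the right-hand side by $\es(n)$ for every non-uniform PPT prover once $|x|$ is large enough. So the whole proof comes down to checking that $\widetilde{\Pro}$ is a legitimate non-uniform PPT cheating prover.

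The key steps are as follows. First, Lemma \ref{lem:pro-est-efficient} says that $\widetilde{\Pro}$, including its recursive calls to $\Est$, runs in polynomial time. This uses that $k$ is constant. Second, $\widetilde{\Pro}$ uses the io-UE machine $N$, which is a non-uniform PPT machine. Hard-wiring $N$'s advice into $\widetilde{\Pro}$ therefore yields a non-uniform PPT algorithm, and the soundness definition explicitly allows non-uniform provers. Third, $\widetilde{\Pro}$ needs only the statement $x$ and the current transcript. It never uses a witness, and since the protocol is public coin it never needs $\Ver$'s private state. So it is a valid prover strategy in the soundness experiment, and the displayed equality holds.

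The main subtlety is that the soundness guarantee holds only for sufficiently large $|x|$, and it must hold for the specific (non-uniform) $\widetilde{\Pro}$ we built. Since $\widetilde{\Pro}$ is one fixed machine that does not depend on $x$ (only on $N$ and the protocol), the threshold is a single fixed $n_0$, and the bound holds for all $n \geq n_0$. We do not need $n \in \paramSet$ here, because soundness holds whatever $N$ does. The lemma should therefore be read as holding for sufficiently large $n$; smaller $n$ can be absorbed when instantiating Definition \ref{def:io-bpp}, which only constrains infinitely many large $n$. No probabilistic analysis of $\Est$ or of the ``okay'' messages is needed for this direction. All of that is reserved for the $x \in \Lang$ case.
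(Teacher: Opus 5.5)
Your proposal is correct and matches the paper's own proof: $\Alg(x)$ outputs $\Ver$'s verdict against $\widetilde{\Pro}$, which is a non-uniform PPT prover by Lemma~\ref{lem:pro-est-efficient}, so soundness bounds the acceptance probability by $\es(n)$. Your extra points are accurate refinements of what the paper leaves implicit: the bound holds only for sufficiently large $n$, $N$'s advice is hard-wired into $\widetilde{\Pro}$, and $n \in \paramSet$ is not needed for this direction.
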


\begin{proof}
    By Lemma \ref{lem:pro-est-efficient}, we know that $\widetilde{\Pro}$ is an efficient prover strategy.  Thus, soundness tells us that the probability that $\Ver$ accepts when interacting with $\widetilde{\Pro}$ on $x \not \in \Lang$ is at most $\es(|x|)$.
\end{proof}

\begin{lemma} \label{lem:zk-alg-correctness}
    For all sufficiently large $n \in \paramSet$ and all $x \in \zo^n \cap \Lang$, $\P[\Alg(x) = 1] \geq 1 - \ec(n) - \ezk(n) - \frac{1}{2p(n)}$.
\end{lemma}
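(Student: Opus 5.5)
The plan is a round-by-round coupling between the honest prover $\Pro$ and $\widetilde{\Pro}$, with the distinguisher $\Distinguisher$ (Algorithm \ref{alg:zk-dist}) charging every round in which $\Pro$ does noticeably better than $\widetilde{\Pro}$.

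For a partial transcript $\transcript$, let $v(\transcript)$ be the probability that $\Ver$ accepts when interacting with $\widetilde{\Pro}$ from $\transcript$. Then $\P[\Alg(x)=1]=v(\emptyset)$. Consider the honest interaction $\langle \Pro(w), \Ver\rangle(x)$. Let $E$ be the event that some prover message in the real transcript is ``very good'' with respect to its prefix (Definition \ref{def:p-msg-quality}). Combining Lemma \ref{lem:zk-dist-catch-good} with Lemma \ref{lem:zk-dist-sim-prob} and zero-knowledge against the efficient $\Distinguisher$ (efficient by Lemma \ref{lem:pro-est-efficient}) gives
\[
\Big(1-\tfrac{1}{16k p(n)}\Big)\P[E] \le \P[\Distinguisher=1 \mid \text{real}] \le \ezk(n) + \tfrac{1}{4p(n)}.
\]
Hence $\P[E]\le \ezk(n)+\frac{1}{4p(n)}+\frac{1}{16kp(n)}$.

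Next comes a potential argument. Define $\Phi_j = v(\transcript_j)$ along the real execution, where $\transcript_j$ is the real transcript after $j$ messages. At the end, $\Phi_{\mathrm{final}}\in\{0,1\}$ is exactly $\Ver$'s decision on the real transcript. Verifier rounds are handled by public-coin-ness: $\Ver$ samples fresh coins whether it talks to $\Pro$ or $\widetilde{\Pro}$, so $v(\transcript)$ is the average of $v$ over the verifier's next message and $\Phi$ is a martingale across those steps. For a prover round $i$ with prefix $\transcript$, there are three bounds:
\[
v(\transcript) \ge \Big(1-\tfrac{1}{16kp(n)}-e^{-n}\Big)\Big(q_\text{threshold}-\tfrac{2}{p_\Est(n)}\Big),
\]
which is Lemma \ref{lem:zk-p-tilde-good}; $v(\transcript)\ge q_\text{threshold} - \frac{1}{16kp(n)}-e^{-n}-\frac{2}{p_\Est(n)}$, which follows since $v\le 1$ and all quantities lie in $[0,1]$; and, whenever the honest message $m_i$ is not very good, $p_{m_i} < q_\text{threshold}+\frac{2}{p_\Est(n)}$. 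Together these give
\[
\Phi_{\text{after}} \le \Phi_{\text{before}} + \tfrac{4}{p_\Est(n)} + \tfrac{1}{16kp(n)} + e^{-n}
\]
on the complement of $E$ at that round. Let $\rho$ denote this per-round slack.

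To telescope, stop the process at the first very good honest message: define $\Phi'$ as $\Phi$ frozen at the value $1$ from that point onward. Then each step of $\Phi'$ increases its expectation by at most $\rho$ plus the probability that freezing happens at that step, so
\[
\mathbb{E}[\Phi'_{\mathrm{final}}] \le v(\emptyset) + k\rho + \P[E].
\]
Since $\Phi'_{\mathrm{final}} \ge \mathbf{1}[\Ver \text{ accepts}]$, completeness gives $1-\ec(n) \le v(\emptyset)+k\rho+\P[E]$. Substituting the bound on $\P[E]$ and using $k/p_\Est(n)\ll 1/p(n)$ gives
\[
v(\emptyset)\ge 1-\ec(n)-\ezk(n)-\tfrac{1}{4p(n)}-\tfrac{2}{16p(n)}-o\!\left(\tfrac{1}{p(n)}\right) \ge 1-\ec(n)-\ezk(n)-\tfrac{1}{2p(n)}
\]
for large $n$.

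The main obstacle is the freezing and telescoping step. The comparison with $q_\text{threshold}$ is defined relative to $N$'s distribution on the actual real prefix, so I must check that Lemma \ref{lem:zk-p-tilde-good} applies pointwise to every prefix, which it does because it holds for arbitrary $\transcript$. I must also make sure the zero-knowledge step accounts for the joint event over all rounds, which works because $\Distinguisher$ checks every round and only needs to catch the first very good one.
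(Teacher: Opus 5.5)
Your proof is correct. It uses the same ingredients as the paper: Lemmas \ref{lem:zk-p-tilde-good}, \ref{lem:zk-dist-sim-prob} and \ref{lem:zk-dist-catch-good}, the okay-versus-not-very-good comparison costing $4/p_\Est(n)$ per prover round, and a single application of zero-knowledge to $\Distinguisher$. The difference is in how the bookkeeping is organized.

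\textbf{The paper's route.} It never isolates your event $E$. Its hybrids $\Hyb_i^1$ run the honest prover for $i$ rounds and then $\widetilde{\Pro}$, so $\P_{\Hyb_i^1}[\Ver \text{ accepts}]$ is exactly your $\mathbb{E}[\Phi_i]$. It tracks rejection jointly with $\overline{R_i}$, the event that the efficient distinguisher has not yet fired. It introduces auxiliary hybrids $\Hyb_i^2$ to compare the honest and $\widetilde{\Pro}$ messages at round $i$. The per-round charges $\P_{\Hyb_i^2}[\overline{R_{i-1}}\wedge R_i]$ then telescope to $\P_{\mathrm{real}}[\Distinguisher=1]$. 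The chance that a very good message escapes detection, $\frac{1}{16k p(n)}$, is paid inside every prover round.

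\textbf{Your route.} You freeze the potential at the first very good honest message. This is an inefficient stopping rule, which is harmless because it only enters the information-theoretic part. You then bound $\P[E]$ once through $(1-\frac{1}{16kp(n)})\P[E]\le \P_{\mathrm{real}}[\Distinguisher=1]$. That step is valid because each iteration of $\Distinguisher$ uses fresh randomness, so the first very good round is caught with the stated probability regardless of earlier iterations. The rest is a clean stopped-supermartingale argument. You correctly handle the honest prover's hidden state by noting that $\Ver$'s public coins are fresh, so $\Phi$ is a martingale at verifier steps.

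\textbf{What each buys.} Your decomposition separates the one computational step from a purely statistical one and avoids the paired-transcript hybrids $\Hyb_i^2$. It also pays the detection-miss loss once rather than once per prover round, giving slightly better constants. The paper's version stays within the standard hybrid-experiment format and charges the zero-knowledge loss directly through the telescoping sum, with no separate bound on $\P[E]$ needed.
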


\begin{proof}
    Fix $n$ to be large enough that the guarantees of all the above lemmas hold, and $x$ as any element in $\zo^n \cap \Lang$.  We define a sequence of hybrid experiments as follows.  For each $i \in [k]_0$, let $\Hyb_i^1(x)$:
    \begin{itemize}
        \item Create a transcript $\transcript$ by running the honest $\Pro$ and $\Ver$ on $x$ for the first $i$ rounds, and completing the remaining rounds by running $\widetilde{\Pro}$ and $\Ver$.\footnote{As in our definition of $\widetilde{\Pro}$, this is possible because $\Ver$ is stateless.}
        \item Run $\Distinguisher(x, \transcript)$.  For every $j \in [k]_0$, let $R_j$ be the event that $\Distinguisher$ outputs 1 in or before the iteration corresponding to round $j$.
    \end{itemize}
    Furthermore, or each $i \in [k]$, let $\Hyb_i^2(x)$:
    \begin{itemize}
        \item Create a transcript $\transcript$ by running the honest $\Pro$ and $\Ver$ on $x$ for the first $i$ rounds, and completing the remaining rounds by running $\widetilde{\Pro}$ and $\Ver$.
        \item Run $\Distinguisher(x, \transcript)$.  For every $j \in [k]_0$, let $R_j$ be the event that $\Distinguisher$ outputs 1 in or before the iteration corresponding to round $j$.
        \item Create a second transcript $\transcript'$ by taking the first $i - 1$ rounds of $\transcript$ and completing the remaining rounds by running $\widetilde{\Pro}$ and $\Ver$.
    \end{itemize}

    Note that $\Hyb_0^1$ (ignoring $\Distinguisher$) is exactly the same as $\Alg$, so our goal is to prove that $\Ver$ is unlikely to reject in that hybrid.  On the other hand, $\Hyb_k^1$ involves generating an honest transcript, which we know can only be rejected with probability at most $\ec(n)$.  Thus, it suffices for us to show that as we move from $\Hyb_{k}^1$  to $\Hyb_{0}^1$\footnote{The hybrids count down for ease of notation.} the probability that $\Ver$ rejects cannot increase by much. Specifically, we aim to prove
     \begin{claim} \label{clm:zk-hyb-main}
        Fix any $i \in [k]$.  Then we have that
        \begin{equation*}
            \P_{\Hyb_{i - 1}^1}[\Ver(x,\transcript) = 0 \land \overline{R_{i - 1}}] \leq  \P_{\Hyb_i^1}[\Ver(x, \transcript) = 0 \land \overline{R_i}] + \frac{1}{4k \cdot p(n)} + \P_{\Hyb_i^2}[\overline{R_{i - 1}} \land R_i]
        \end{equation*}
    \end{claim}
    \noindent Indeed, by repeated application of the claim, we get that
    \begin{align} \label{eqn:zk-inductive-result}
        \nonumber \P[\Alg(x) = 0] &= \P_{\Hyb_0^1}[\Ver(x, \transcript) = 0] = \P_{\Hyb_0^1}[\Ver(x, \transcript) = 0 \land \overline{R_0}] \\
        \nonumber &\leq \P_{\Hyb_k^1}[\Ver(x, \transcript) = 0 \land \overline{R_k}] + \frac{1}{4p(n)} + \sum_{i \in [k]} \P_{\Hyb_i^2}[ \overline{R_{i - 1}} \land R_i]\\
        &\leq \P_{\Hyb_k^1}[\Ver(x, \transcript) = 0] + \frac{1}{4p(n)} + \sum_{i \in [k]} \P_{\Hyb_i^2}[ \overline{R_{i - 1}} \land R_i]
    \end{align}

    (For the second equality, we used the fact that $R_0$ is always true.)  The first term is at most $\ec(n)$ as noted above.  For the last term, we note that the events $R_i$ and $R_{i - 1}$ are defined over the first $i$ messages of $\transcript$, which are sampled using the honest $\Pro$ and $\Ver$.  Thus, we have that $\P_{\Hyb_i^2}[\overline{R_{i - 1}} \land R_i] = \P_{\text{real}}[\overline{R_{i - 1}} \land R_i]$, where $\text{real}$ is the probability space where we sample the entire transcript $\transcript$ from $\Pro$ and $\Ver$, then run $\Distinguisher(x, \transcript)$.  Since $R_{i - 1} \subseteq R_i$ for all $i \in [k]$, the resulting sum telescopes; that is, $\sum_{i \in [k]} \P_{\text{real}}[\overline{R_{i - 1}} \land R_i] = \P_{\text{real}}[\overline{R_0} \land R_k]$, which by the definition of $R_0$ and $R_k$ is just $\P_\text{real}[\Dist(x, \transcript) = 1]$.  Plugging this all back into (\ref{eqn:zk-inductive-result}), we get
    \begin{align*}
        \P[\Alg(x) = 0] &\leq \ec(n) + \P_\text{real}[\Distinguisher(x, \transcript) = 1] + \frac{1}{4p(n)} \\
        &\leq \ec(n) + \ezk(n) + \P[\Distinguisher(x, \transcript) = 1 \mid \transcript \gets \Sim(x)] + \frac{1}{4p(n)} \\
        &\leq \ec(n) + \ezk(n) + \frac{1}{2p(n)}
    \end{align*}
      where the first inequality comes from zero-knowledge and the second comes from Lemma \ref{lem:zk-dist-sim-prob}. All that now remains is to prove the claim. We do so by proving the following two claims.
    
    \begin{claim} \label{clm:zk-hyb-1}
        Fix any $i \in [k]$.  Then we have that
        \begin{equation*}
            \P_{\Hyb_{i - 1}^1}[\Ver(x,\transcript) = 0 \land \overline{R_{i - 1}}] \leq \P_{\Hyb_i^2}[\Ver(x, \transcript') = 0 \land \overline{R_i}] + \P_{\Hyb_i^2}[\overline{R_{i - 1}} \land R_i]
        \end{equation*}
    \end{claim}

    \begin{claim} \label{clm:zk-hyb-2}
        Fix any $i \in [k]$.  Then we have that
        \begin{equation*}
            \P_{\Hyb_i^2}[\Ver(x,\transcript') = 0 \land \overline{R_i}] \leq \P_{\Hyb_i^1}[\Ver(x, \transcript) = 0 \land \overline{R_i}] + \frac{1}{4k \cdot p(n)}
        \end{equation*}
    \end{claim}
    
    \noindent Note that Claim $\ref{clm:zk-hyb-main}$ trivially follows from Claim $\ref{clm:zk-hyb-1}$ and Claim $\ref{clm:zk-hyb-2}$.

    \begin{proof}[Proof of Claim \ref{clm:zk-hyb-1}]
        We first note that 
        $$\P_{\Hyb_{i-1}^1}[\Ver(x, \transcript) = 0 \land \overline{R_{i - 1}}] = \P_{\Hyb_i^2}[\Ver(x, \transcript') = 0 \land \overline{R_{i - 1}}]$$
        Indeed, in both cases we are checking if $\Ver$ accepts a transcript sampled by taking the first $i - 1$ messages from $\Pro$ and $\Ver$ with the remainder from $\widetilde{\Pro}$ and $\Ver$, as well as if $\Distinguisher$ outputs 1 within the first $i - 1$ messages of that transcript.  But now we can write
        \begin{align*}
            \P_{\Hyb_i^2}[\Ver(x, \transcript') = 0 \land \overline{R_{i - 1}}] &= \P_{\Hyb_i^2}[\Ver(x, \transcript') = 0 \land \overline{R_i}] + \P_{\Hyb_i^2}[\Ver(x, \transcript') = 0 \land \overline{R_{i - 1}} \land R_i] \\
            &\leq \P_{\Hyb_i^2}[\Ver(x, \transcript') = 0 \land \overline{R_i}] + \P_{\Hyb_i^2}[\overline{R_{i - 1}} \land R_i]
        \end{align*}
        which yields the desired claim.
    \end{proof}

    \begin{proof}[Proof of Claim \ref{clm:zk-hyb-2}]
        First note that if $i$ is a verifier round, we have that 
        $$\P_{\Hyb_i^2}[\Ver(x,\transcript') = 0 \land \overline{R_i}] = \P_{\Hyb_i^1}[\Ver(x, \transcript) = 0 \land \overline{R_i}]$$
        Indeed, the only place where these two hybrids could differ is at round $i$.  But note that $\overline{R_i}$ is not affected by this difference, since $\Distinguisher$ only looks at prover rounds.  Additionally, $\transcript$ and $\transcript'$ are identically distributed (both use $\Ver$ to sample the $i$th message), so the event that $\Ver$ rejects is also unaffected.  This means that the statement trivially holds when $i$ is a verifier round, so for the remainder of the proof we will assume that $i$ is a prover round.

        Recalling the definitions of ``okay'' and ``very good'' from Definition \ref{def:p-msg-quality}, we have that
        \begin{align*}
            \P_{\Hyb_i^2}[\Ver(x,\transcript') = 0 \land \overline{R_i}] &= \quad \P_{\Hyb_i^2}[\Ver(x,\transcript') = 0 \land \overline{R_i} \land m_i \text{ is ``very good''}] \\
            &\quad + \P_{\Hyb_i^2}[\Ver(x,\transcript') = 0 \land \overline{R_i} \land m_i \text{ is not ``very good''} \land m_i' \text{ is ``okay''}] \\
            &\quad + \P_{\Hyb_i^2}[\Ver(x,\transcript') = 0 \land \overline{R_i} \land m_i \text{ is not ``very good''} \land m_i' \text{ is not ``okay''}] \\
            &\leq \quad \P_{\Hyb_i^2}[\overline{R_i} \land m_i \text{ is ``very good''}] \\
            &\quad + \P_{\Hyb_i^2}[\Ver(x,\transcript') = 0 \land \overline{R_i} \land m_i \text{ is not ``very good''} \land m_i' \text{ is ``okay''}] \\
            &\quad + \P_{\Hyb_i^2}[m_i' \text{ is not ``okay''}]
        \end{align*}
        where $m_i$ is the $i$th message in $\transcript$ and $m_i'$ is the $i$th message in $\transcript'$.  We note that the first term is at most $\frac{1}{16k \cdot p(n)}$ by Lemma \ref{lem:zk-dist-catch-good}; the third term is at most $\frac{1}{16k \cdot p(n)} + e^{-n}$ by Lemma \ref{lem:zk-p-tilde-good}.  For the second term, we note that since $m_i$ is not ``very good'' but $m_i'$ is ``okay'', the probability that $\Ver$ rejects $\transcript'$ can be at most $\frac{4}{p_\Est(n)}$ higher than the probability that $\Ver$ rejects $\transcript$.  Thus, the second term can be bounded as
        \begin{align*}
        &\P_{\Hyb_i^2}[\Ver(x,\transcript') = 0 \land \overline{R_i} \land m_i \text{ is not ``very good''} \land m_i' \text{ is ``okay''}]\\
            \leq &\P_{\Hyb_i^2}[\Ver(x,\transcript) = 0 \land \overline{R_i} \land m_i \text{ is not ``very good''} \land m_i' \text{ is ``okay''}] + \frac{4}{p_\Est(n)} \\
            \leq &\P_{\Hyb_i^2}[\Ver(x,\transcript) = 0 \land \overline{R_i}] + \frac{4}{p_\Est(n)} = \P_{\Hyb_i^1}[\Ver(x,\transcript) = 0 \land \overline{R_i}] + \frac{4}{p_\Est(n)}
        \end{align*}
        where the equality comes from the fact that $\transcript$ is sampled from the same distribution in the two hybrids.  Plugging this all in, we get that
        \begin{align*}
            \P_{\Hyb_i^2}[\Ver(x,\transcript') = 0 \land \overline{R_i}] \leq \P_{\Hyb_i^1}[\Ver(x,\transcript) = 0 \land \overline{R_i}] + \frac{4}{p_\Est(n)} + \frac{1}{8k \cdot p(n)} + e^{-n}
        \end{align*}
        Noting that $\frac{4}{p_\Est(n)} + \frac{1}{8k \cdot p(n)} + e^{-n} \leq \frac{1}{4k \cdot p(n)}$ for sufficiently large $n$, we get the desired claim.
    \end{proof}
    
\end{proof}

\subsection{Putting It All Together}

\begin{proof}[Proof of Theorem \ref{thm:zk-to-owf-wc}]
    Suppose for the sake of contradiction that ai-OWFs don't exist, and hence $\paramSet$ is infinite.  We then have that Algorithm \ref{alg:zk-alg} satisfies the requirements of Definition \ref{def:io-bpp} with respect to $a(n) = 1 - \ec(n) - \ezk(n) - \frac{1}{2p(n)}$ and $b(n) = \es(n)$.  Indeed, Lemma \ref{lem:zk-alg-correctness} immediately tells us that we satisfy the first two points for infinitely many $n$.  For the third point, we have that for sufficiently large $n$,
    \begin{align*}
        a(n) - b(n) = 1 - \ec(n) - \ezk(n) - \es(n) - \frac{1}{2p(n)} \geq \frac{1}{2p(n)}
    \end{align*}
    where the inequality comes from the definition of $p(n)$.  Thus, we have that $\Lang \in \ioBPPpoly$, so by Remark \ref{rmk:bpppoly-vs-ppoly}, we have that $\Lang \in \ioPpoly$.  This provides our contradiction.
\end{proof}

\begin{remark} \label{rmk:zk-owf-reduction}
    Theorem \ref{thm:zk-to-owf-wc} can also be phrased in the following way: if $\Lang$ has a constant-round, public-coin $(\ec, \es, \ezk)$-ZK argument with $\ec + \es + \ezk <_n 1$, then $\Lang$ has a decision-to-inversion reduction (Definition \ref{def:dti-reduction}).  The reasoning for this is identical to that in Remark \ref{rmk:nizk-owf-reduction}.
\end{remark}

Given the above remark, the techniques from \cite{HirNan,LMP} immediately allow us to improve Theorem \ref{thm:zk-to-owf-wc} from an \emph{auxiliary-input} one-way function to a standard one-way function.

\begin{corollary} \label{cor:zk-to-owf}
    Suppose that every $\Lang \in \NP$ has a constant-round, public-coin $(\ec, \es, \ezk)$-ZK argument with $\ec + \es + \ezk <_n 1$ and $\NP \not \subseteq \ioPpoly$.  Then OWFs exist.
\end{corollary}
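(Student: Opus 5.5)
The proof will follow the template of Corollary \ref{cor:nizk-to-owf}, with Theorem \ref{thm:zk-to-owf-wc} and Remark \ref{rmk:zk-owf-reduction} replacing their NIZK counterparts. First, since $\NP \not\subseteq \ioPpoly$, I fix a language $\Lang \in \NP$ with $\Lang \notin \ioPpoly$. By hypothesis $\Lang$ has a constant-round, public-coin $(\ec,\es,\ezk)$-ZK argument with $\ec+\es+\ezk <_n 1$. Theorem \ref{thm:zk-to-owf-wc} then directly yields an auxiliary-input one-way function.

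Second, I feed this auxiliary-input one-way function into Lemma \ref{lem:lmp-aiowf-to-avghard}. This produces a language $\Lang' \in \NP$ and a polynomial $m(\cdot)$ with $(\Lang', \{U_{m(n)}\}_{n}) \notin \aBPPpoly$. Because $\Lang' \in \NP$, the hypothesis again supplies a constant-round public-coin non-trivial ZK argument for $\Lang'$. Remark \ref{rmk:zk-owf-reduction} then gives a decision-to-inversion reduction for $\Lang'$ in the sense of Definition \ref{def:dti-reduction}. Finally, Lemma \ref{lem:lmp-reduction-to-owf} applied to $\Lang'$ and this reduction gives standard one-way functions.

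The only step needing real care is confirming that Remark \ref{rmk:zk-owf-reduction} delivers a reduction meeting Definition \ref{def:dti-reduction} with $p(n)=\omega(1)$. This requires three checks.

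\begin{itemize}
\item Inspecting Lemma \ref{lem:zk-alg-correctness} and its supporting lemmas, the analysis of Algorithm \ref{alg:zk-alg} uses $N$ only on auxiliary input $x$ itself. These uses are the calls inside $\widetilde{\Pro}$, $\Est$, $\Distinguisher$ and the hybrids $\Hyb_i$, all on partial transcripts for this fixed $x$. So the io-UE guarantee is needed only pointwise at $x$, not uniformly over all inputs of length $n$.

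\item The proof of Theorem \ref{thm:owf-vs-ioue} converts any inverter for a fixed candidate $f_x$ into such a pointwise UE machine for $M$ on input $x$. The loss is polynomial, so a $1/p(|x|)$-inverter suffices for a suitable polynomial $p$, which can be taken to grow.

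\item The resulting reduction accepts $x\in\Lang'$ with probability at least $1-\ec-\ezk-\frac{1}{2p}$ and $x\notin\Lang'$ with probability at most $\es$. This is a noticeable gap, so standard majority-vote amplification, run with polynomially many independent executions, makes the error at most $1/p(|x|)$, as Definition \ref{def:dti-reduction} requires.
\end{itemize}

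If any subtlety arises, it is that soundness is only computational. The oracle-aided prover $\widetilde{\Pro}$ must still be an efficient non-uniform cheating prover. I would argue this holds because the oracle $\Alg$ is itself non-uniform PPT, and Lemma \ref{lem:pro-est-efficient} bounds the number of oracle calls polynomially for constant $k$. Hence soundness still bounds acceptance on $x\notin\Lang'$ by $\es$.
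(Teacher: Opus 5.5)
Your proposal is correct and follows the paper's proof: you fix $\Lang \in \NP$ with $\Lang \notin \ioPpoly$, apply Theorem~\ref{thm:zk-to-owf-wc} to get an auxiliary-input one-way function, pass through Lemma~\ref{lem:lmp-aiowf-to-avghard} to obtain the average-case-hard $\Lang'$, and then combine the decision-to-inversion reduction of Remark~\ref{rmk:zk-owf-reduction} with Lemma~\ref{lem:lmp-reduction-to-owf}. Your additional checks (UE needed only pointwise at $x$, amplification of the noticeable gap, and efficiency of the oracle-aided $\widetilde{\Pro}$ so that computational soundness applies) spell out what the paper leaves to Remarks~\ref{rmk:nizk-owf-reduction} and~\ref{rmk:zk-owf-reduction}.
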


\begin{proof}
    Since $\NP \not \subseteq \ioPpoly$, we know there is a language $\Lang \in \NP$ (that thus has an appropriate argument) where $\Lang \not \in \ioPpoly$.  Theorem \ref{thm:zk-to-owf-wc} thus tells us that there exists an auxiliary-input one-way function.  We can then plug this into Lemma \ref{lem:lmp-aiowf-to-avghard} to get a new language $\Lang' \in \NP$ and a polynomial $m(\cdot)$ such that $(\Lang', \{ U_{m(n)} \}_{n \in \mathbb{N}}) \not \in \aBPPpoly$.  But since $\Lang' \in \NP$ (and thus has an appropriate argument), Remark \ref{rmk:zk-owf-reduction} tells us that there is a decision-to-inversion reduction for $\Lang'$.  Hence, Lemma \ref{lem:lmp-reduction-to-owf} tells us that one-way functions exist.
\end{proof}

\section{Applications to Amplification}

Similar to \cite{CHK}, we can plug the one-way function we get from a high-error zero-knowledge protocol into prior work in order to get (almost) unconditional amplification of these high-error protocols.  Below, we separately consider the cases of non-interactive zero-knowledge and interactive zero-knowledge.

\subsection{Non-Interactive Zero-Knowledge}

In the non-interactive case, a recent line of works \cite{GJS19,BKP+24,BG,AK25} has explored what assumptions are needed in order to amplify (adaptively-sound) NIZKs with high errors to have negligible errors, with \cite{BG,AK25} only requiring one-way functions in certain cases.  If we instantiate these results with the one-way function we get from Corollary \ref{cor:nizk-to-owf}, we get that we can amplify such NIZKs (almost) unconditionally.  Formally, we recall the following theorem from \cite{AK25}, rephrased slightly to fit our terminology.

\begin{theorem}[\cite{AK25}, Theorem 8.27] \label{thm:ak-stat-amp}
    Suppose OWFs exist.  Then if there exists an $(\ec, \es, \ezk)$-NIZK {\em proof} with adaptive and statistical soundness for a language $\Lang$ with (1) $\ec$ negligible and (2) $\es + \ezk <_n 1$, then there exists an $(\ec', \es', \ezk')$-NIZK with adaptive and statistical soundness for $\Lang$ with $\ec'$, $\es'$, and $\ezk'$ all negligible.
\end{theorem}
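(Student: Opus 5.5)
Since this is \cite{AK25}, Theorem 8.27, recalled essentially verbatim, the intended ``proof'' here is an appeal to that work. If I had to reconstruct it, I would build an amplifier in two stages: first reduce the zero-knowledge error while keeping soundness non-trivial, then drive soundness down by parallel repetition. I have not checked the parameter bookkeeping in detail.

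\textbf{Stage 1: reducing the zero-knowledge error.} I would treat the weak NIZK as a \emph{combiner} input, in the style of MPC-in-the-head and secret-sharing-based amplification \cite{GJS19,BG}. The new CRS contains $t$ independent CRSs of the weak NIZK, together with the first message of a Naor-style statistically binding commitment, which is available from a PRG and hence from OWFs. The prover does the following.
\begin{enumerate}
\item It reduces the statement to an NP-complete form.
\item It secret-shares a computation that verifies the witness into $t$ ``party views'' and commits to each view.
\item For each $j \in [t]$, it uses the $j$th weak NIZK instance to prove that the committed views of a small subset of parties are consistent with an accepting execution.
\end{enumerate}
The point of this structure is that a single instance leaking its witness reveals only a few shares. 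By privacy of the secret sharing and hiding of the commitments, the witness then stays hidden unless many instances leak at once. To bound how many instances leak, I would use a hardcore-lemma argument: each weak instance is treated as ``ideal'' with probability roughly $1-\ezk$ and as ``leaking'' otherwise.

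\textbf{Stage 2: reducing soundness.} Adaptive statistical soundness is tied to the binding of the commitments. A cheating prover must make a constant fraction of the $t$ subset-consistency checks accept against inconsistent committed views. Since statistical soundness is adaptive and holds instance-by-instance, each check accepts with probability at most $\es$, independently across the $t$ independent CRSs. I would set the verifier's acceptance threshold strictly between $\es t$ and $(1-\ezk)t$; the hypothesis $\es + \ezk <_n 1$ is exactly what makes this window non-empty. Completeness then follows from negligible $\ec$ plus a union bound, and soundness and ZK error both become exponentially small in $t$ by Chernoff bounds (Theorem~\ref{thm:chernoff-combined}).

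\textbf{Main obstacle.} I expect the ZK analysis to be the hard step. A computational weak simulator does not split cleanly into ``ideal with probability $1-\ezk$'' and ``leaking otherwise.'' Getting a uniform simulator for all $t$ instances jointly needs a computational hardcore-lemma or dense-model style decomposition, applied to $t$ independent instances simultaneously. It also has to be compatible with the reconstruction threshold of the secret sharing. My first attempt would be to prove the result for statistical ZK, then lift it to the computational case via the hardcore lemma, mirroring the combiner analyses for weak OT and weak commitments.
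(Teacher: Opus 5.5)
Deferring to \cite{AK25} is exactly what the paper does. The statement is quoted, lightly rephrased, with no proof, and is used only as a black box, together with Corollary~\ref{cor:nizk-to-owf}, to obtain Corollary~\ref{cor:nizk-amp-stat}. So your first sentence is the paper's entire argument.

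Your reconstruction, however, would not stand as a proof, and you place the use of the hypothesis in the wrong step.
\begin{itemize}
\item \textbf{The acceptance window.} Since $\ec$ is negligible, the honest prover makes all $t$ sub-proofs accept except with probability $t\cdot\ec$. So $\ezk$ plays no role on the acceptance side. A threshold $\theta<t$ only helps a cheating prover, who may then let $t-\theta$ checks fail.
\item \textbf{Where $\es+\ezk<_n 1$ is really needed.} It must pay for a sharing/consistency structure that does two things at once. It must stay private against roughly $\ezk t$ leaked instances. It must also force more than roughly $\es t$ \emph{false} sub-statements whenever $x\notin\Lang$; otherwise a cheater concentrates its inconsistency in a few parties. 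Your claim that ``a constant fraction'' of checks must be inconsistent assumes exactly this robustness without constructing it.
\item \textbf{Generic MPC-in-the-head does not supply this robustness over the whole range.} The direct transformation of \cite{GJS19} needs both errors below $1/3$. Their full range requires zig-zagging parallel repetition with an MPC-based repetition.
\end{itemize}

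Three further points:
\begin{itemize}
\item \textbf{The sub-statements are not instances of $\Lang$.} Your statements about committed views lie in a different language. Your combiner therefore needs weak NIZKs for that consistency language, for example because $\Lang$ is $\NP$-complete, as in Corollary~\ref{cor:nizk-amp-stat}. A weak NIZK for $\Lang$ alone is not enough.
\item \textbf{The ZK analysis is missing.} You leave the computational ZK analysis open, and that is where the cited line of work spends its effort.
\item \textbf{Independence across CRSs (fine, but phrase it correctly).} Phrase it via the event that $\crs_j$ admits an accepting proof of \emph{some} false statement. By adaptive statistical soundness this event has probability at most $\es$, and it is independent across $j$. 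Per-instance adaptive soundness alone does not give independence, since the cheater sees all the CRSs.
\end{itemize}
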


Combining this with Corollary \ref{cor:nizk-to-owf} immediately gives us the following result.

\begin{corollary} \label{cor:nizk-amp-stat}
    Suppose that $\NP \not \subseteq \ioPpoly$ or $\NP \subseteq \BPP$.  Suppose further that all of $\NP$ has $(\ec, \es, \ezk)$-NIZKs with adaptive and statistical soundness where (1) $\ec$ is negligible and (2) $\ezk + \es <_n 1$.  Then all of $\NP$ has $(\ec', \es', \ezk')$-NIZKs with adaptive and statistical soundness where $\ec'$, $\es'$, and $\ezk'$ are all negligible.
\end{corollary}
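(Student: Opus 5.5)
The plan is a case split on the two disjuncts of the hypothesis, combining Corollary \ref{cor:nizk-to-owf} with Theorem \ref{thm:ak-stat-amp} in one case and building a trivial NIZK in the other.

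\textbf{Case $\NP \not\subseteq \ioPpoly$.} The given NIZKs are statistically (hence computationally) sound, so they are in particular $(\ec, \es, \ezk)$-NIZK arguments in the sense of our definition. Since $\ec$ is negligible, for sufficiently large $n$ we have $\ec(n) < \frac{1}{2q(n)}$, where $q$ is the polynomial witnessing $\ezk + \es <_n 1$. Hence $\ec + \es + \ezk <_n 1$, with polynomial $2q$. Corollary \ref{cor:nizk-to-owf} then yields one-way functions. For each $\Lang \in \NP$, we apply Theorem \ref{thm:ak-stat-amp} with these OWFs to its given NIZK proof (adaptive, statistically sound, negligible $\ec$, and $\es + \ezk <_n 1$). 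This gives an adaptively and statistically sound NIZK with all three errors negligible.

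\textbf{Case $\NP \subseteq \BPP$.} Here we build the NIZK directly, with no cryptography. Fix $\Lang \in \NP$ with a BPP decider $B$, amplified so that it errs with probability at most $2^{-n}$. The CRS is empty, the prover sends an empty proof, and the verifier runs $B(x)$. The simulator also outputs an empty CRS and proof.
\begin{itemize}
\item Completeness error is at most $2^{-n}$.
\item Zero-knowledge holds perfectly, since the real and simulated distributions are identical.
\item Soundness error is at most $2^{-n}$ against unbounded provers, because the verifier's decision does not depend on the proof.
\end{itemize}
There is one subtlety: the verifier in our NIZK definition is required to be deterministic. To handle this, put the randomness for $B$ in the CRS, sampled by $\Gen$. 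Adaptive soundness then needs more care, because the adversary picks $x$ after seeing the CRS. I would fix it by driving $B$'s error down to $2^{-2n}$ with enough repetitions and union-bounding over all $x \in \zo^n$. The probability that the CRS is bad for some $x$ is then at most $2^{-n}$, so the construction is adaptively and statistically sound with negligible error. The simulator samples the CRS the same way and outputs an empty proof, so perfect zero-knowledge is preserved.

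\textbf{Main obstacle.} The key step is verifying that the hypotheses line up exactly: that proofs qualify as arguments for Corollary \ref{cor:nizk-to-owf}, and that the negligible $\ec$ can be absorbed into the noticeable gap. The BPP case is routine except for the determinism and adaptivity issue, which the union-bound trick above should resolve.
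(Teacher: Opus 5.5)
Your proposal is correct and follows the same route as the paper. You split on the two hypotheses, combine Corollary \ref{cor:nizk-to-owf} with Theorem \ref{thm:ak-stat-amp} when $\NP \not\subseteq \ioPpoly$, and use a trivial NIZK when $\NP \subseteq \BPP$; your extra details (absorbing the negligible $\ec$ into the noticeable gap, and putting the amplified BPP coins in the CRS with a union bound over $\zo^n$ so the verifier is deterministic and adaptively sound) are valid and fill in points the paper's proof leaves implicit.
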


\begin{proof}
    First, suppose $\NP \subseteq \BPP$.  In this case, every $\Lang$ in $\NP$ has a trivial NIZK ($\Pro$ sends nothing and $\Ver$ checks the instance on their own), meaning the claim holds trivially.

    Otherwise, suppose $\NP \not \subseteq \ioPpoly$.  By Corollary \ref{cor:nizk-to-owf}, we get that one-way functions exist; we can then use Theorem \ref{thm:ak-stat-amp} to amplify the NIZK guaranteed for any $\Lang \in \NP$ to have negligible errors.
\end{proof}

\subsection{Interactive Zero-Knowledge}

In the interactive case, a classic result of \cite{BJY97} shows that one-way functions alone suffice to construct a 4-round interactive zero-knowledge protocol with negligible errors.  
Similar to the non-interactive case, we can instantiate this result using the one-way function we get from Corollary \ref{cor:zk-to-owf} in order to (almost) unconditionally ``amplify'' (actually, just build from scratch)

interactive zero-knowledge.  Formally, we recall the following theorem from \cite{BJY97}.

\begin{theorem}[\cite{BJY97}, Theorem 1.1] \label{thm:bjy-owf-to-zk}
    Suppose OWFs exist.  Then for any $\Lang \in \NP$, there exists a four-round $(\ec, \es, \ezk)$-ZK argument for $\Lang$ with $\ec$, $\es$, and $\ezk$ all negligible.
\end{theorem}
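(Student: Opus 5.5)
The statement is quoted from \cite{BJY97}, so I would not reprove it in full. Instead I would sketch the standard Feige--Shamir-style construction, now instantiated from one-way functions alone, and indicate why each ingredient needs only \OWFs.

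The building blocks come first. From a one-way function we get a pseudorandom generator (HILL). From the generator we get Naor's two-message statistically binding commitment, whose first message is a random string chosen by the receiver. Plugging this commitment into Blum's 3-round Hamiltonicity protocol and repeating it $n$ times in parallel yields a witness-indistinguishable (WI) proof of knowledge with negligible knowledge error. WI is preserved under parallel composition, and knowledge extraction by rewinding the challenge works as usual.

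The protocol for $\Lang$ on input $x$ runs as follows.
\begin{itemize}
\item The verifier samples $s_0,s_1$ and sends $y_0=f(s_0)$, $y_1=f(s_1)$.
\item The verifier gives a WI proof of knowledge of a preimage of $y_0$ or of $y_1$. It actually uses $s_b$ for a random bit $b$.
\item In parallel, the prover gives a WI argument of knowledge for the $\NP$ statement ``$x\in\Lang$, or I know a preimage of $y_0$ or $y_1$'', using its real witness $w$.
\end{itemize}
The two 3-round subprotocols, plus the Naor first messages, are interleaved into four messages. The verifier's first message carries $y_0,y_1$, the commitment string for the prover's proof, and the first Blum message of its own proof. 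The prover's reply carries its challenge for the verifier's proof, its commitment string, and its own first Blum message. The final two messages finish both subprotocols.

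Completeness is immediate. For ZK, the simulator rewinds the verifier's proof of knowledge to extract some $s_{b'}$ with $f(s_{b'})=y_{b'}$. It then runs the prover's WI argument using $s_{b'}$ in place of $w$, and WI of the prover's argument makes this transcript indistinguishable from the real one. A standard argument controls the expected running time and abort probabilities, as in Feige--Shamir/Goldreich--Kahan.

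For soundness, suppose a cheating prover convinces the verifier on $x\notin\Lang$. We rewind the prover's argument of knowledge to extract a witness, which must be some $s_{b'}$. By WI of the verifier's proof, $b'$ is essentially independent of $b$. So with probability about $1/2$ we have inverted the $y_{1-b}$ that we embedded as a challenge, contradicting one-wayness.

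I expect the main obstacle to be the soundness extraction in the interleaved four-message schedule. When we rewind the prover to extract, we also rewind inside the verifier's own concurrent proof. We must argue that this rewinding neither leaks $b$ nor forces the reduction to know both preimages. I would handle it by having the reduction know only $s_b$ and rewind only the prover's last two messages, after the verifier's proof is fixed. I would then apply a hybrid in which WI of the verifier's proof switches which $s$ it uses, to argue that extracting $s_{1-b}$ happens with noticeable probability.
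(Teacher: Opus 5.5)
The paper does not prove this statement; it only cites \cite{BJY97}. Your sketch, though, has a genuine gap, and it sits exactly where \cite{BJY97} has real work to do. In your schedule the verifier's first message already contains the first Blum message of \emph{its own} proof, i.e.\ commitments produced by the verifier. The commitment in your toolkit is Naor's. Its binding property holds only when the receiver (here the prover) chooses the random string $\rho$ uniformly \emph{before} the commitment is sent. A committer who effectively picks $\rho$ itself can set $\rho = G(s_0)\oplus G(s_1)$ and open either way. In your interleaving the prover's string arrives at the earliest in message 2, after the verifier has already committed.

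Without binding, the verifier's subprotocol is not a proof of knowledge. A malicious verifier can send $y_0,y_1$ whose preimages it does not know (or that have none) and still answer both challenges. Your simulator's rewinding then extracts nothing, while the honest prover, holding $w$, would have completed the interaction, so ZK fails. Sending the Naor string first costs a fifth message, and non-interactive statistically binding commitments are known only from stronger primitives such as one-way permutations. This is precisely the obstacle \cite{BJY97} had to get around. By contrast, the issue you flag as the main obstacle is routine. The verifier's response in message 3 is determined by message 2, so it can simply be replayed while you rewind messages 3--4.

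One way to repair the sketch is to make the verifier's part commitment-free. In message 1 the verifier sends $n$ pairs $y_j^c=f(s_j^c)$, plus Naor's string for the prover's commitments. In message 2 the prover sends a random $e\in\zo^n$, plus its witness-independent first Blum message. In message 3 the verifier reveals $s_j^{e_j}$ for every $j$, plus its challenge. The prover's WI argument is for ``$x\in\Lang$ or I know both preimages of some pair''.

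For ZK, the simulator rewinds message 2 with a fresh $e'\neq e$ to learn both preimages of some pair, then finishes with that trapdoor, with the usual Goldreich--Kahan treatment of aborts and running time. For soundness, plant the inversion challenge at a random position $(j^*,c^*)$ and abort if $e_{j^*}=c^*$. Otherwise, reuse the (fixed) opening and rewind messages 3--4 to extract. Since the cheating prover's view is independent of $j^*$, you invert $f$ with probability about $\frac{1}{2n}$ times the extraction probability. This route needs no witness indistinguishability for the verifier's part at all.
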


Combining this with Corollary \ref{cor:zk-to-owf} immediately gives us the following result.

\begin{corollary}\label{cor:zk-amp-stat}
    Suppose that $\NP \not \subseteq \ioPpoly$ or $\NP \subseteq \BPP$.  Suppose further that all of $\NP$ has a constant-round, public-coin $(\ec, \es, \ezk)$-ZK argument with $\ec + \ezk + \es <_n 1$.  Then all of $\NP$ has a 4-round $(\ec', \es', \ezk')$-ZK argument with $\ec'$, $\es'$, and $\ezk'$ all negligible.
\end{corollary}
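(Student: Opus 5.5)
The plan is to mirror the proof of Corollary \ref{cor:nizk-amp-stat} and split into two cases according to the hypothesis.

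\emph{Case $\NP \subseteq \BPP$.} Every $\Lang \in \NP$ then has a trivial four-round argument with negligible errors. Let $\Alg$ be the $\BPP$ decider for $\Lang$, amplified by majority vote over polynomially many independent runs so that it errs with probability at most $2^{-n}$. The prover sends nothing (or dummy messages to pad to four rounds), and the verifier outputs $\Alg(x)$.
\begin{itemize}
\item Completeness and soundness errors are at most $2^{-n}$, regardless of the prover's behaviour.
\item The simulator samples the verifier's coins and produces the dummy transcript. This is identical to the real verifier's view, so $\ezk' = 0$.
\end{itemize}

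\emph{Case $\NP \not\subseteq \ioPpoly$.} By hypothesis every $\Lang \in \NP$ has a constant-round, public-coin $(\ec,\es,\ezk)$-ZK argument with $\ec+\es+\ezk <_n 1$. This is exactly the hypothesis of Corollary \ref{cor:zk-to-owf}, so one-way functions exist. Theorem \ref{thm:bjy-owf-to-zk} then gives, for every $\Lang \in \NP$, a four-round ZK argument with negligible $\ec'$, $\es'$, and $\ezk'$.

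Since the hypothesis is a disjunction, the two cases exhaust all possibilities, which proves the statement.

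There is no genuine obstacle here, only two routine points to check.
\begin{itemize}
\item In the $\BPP$ case, error reduction must make the soundness and completeness errors negligible rather than merely bounded away from $1/2$.
\item In the second case, the corollary's ``for all of $\NP$'' hypothesis must be matched. It is, since Corollary \ref{cor:zk-to-owf} uses the argument both for a hard language $\Lang \notin \ioPpoly$ and for the auxiliary language $\Lang'$ from Lemma \ref{lem:lmp-aiowf-to-avghard}, and both lie in $\NP$.
\end{itemize}
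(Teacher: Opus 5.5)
Your proposal is correct and takes the same approach as the paper: the trivial protocol in which the verifier decides the instance itself handles the $\NP \subseteq \BPP$ case, and Corollary~\ref{cor:zk-to-owf} combined with Theorem~\ref{thm:bjy-owf-to-zk} handles the $\NP \not\subseteq \ioPpoly$ case. Your remarks on $\BPP$ error reduction, padding to four rounds, and the need for the ``all of $\NP$'' hypothesis just spell out details the paper leaves implicit.
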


\begin{proof}
    First, suppose $\NP \subseteq \BPP$.  In this case, every $\Lang$ in $\NP$ has a trivial zero-knowledge protocol ($\Pro$ sends nothing and $\Ver$ checks the instance on their own), meaning the claim holds trivially.

    Otherwise, suppose $\NP \not \subseteq \ioPpoly$.  By Corollary \ref{cor:zk-to-owf}, we get that one-way functions exist; we can then use Theorem \ref{thm:bjy-owf-to-zk} to construct the desired zero-knowledge argument for $\NP$.
\end{proof}

\ifnum\anonymous=1
\else
\noindent{\bf Acknowledgments.}
JH, DK, and KT were supported in part by NSF CAREER CNS-2238718, NSF CNS-2247727, a Google
Research Scholar award and an
award from Visa Research. 
Part of this work was done while the authors were visiting the Simons Institute for the Theory of Computing.
\bibliographystyle{alpha}
\addcontentsline{toc}{section}{References}
\bibliography{refs}

\appendix
\section*{Supplementary material}

\smallskip

\noindent \textbf{Disclaimer}

\medskip

\noindent Case studies, comparisons, statistics, research and recommendations are provided ``AS IS'' and intended for informational purposes only and should not be relied upon for operational, marketing, legal, technical, tax, financial or other advice.  Visa Inc. neither makes any warranty or representation as to the completeness or accuracy of the information within this document, nor assumes any liability or responsibility that may result from reliance on such information.  The Information contained herein is not intended as investment or legal advice, and readers are encouraged to seek the advice of a competent professional where such advice is required.
 
These materials and best practice recommendations are provided for informational purposes only and should not be relied upon for marketing, legal, regulatory or other advice. Recommended marketing materials should be independently evaluated in light of your specific business needs and any applicable laws and regulations. Visa is not responsible for your use of the marketing materials, best practice recommendations, or other information, including errors of any kind, contained in this document.
\section{Private to Public Coin in the Absence of ai-OWFs} \label{sec:private-to-public} 
% \begin{theorem}
% If there exists an interactive puzzle where the first $t$ verifier messages are uniformly random strings, then either io-uniform-ai-OWFs exist, or there exists an interactive puzzle where the first $t+1$ verifier messages are uniformly random strings. Additionally, the new verifier runs in time polynomial in the original verifier's runtime.
% \end{theorem}
\begin{theorem}
\label{thm:private-to-public}
If there exists an argument system that satisfies the following properties:
\begin{itemize}
    \item It satisfies correctness up to $\epsilon_c$ error.
    \item It satisfies soundness up to $\epsilon_s$ error.
    \item There exists a simulator such that the output of the simulator is at most $\epsilon_{zk}$ distinguishable from the transcript of an honest execution of the protocol. Note that this is weaker than honest-verifier zero-knowledge since the simulator does not simulate the entire view of the verifier, only the transcript.
    \item For some constant $t\in\mathbb{N}$ the first $t$ verifier messages are uniformly random strings. 
\end{itemize}
Then either io-uniform-ai-OWFs exist, or for every polynomial $q$, there exists an argument system that satisfies the following properties:
\begin{itemize}
    \item It satisfies correctness up to $\epsilon_c + 1/q(\lambda)$ error.
    \item It satisfies soundness up to $\epsilon_s + 1/q(\lambda)$ error.
    \item There exists a simulator such that the output of the simulator is at most $\epsilon_{zk} + 1/q(\lambda)$ distinguishable from the transcript of an honest execution of the protocol. 
    \item The first $t+1$ verifier messages are uniformly random strings. 
\end{itemize} Additionally, the new prover, verifier, and simulator run in time polynomial in the original prover, verifier, and simulator runtimes.
\end{theorem}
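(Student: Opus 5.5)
We will convert the $(t+1)$st verifier message into a public coin by having the new verifier send a uniformly random string $\rho$ in place of the original message. At the end it resamples a consistent internal state and decides from that. Let $v_{t+1}$ denote the $(t+1)$st verifier message. In the original protocol, $v_{t+1}$ is a deterministic function of the verifier's random tape $r$ and the transcript so far $\transcript_{\le t+1}$; the first $t$ verifier messages are already uniform substrings of $r$. We write $v_{t+1} = g(\transcript_{\le t+1}; r)$, and view the map $r \mapsto g(\transcript; r)$ as an auxiliary-input sampler $M_\transcript$, with the partial transcript as auxiliary input. The randomness $r$ ranges over tapes consistent with the earlier public-coin messages, so we can fix those bits and sample only the remaining ones.

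The new protocol is as follows. In round $t+1$ the verifier sends fresh uniform coins $\rho$, and the new message is defined to be $v_{t+1} := M_\transcript(\rho)$. The prover computes $v_{t+1}$ itself from $\rho$ and continues as the old prover would. All later rounds proceed as before, with the verifier using tape $\rho$ (together with the earlier public coins) as its private randomness. In this form the message is already determined by public coins. Any later message that depends on $\rho$ is now visible to the prover, however, so soundness might break.

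To fix this, the new verifier does not use $\rho$ directly as its private state. Instead, after seeing $\transcript_{\le t+1}$ it runs the io-UE inverter $N$ (Theorem \ref{thm:owf-vs-ioue}, applied to the uniform auxiliary-input sampler $M$ with input $(\transcript, r)\mapsto g(\transcript;r)$) to resample a fresh $r' $ with $g(\transcript; r') = v_{t+1}$. It then uses $r'$ for all remaining rounds and for the final decision. We argue each property separately:
\begin{itemize}
\item For correctness and zero-knowledge, note that $(v_{t+1}, r')$ is $1/q$-close to the honest joint distribution $(g(r), r)$ on every auxiliary input where UE succeeds. Hence the honest execution, and the transcript distribution, shift by at most $1/q$, and the old simulator still works up to that error. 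Note that the simulator only simulates transcripts, which is exactly why the theorem is stated with transcript-only simulation.
\item For soundness, we give a reduction. Given a cheating prover $\widetilde{\Pro}'$ for the new protocol, the old-protocol prover receives the real $v_{t+1}$. It then pretends that $\rho$ was drawn by inverting, that is, it runs $N$ on $v_{t+1}$ to produce a $\rho$ distributed as a preimage, feeds $\rho$ to $\widetilde{\Pro}'$, and continues. The joint distribution of (view of $\widetilde{\Pro}'$, true verifier state) in the old protocol matches the new protocol up to two UE errors. The new protocol's verifier state is an independent fresh preimage, and so, up to UE error, is the real $r$ conditioned on the transcript. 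So soundness is preserved up to $O(1/q)$.
\end{itemize}

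The main obstacle is that io-UE only holds for infinitely many input lengths and only against a uniform/auxiliary-input sampler whose auxiliary input is the transcript. That transcript is partly chosen by a possibly malicious prover in the soundness argument. We handle this by taking the negation: if for some polynomial the inverter fails on some auxiliary input of each large length, then $\{g(\transcript;\cdot)\}_\transcript$ is an io-uniform-ai-OWF. Otherwise UE works for every transcript on the relevant lengths, which gives the dichotomy in Theorem \ref{thm:private-to-public}. We also need to check that the soundness reduction stays efficient, since it calls $N$, and that the resampled randomness is independent of the prover's view given the transcript. That is the delicate step.
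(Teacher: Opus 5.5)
Your new protocol is exactly the paper's final one (its Hybrid 3: reveal the tape $\rho$, then privately resample $r'$ by inverting $g(\transcript;\cdot)$ at $v=g(\transcript;\rho)$). Your soundness reduction is the paper's Hybrids 1--2 compressed into one step: the old prover can produce the extra preimage itself, and the public and private preimages are exchangeable.

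\textbf{Zero knowledge.} This bullet does not work as written. The new transcript contains $\rho$ in round $t+1$, which the old simulator never outputs, so ``the old simulator still works'' is not meaningful. Moreover, closeness of $(v_{t+1},r')$ to $(g(r),r)$ does not by itself tell you how the revealed $\rho$ correlates with later verifier messages computed from $r'$. You need a new simulator that runs $\Sim$ and replaces $v_{t+1}$ by $\rho\gets N(\transcript,v_{t+1})$. You then apply the same exchangeability you already use for soundness. Conditioned on $(\transcript,v)$, the real new execution has (public, private) distributed as (true posterior, $N$'s output), independently. Applying the replacement map to an old real transcript gives the swapped pair ($N$'s output, true posterior). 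So the two are within twice the UE error, even against unbounded distinguishers. The old $\ezk$ bound then transfers because the replacement map is efficient. This is the paper's Hybrid 1 simulator combined with its swap statement for $(V_{t+1}(a,x),x,A(a,V_{t+1}(a,x)))$.

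\textbf{Source of the inverter.} Theorem \ref{thm:owf-vs-ioue} only gives a \emph{non-uniform} $N$ that is accurate on \emph{infinitely many} lengths. That is not enough here, for three reasons:
\begin{itemize}
\item the new verifier must be a uniform PPT machine;
\item the new system's guarantees must hold at all sufficiently large $\lambda$;
\item they must hold for every auxiliary input $\transcript$, since $\transcript$ is chosen adversarially in the soundness experiment.
\end{itemize}
Your dichotomy also has the quantifiers off. Failure of a fixed inverter ``on some auxiliary input of each large length'' is not the condition that defines io-uniform security. Its negation only gives inversion on infinitely many lengths, which does not yield an argument system. The paper instead fixes $q$ and asks whether $f(a,x)=V_{t+1}(a,x)$ is itself a $\frac{1}{2q}$-distributional, infinitely-often, uniform-secure auxiliary-input one-way function.
\begin{itemize}
\item If it is, Impagliazzo's amplification of distributional one-way functions gives an io-uniform-ai-OWF. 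This step is necessary, since in this branch $g$ is only distributionally one-way.
\item If it is not, the negation yields a single uniform PPT $A$ that, for all large $\lambda$ and every $a$, outputs preimages $\frac{1}{2q}$-close to the true posterior. This is exactly what the uniform verifier and the soundness argument need.
\end{itemize}
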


\begin{proof}
Without loss of generality, we model the verifier $V$ as sampling its first $t$ messages $r_1, \ldots, r_t$ uniformly at random. Then, before sending the $(t+1)^{\text{th}}$ message, $V$ samples internal randomness $r$. 
For $i \geq t+1$, the $i^{\text{th}}$ verifier message is computed by a deterministic polynomial-time machine $V_i$ that takes as input the transcript $\tau$ and the randomness $r$.   
Additionally suppose there exist polynomials $p_1$ and $p_2$ such that the size of the transcript just before the $(t+1)^{\text{th}}$ verifier message is $p_2(\lambda)$ and $|r| = p_1(\lambda)$.

Define $f(a, x) = V_{t+1}(a, x)$. If $f$ is a $1/2q(\lambda)$-distributional (infinitely-often, uniform-secure) auxiliary-input one-way function we are done, since by \cite{ImpThesis} it may be amplified to construct an (infinitely-often, uniform-secure) auxiliary-input one-way function.  
Suppose not, i.e., there exists a PPT machine $A$ such that for all large enough $\lambda$, for any $a \in \{0,1\}^{p_1(\lambda)}$, and for $x \leftarrow \{0,1\}^{p_2(\lambda)}$,
\[
\{ V_{t+1}(a,x), A(a, V_{t+1}(a,x)) \} \approx_{\frac{1}{2q(\lambda)}} \{ V_{t+1}(a,x), x \}.
\]
where the approximation symbol represents statistical indistinguishability.
In particular,
\begin{align}
\label{eq:ai-owf-dist}
\{ V_{t+1}(a,x), x, A(a, V_{t+1}(a,x)) \} \approx_{\frac{1}{q(\lambda)}} \{ V_{t+1}(a,x), A(a, V_{t+1}(a,x)), x \}. 
\end{align}
We proceed by a sequence of hybrids, where the final hybrid will represent a proof system where the first $t+1$ verifier messages are uniformly random.\\

\noindent \textbf{Hybrid 0:} Identical to the honest protocol.\\

\noindent \textbf{Hybrid 1:} 
The verifier behaves identically to the honest verifier except it appends $A(\tau, V_{t+1}(\tau, r))$ to its $(t+1)^{\text{th}}$ message, where $\tau$ is the transcript of the first $t$ messages.  
This appended value is ignored by all subsequent verifier and honest prover steps.\\

\noindent The protocol preserves correctness since the appended value is ignored by both parties. It preserves soundness since the appended value can be sampled by the malicious prover themselves and is ignored by the verifier. 
The simulator for (honest-verifier) zero-knowledge can run the simulator from Hybrid 0 and then sample $A(V_{t+1}(\tau, r))$ and append to the $(t+1)^{\text{th}}$ verifier message in the simulated transcript. Since the honest transcript distribution in Hybrid 1 can be sampled by sampling an honest transcript in Hybrid 0 and running the same sample-and-append step, switching between Hybrid 0 and Hybrid 1 preserves simulator security.\\

\noindent \textbf{Hybrid 2:} 
Identical to Hybrid 1 except the $(t+1)^{\text{th}}$ message is computed as $V_{t+1}(\tau, r) \| r$, and all subsequent verifier steps use $A(\tau, V_{t+1}(\tau, r))$ in place of $r$. In other words, after computing $V_{t+1}(\tau, r)$ the roles of $r$ and $A(\tau, V_{t+1}(\tau, r))$ are switched.\\

\noindent By \eqref{eq:ai-owf-dist}, swapping $A(a, V_{t+1}(\tau, r))$ and $r$ in the transcript after $V_{t+1}(\tau, r)$ can be detected with advantage at most $\frac{1}{q(\lambda)}$, even by unbounded adversaries. 
Therefore, completeness, soundness, and simulation security are all preserved upto $1/q(\lambda)$ error. For simulation security we crucially use the fact that the simulator only simulates the transcript instead of the full view of $V$, since \eqref{eq:ai-owf-dist} does not hold if given access to the randomness of $A$, which is part of the view of $V$.\\

\noindent \textbf{Hybrid 3:} 
Identical to Hybrid 2 except the $(t+1)^{\text{th}}$ message consists only of the randomness $r$.  
$A(a, V_{t+1}(\tau, r))$ is computed before the $(t+2)^{\text{th}}$ verifier message instead of before the $(t+1)^{\text{th}}$ message.\\

\noindent Since $V_{t+1}(\tau, r)$ can be computed from the transcript, this change does not affect completeness, soundness, or simulation security.
The verifier now samples the first $t+1$ messages uniformly at random, so we are done.
\end{proof}
\begin{corollary}
\label{cor:private-to-public}
    The existence of a constant-round $(\epsilon_c, \epsilon_s, \epsilon_{zk})$ (honest-verifier) zero-knowledge argument implies either io-uniform-ai-OWFs exist, or for every polynomial $q$, there exists a \textit{public-coin} constant-round $(\epsilon_c + 1/q(\lambda), \epsilon_s + 1/q(\lambda), \epsilon_{zk} + 1/q(\lambda))$ (honest-verifier) zero-knowledge argument.
\end{corollary}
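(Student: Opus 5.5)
The plan is to iterate Theorem \ref{thm:private-to-public} a constant number of times. Let $\ell$ be the (constant) number of verifier messages in the given constant-round argument. Since the corollary states a dichotomy, assume io-uniform-ai-OWFs do not exist, and fix the target polynomial $q$. Every verifier, public-coin or not, trivially satisfies the hypothesis of Theorem \ref{thm:private-to-public} with $t = 0$. I will also use that honest-verifier zero-knowledge implies the weaker transcript-simulation property that the theorem actually needs.

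Apply Theorem \ref{thm:private-to-public} with polynomial $\ell \cdot q$ to obtain an argument system whose first verifier message is uniformly random. Its completeness, soundness and simulation errors each grow by at most $1/(\ell q(\lambda))$. Then apply the theorem again to this new system with $t=1$, and continue, $\ell$ times in total. In each application the ``either'' branch produces io-uniform-ai-OWFs, contradicting our assumption, so we always land in the ``or'' branch. After $\ell$ steps every verifier message is a fresh uniform string, so the protocol is public-coin. The errors have grown to at most $(\epsilon_c + 1/q, \epsilon_s + 1/q, \epsilon_{zk}+1/q)$.

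Two bookkeeping points need checking. First, the round count must stay constant. Theorem \ref{thm:private-to-public}'s transformation only replaces the content of a verifier message (sending $r$ and moving the computation of $V_{t+1}$ and $A$ into later steps). It adds no rounds, so the protocol remains $k$-round. Second, efficiency must be preserved. Each step blows up the running times of prover, verifier and simulator polynomially, and a constant number of polynomial compositions is still polynomial. Since $\ell$ is constant, both points go through; this is the only place constant-roundness is used.

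The main obstacle is that the hypotheses and conclusion of Theorem \ref{thm:private-to-public} must match up across iterations. The conclusion only guarantees a simulator for the transcript, not the full verifier view. That is exactly what the hypothesis of the next iteration requires, so the induction closes. The cost is that ``(honest-verifier) zero-knowledge'' in the corollary's conclusion must be read in this transcript-simulation sense. That reading is also the notion used by the ZK definition and by Theorem \ref{thm:zk-to-owf-wc}, where the distinguisher sees $(r,\transcript)$ and in the public-coin case $r$ is contained in the transcript. I would state this interpretation explicitly.

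A subtler point is the quantifiers in ``either OWFs exist or''. Theorem \ref{thm:private-to-public}'s negation gives an inverter $A$ for all large $\lambda$, not infinitely often, because its failure branch is the io-uniform notion. So the resulting error bounds hold for all sufficiently large $\lambda$ at every iteration, and they compose without losing the sequence of good input lengths.
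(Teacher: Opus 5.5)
Your proposal is correct and follows the paper's own proof: iterate Theorem~\ref{thm:private-to-public} once per verifier message and split the $1/q$ error budget across the constantly many applications. Constant-roundness keeps the compounded polynomial blowup polynomial. On the one point you hedge, the paper is more direct: once the protocol is public-coin, transcript simulation and honest-verifier simulation coincide (the verifier's coins are its messages), so the conclusion holds in the standard honest-verifier sense rather than only under a reinterpretation.
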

\begin{proof}
    Follows from repeated application of Theorem \ref{thm:private-to-public} along with the following observations:
    \begin{itemize}
        \item While there is polynomial blowup in runtime each time the theorem is applied, since there are constant rounds the final runtimes are polynomial.
        \item While errors grow by an inverse polynomial additive term, we can set the to be small enough to obtain our desired final bounds.
        \item While the theorem only guarantees a weakening of honest zero-knowledge, for public-coin protocols, both notions are the same. Therefore we recover honest-verifier zero-knowledge for the final protocol. 
    \end{itemize}
\end{proof}

\section{Proof of Lemma \ref{lem:lmp-reduction-to-owf}} \label{sec:lmp-proof}

Here we provide an explicit proof of Lemma \ref{lem:lmp-reduction-to-owf}, noting that it is closely mirrors the proof of Lemma 3.1 in \cite{LMP}.

\begin{proof}[Proof of Lemma \ref{lem:lmp-reduction-to-owf}]
    Let $(R^{(\cdot)}, \{f_x\}_{x \in \zo^*}, p(\cdot))$ be the decision-to-inversion reduction for $\Lang$ guaranteed by the statement of the lemma.  We define the function $g(r_1\|r_2) = \Dist(r_1) \| f_{\Dist(r_1)}(r_2)$, and claim that $g$ is a weak one-way function.

    Indeed, suppose for the sake of contradiction that $g$ is not a weak one-way function.  This means that there exists an algorithm $\Alg$ that inverts $g$ with probability at least $0.75$ on infinitely many input sizes.  Using $\Alg$, we define an algorithm $\mathcal{B}$ to decide $(\Lang, \Dist)$ as follows.  On input $x$, $\mathcal{B}$ first samples a random $r_2$ and runs $\Alg(x, f_x(r_2))$.  If $\Alg$ fails to find a preimage of $f_x(r_2)$, $\mathcal{B}$ immediately halts and outputs $\fail$.  Otherwise, $\mathcal{B}$ outputs $R^\Alg(x)$.

    We now claim that $\mathcal{B}$ puts $(\Lang, \Dist) \in \aBPPpoly$, which is a contradiction.  Fix any $n$ where $\Alg$ succeeds in inverting $g$ and that is sufficiently large for our guarantees on $R$ to hold.  To show that the first point of Definition \ref{def:ioAvgBPP} holds, we fix an $x \in \Supp(\Dist_n)$ and consider two possible cases.  If $\Alg(x, \cdot)$ inverts $f_x$ with probability at most $\frac{1}{p(n)}$, we have that the first step of $\mathcal{B}(x)$ will cause it to output $\fail$ with probability at least $1 - \frac{1}{p(n)}$.  If instead $\Alg(x, \cdot)$ inverts $f_x$ with probability at least $\frac{1}{p(n)}$, our guarantees on $R$ tell us that (as long as the first step doesn't output $\fail$), $\mathcal{B}(x)$ will output $\Lang(x)$ with probability at least $1 - \frac{1}{p(n)}$.  Thus in either case, we have that $\mathcal{B}(x) \in \{ \fail, \Lang(x) \}$ with probability at least $1 - \frac{1}{p(n)}$, which for sufficiently large $n$ is at least $0.9$.

    For the second point of Definition \ref{def:ioAvgBPP}, we note that $\mathcal{B}(x)$ can only output $\fail$ if $\Alg(x, \cdot)$ fails to invert $f_x$.  Thus the probability of this happening over a random $x \samp \Dist_n$ is exactly the probability that $\Alg$ fails to invert a random output of $g$, which we know to be at most $0.25$.

    Putting this all together, we have reached a contradiction, and so must have that $g$ is in fact a weak one-way function.  By Theorem \ref{thm:wowf-vs-owf}, this means that one-way functions exist.
\end{proof}

\end{document}
\typeout{get arXiv to do 8 passes: Label(s) may have changed. Rerun}